\documentclass[a4paper,onecolumn,10pt,unpublished]{quantumarticle}
\pdfoutput=1
\input{preamble.tex}
\hypersetup{pdftitle={Uniform Hiding and Two Routes to Relative Accuracy in Gaussian Boson Sampling}}
\begin{document}
\title{{Uniform Hiding and Two Routes to Relative Accuracy in Gaussian Boson Sampling}}
\author{Hongru Zhao}
\affiliation{School of Statistics, University of Minnesota, Minneapolis, Minnesota, USA}
\email{zhao1118@umn.edu}
\begin{abstract}
{Gaussian boson sampling requires control of how closely finite optical
matrices follow Gaussian reference laws. We prove an explicit total
variation bound of order $N^2/M$ between a rescaled Haar transpose Gram
block and its Gaussian transpose Gram counterpart, where $N$ is the detected
photon count and $M$ is the number of optical modes. The bound establishes
quantitative product hiding uniformly over every number of squeezed
inputs. The proof combines Stiefel recursion, centered circular orthogonal
ensemble scores, and a rectangular entropy estimate. Applications combine
hiding with local hafnian bounds to obtain relative probability guarantees
and connect them to sampler error through exact photon-sector normalization.}
\end{abstract}
\maketitle
\begingroup\section{Introduction}
\label{sec:introduction}
Gaussian boson sampling (GBS) turns squeezed light, passive interferometry,
and photon counting into a sampling problem whose probabilities contain
hafnians \cite{HamiltonEtAl2017GBS,KruseEtAl2019GBS}. This makes it a
natural setting for studying quantum computational advantage, but the
complexity argument requires more than worst case hardness of a single
probability. A finite optical instance must be related to a random matrix
estimation problem, and additive probability accuracy must be compared with
the size of the probability itself
\cite{AaronsonArkhipov2013,DeshpandeEtAl2022,HangleiterEisert2023}.
Experiments with fixed and programmable photonic architectures make the
parameter dependence of these reductions relevant
\cite{ZhongEtAl2020Photons,ZhongEtAl2021PhaseProgrammable,MadsenEtAl2022Programmable,DengEtAl2023PseudoPNR}.

For equal squeezing, there are two Gaussian reference ensembles. One retains
the finite transpose Gram product $GG^{\T}$ of a rectangular Gaussian
matrix. The other replaces that product by an independent complex symmetric
Gaussian matrix. Kruse et al.\ already separated transpose Gram and
independent symmetric formulations of GBS \cite[Secs.~V.A and V.B]{KruseEtAl2019GBS}.
Here we compare two reference laws for the \emph{same} equal-squeezing Haar
interferometer model, rather than changing the preparation architecture.
The distinction matters when the number of squeezed inputs is not much
larger than the square of the detected photon number.

Our first result is a uniform finite hiding estimate. If $N$ output rows and
$K$ input columns are selected from a Haar unitary of size $M$, then
\begin{equation}
 \begin{aligned}
 &d_{\rm TV}\!\left(\Law(MU_{N,K}U_{N,K}^{\T}),
                   \Law(G_{N,K}G_{N,K}^{\T})\right)\\
 &\qquad\le \min\{1,C_\ast N^2/M\},
 \end{aligned}
 \label{eq:intro-hiding}
\end{equation}
where $C_\ast=615172$ and the bound holds for every $1\le N,K\le M$.
This resolves Conjecture~1 (Formal) of Ehrenberg et al., stated in
Supplemental Eq.~(S62), with its original $m\ge n^2/\delta$ scaling
and uniformly over $1\le k\le m$
\cite{EhrenbergEtAl2025Transition}. The $K<N$ case is included in the proof
of Theorem~\ref{thm:uniform-product-hiding};
Corollary~\ref{cor:all-input-hiding} and the accompanying notation
dictionary give the conjecture's quantitative specialization.
The large explicit constant is not
optimized; the structural conclusion is the uniform $N^2/M$ dependence.

This result should be distinguished from the recent resolution of a
polynomial hiding formulation by Shou et al.
\cite{ShouEtAl2026ArbitrarySqueezers}. Their Theorem~1.1 gives an
$O(N/\sqrt K)$ comparison with an independent symmetric Gaussian matrix.
Their Remark~1.1 also proves product-law hiding, with a sufficient
$M\ge N^4/(c\delta^4)$ condition in the range $N\le K\le M$.
Our contribution is the explicit quadratic ambient-size estimate for the
finite product law, not the first identification or proof of a product
reference. Their approximate instance-generation construction is a separate
algorithmic result that is not supplied by Eq.~\eqref{eq:intro-hiding}.

The second result is a comparison of the two routes to relative probability
accuracy. The companion paper proves local small-ball estimates for both
Gaussian hafnian ensembles \cite{ZhaoAnticoncentration2026}. Composing
those estimates with hiding gives two finite-Haar lower-tail bounds. Their
probability scales differ by an explicit product, so a valid comparison must
use one estimator and one \emph{absolute} additive threshold. Our main
comparison theorem separates additive estimation error, local
anticoncentration, and matrix replacement, and permits optimization over
the threshold. In the window
\begin{equation}
 cN^2/\log N\le K\ll N^2,\qquad M\gg N^2,
 \label{eq:intro-window}
\end{equation}
the product route can give a vanishing certificate while the currently
available independent-reference hiding envelope is trivial. This is a
separation of proved bounds, not a lower bound on the true error of the
independent-reference approximation.

The main results and the optical normalization are presented before the
random matrix proof. Section~\ref{sec:two-route-results} contains the two
routes and their common-threshold comparison.
Section~\ref{sec:sampling-interface} connects the scales to the photon-number
sector and to total variation error of a proposed sampler. The later
sections prove hiding through Stiefel recursion, centered circular orthogonal
ensemble scores, and a rectangular entropy estimate. The companion supplies
the Gaussian small-ball theorems through a different Fourier and conditional
Wishart argument; neither proof uses the other paper's main theorem.

We work with ideal pure, zero-displacement GBS and a prescribed collision-free
pattern, or a pattern label chosen independently of the interferometer.
The results give rigorous components of approximate sampling reductions.
Average-case hardness, efficient prescribed-instance embedding, and
precision-controlled approximate counting must still be supplied in any
claim of computational hardness. We return to these distinctions in
Section~\ref{sec:discussion}.
\endgroup
\begingroup\section{Uniform hiding for the optical matrix}
\label{sec:hiding-main}

\begingroup
There are $M$ optical modes. The first $K$ inputs are single-mode squeezed
vacua with a common strength $\xi>0$, and all remaining inputs are vacuum.
For a fixed set $S\subseteq[M]$ with $|S|=N=2n$, put
$A_S=U_{S,[K]}U_{S,[K]}^{\T}$. Photon-number-resolving detection assigns the
collision-free occupation pattern with one photon on $S$ the probability
\begin{equation}
 p_S(U;\xi)=\frac{\tanh(\xi)^N}{\cosh(\xi)^K}|\haf(A_S)|^2.
 \label{eq:gbs-probability}
\end{equation}
This is the probability of the full output pattern, not the probability
conditioned on the total photon number \cite{HamiltonEtAl2017GBS,KruseEtAl2019GBS}.
The transpose is ordinary transpose. Conjugate transpose would give a
different matrix and a different optical model.
\endgroup

{Let $U$ be Haar distributed in $\mathrm U(M)$.  By Haar bi-invariance,
every deterministic $N\times K$ subblock selected by rows and columns has
the same law as the upper block.  We therefore write
$U_{N,K}:=U_{[N],[K]}$ in the labeled
theorem below; any fixed sets $S,T\subseteq[M]$ with $|S|=N$ and $|T|=K$
follow by deterministic row and column permutations.  The physical
specialization above takes $T=[K]$.}
Let $G_{N,K}$ have independent entries distributed as
{$\CN(0,1)$, the standard circular complex Gaussian law with
density $z\mapsto\pi^{-1}e^{-|z|^2}$ on $\C$; equivalently, its real and
imaginary parts are independent $\mathcal{N}(0,1/2)$ variables, so
$\E|g|^2=1$ and $\E g^2=0$.}  We use
\[
 d_{\rm TV}(\mu,\nu):=\sup_E|\mu(E)-\nu(E)|
\]
for probability measures.  Every matrix law is regarded as a measure on the full ambient
space $\C^{N\times N}$; the laws in the theorem are supported on its
measurable subset $\Sym_N(\C)$.  For a finite signed measure $\eta$ on this
ambient space, write
$\norm{\eta}_{\rm var}:=|\eta|(\C^{N\times N})$ for its variation norm.  Thus
$d_{\rm TV}(\mu,\nu)=\tfrac12\norm{\mu-\nu}_{\rm var}$; the two conventions
will not be conflated below.  Scaling both random matrices by the same
nonzero constant leaves this distance unchanged.
{Here and throughout, ${}^{\T}$ denotes ordinary transpose and
${}^{*}$ denotes conjugate transpose.}
{Set the universal constant $C_\ast:=615172$ and define}
\begin{equation}
 \delta_{M,N}:=\min\!\left\{1,{C_\ast}\frac{N^2}{M}\right\}.
 \label{eq:hiding-remainder}
\end{equation}

\begin{samepage}
\begin{theorem}[Uniform transpose Gram hiding]
\label{thm:uniform-product-hiding}
For every $1\le N\le M$ and $1\le K\le M$,
\begin{equation}
 \begin{aligned}
 d_{\rm TV}\!\Bigl(&\Law\bigl(\tfrac{M}{\sqrt K}
 U_{N,K}U_{N,K}^{\T}\bigr),\\[-2pt]
 &\Law\bigl(\tfrac1{\sqrt K}G_{N,K}G_{N,K}^{\T}\bigr)\Bigr)
 \le \delta_{M,N}.
 \end{aligned}
 \label{eq:uniform-product-hiding}
\end{equation}
Equivalently,
\begin{equation}
 \begin{aligned}
 d_{\rm TV}\!\bigl(&\Law(MU_{N,K}U_{N,K}^{\T}),\\[-2pt]
 &\Law(G_{N,K}G_{N,K}^{\T})\bigr)\le\delta_{M,N}.
 \end{aligned}
 \label{eq:uniform-product-hiding-unscaled}
\end{equation}
\end{theorem}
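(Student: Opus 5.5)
We first reduce to the case $K\ge N$ with $N^2\le M/C_\ast$; otherwise $\delta_{M,N}=1$ and there is nothing to prove. The target statement is unchanged if both matrices are scaled by $1/\sqrt K$, so we work with the unscaled form \eqref{eq:uniform-product-hiding-unscaled}.

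\emph{Structural reduction.} Write $X=U_{N,K}$. The symmetric product $XX^{\T}$ depends on $X$ only through its rows $x_1,\dots,x_N\in\C^K$ via the bilinear (not sesquilinear) pairings $x_ix_j^{\T}$. This suggests revealing one row at a time. By Haar invariance, the first $N$ rows of $U$ are the first $N$ rows of a uniformly random point of the Stiefel manifold $V_N(\C^M)$. Conditionally on rows $1,\dots,i-1$ of the full $M$-vector, the $i$-th full row is uniform on the unit sphere of the orthogonal complement. Its restriction to the first $K$ coordinates is then a projected spherical vector. Similarly, the Gaussian rows $g_i$ are i.i.d.\ $\CN(0,I_K)$.

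\emph{Hybrid decomposition.} Total variation is contracted by the measurable map $(x_1,\dots,x_N)\mapsto(x_ix_j^{\T})_{i,j}$. A naive hybrid therefore bounds the distance by $\sum_i$ of a conditional row distance. However, the row laws themselves differ at order $\sqrt{K/M}$ or worse, and $K$ may be as large as $M$. So the comparison cannot be made at the level of rows; it must be made at the level of the Gram data.

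The way around this is to condition on $x_1,\dots,x_{i-1}$ and compare only the law of the new column of the Gram matrix, $(x_ix_j^{\T})_{j<i}$ together with $x_ix_i^{\T}$. This is a map from $\C^K$ to $\C^{i}$. The Haar row is $\sqrt M$ times a spherical vector, constrained to be orthogonal (in the sesquilinear sense) to the previous full rows. Its image under the linear functionals $x\mapsto x\bar{x}_j$ is low-dimensional, of dimension $i$. The key observation is that for $K\ge M/2$ one should instead pass to the complementary $M-K$ columns, using $UU^{\T}$ restricted to the first $N$ rows. This block is unitary-symmetric, i.e.\ a COE-type block. I would expect this to be where the centered circular orthogonal ensemble scores enter: the law of $M\,U_{N,M}U_{N,M}^{\T}$ is an $N\times N$ corner of a COE matrix. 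Its density relative to the symmetric Gaussian can be written explicitly, and its score is centered.

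\emph{Per-step estimate.} For each step $i$, I would bound the relative entropy $D(\text{Haar step}\,\|\,\text{Gaussian step})$ by $O(i/M)$. For the Gaussian step, $(g_ig_j^{\T})_{j<i}$ given $g_{<i}$ is exactly $\CN$ with covariance $\overline{g_{<i}}\,g_{<i}^{\T}$, a Wishart matrix. For the Haar step, it is an explicit projected-sphere marginal of $\C^{i}$ with a polynomial density of the form $(1-|w|^2/M)^{M-i-\cdots}$ after whitening. Both sides are then compared through a rectangular entropy estimate in which the dimension loss $M-K$ versus $K$ is handled. This requires a log-determinant comparison of the Wishart parts that is uniform in $K$.

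Summing the per-step bounds gives a total of $\sum_i O(i/M)=O(N^2/M)$ in relative entropy, by the chain rule for KL divergence. Pinsker's inequality would then only give $O(N/\sqrt M)$, so the argument needs more. To obtain $N^2/M$ in total variation, I would instead use the $\chi^2$/score route: expand the density ratio to first order, show that the first-order term has zero mean (which is the centered COE score), and bound the second-order remainder by $O(N^2/M)$ in $L^1$.

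\emph{Main obstacle.} The hardest step is making the per-step remainder linear in $i/M$ in total variation, uniformly over all $K$ including $K<N$ and $K$ near $M$, with an explicit constant. I would try first the uniform-in-$K$ rectangular entropy bound for the diagonal term $x_ix_i^{\T}$, which is the only coordinate not controlled by the linear Gaussian conditioning.
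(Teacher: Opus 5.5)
\textbf{There is a genuine gap: nothing in your proposal produces an $O(N^2/M)$ total variation bound that is uniform in $K$.}

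\emph{The reduction for $K<N$ is wrong.} $\delta_{M,N}=1$ holds only when $M<C_\ast N^2$. The case $K<N\le\sqrt{M/C_\ast}$ therefore still needs an argument; the paper handles it with the rectangular entropy bound, taking $p=N$ and $q=K$.

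\emph{The row-by-row chain rule is not well posed.} Your Gaussian step, $\CN(0,\overline{g_{<i}}g_{<i}^{\T})$, is the conditional law given the previous \emph{rows}, so you are chaining over row prefixes. When $K\asymp M$, the Haar and Gaussian prefix laws are at total variation distance bounded below. At $K=M$ they are mutually singular, since the Haar rows satisfy $x_{<i}x_{<i}^*=MI$ exactly, a Gaussian-null set. So that chain rule yields nothing. If you instead condition on the previous transpose Gram block, the conditional laws become mixtures over the discarded Hermitian Gram data $x_{<i}x_{<i}^*$, and you give no way to compute or compare them.

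\emph{The same obstruction defeats your ``uniform-in-$K$ rectangular entropy bound.''} Any entropy comparison made before the product is taken is infinite at $K=M$.

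\emph{The proposed repair of Pinsker is vacuous.} You correctly note that per-step entropy $O(i/M)$ gives only $O(N/\sqrt M)$ in total variation. But the first-order term of the density ratio of any smooth one-parameter family of probability laws integrates to zero automatically, and this says nothing about its $L^1$ norm. What you need is for the first-order term to vanish as a signed measure, or be small in $L^1$. That requires averaging over an external perturbation parameter, which the row decomposition does not supply.

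\textbf{The missing idea is to telescope in the ambient dimension $m\ge M$ rather than over rows.}

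\emph{One-column recursion.} Deleting one column of an $N\times(m+1)$ Haar row-Stiefel matrix gives the exact recursion $\mu_{m+1,K}=\Law(g_mA_mg_m^{\T})$. Here $A_m\sim\mu_{m,K}$ is independent of $g_m=\exp(a_mI+b_mP_v)$, with $q_m\sim\operatorname{Beta}(m-N+1,N)$ and $v$ uniform. The transpose Gram map turns the change of ambient dimension into a random positive congruence, and $K$ never enters the step.

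\emph{First-order cancellation.} Write the exponent as $c_mI+b_mQ_v$ with $Q_v=P_v-I/N$. Then $\E_vQ_v=0$, so the averaged orbital first derivative vanishes on every event, and $|\E c_m|=O(N/m^2)$. The remaining Taylor terms cost $O(N^2/m^2)$ per step, which sums to $O(N^2/M)$. The telescope is closed by $d_{\rm TV}(\mu_{R,K},\nu_K)\to0$ as $R\to\infty$ at fixed $K$.

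\emph{Where the COE enters.} The event-score bounds are computed only at the square base $\mu_{K,K}$, the scaled COE corner, whose density is explicit. They are transported to every $m\ge K$ by commuting the radial kernels, a Gelfand-pair argument for $(\mathrm{GL}_N(\C),\mathrm U(N))$, together with total variation contraction. So the COE is the $m=K$ starting point of the ambient chain. Passing to the complementary $M-K$ columns, as you suggest, gives no reduction: $U_{N,K}U_{N,K}^{\T}$ and the complementary product are dependent summands of the COE corner.

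\emph{Sparse branch.} The dense branch needs $K\ge16N$. For $K<16N$, including $K<N$, the rectangular bound $D_{\rm KL}(\sqrt MU_{p,q}\Vert G_{p,q})\le3pq(p+q)^2/(4M^2)$ is already $O(N^4/M^2)$. Pinsker therefore loses nothing, and data processing under $Y\mapsto YY^{\T}$ finishes the case.

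Using Pinsker only where the entropy is quadratically small, and a centered second-order telescope elsewhere, is what makes the bound uniform in $K$.
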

\end{samepage}

\begin{proof}
The proof of Theorem~\ref{thm:uniform-product-hiding} is given in
Sec.~\ref{subsec:uniform-theorem-proof}.
\end{proof}

{A Haar entry has variance $1/M$, so a typical entry of
$U_{N,K}U_{N,K}^{\T}$ has fluctuation scale $\sqrt K/M$.  The prefactor
$M/\sqrt K$ therefore matches the unit scale of
$K^{-1/2}G_{N,K}G_{N,K}^{\T}$.}

The theorem is finite, explicit, and uniform in $K$.  For a target error
$0<\delta<1$, a sufficient ambient dimension is
\begin{equation}
 M\ge \frac{{C_\ast}}{\delta}N^2.
 \label{eq:certified-ambient-size}
\end{equation}
The constant is mathematically explicit but large; Eq.~\eqref{eq:certified-ambient-size}
is a rigorous finite guarantee and an asymptotic $M\gg N^2$ criterion, not
an optimized experimental threshold.  Appendix~\ref{app:external-inputs}
explains the four literature axioms imported by the Lean formalization
and their correspondence with the cited sources.

\begingroup
Taking $N=2n$ in Theorem~\ref{thm:uniform-product-hiding} gives the following
immediate consequence.
\begin{corollary}[Quantitative hiding conjecture]
\label{cor:all-input-hiding}
Let $n\ge1$, $N=2n\le M$, $1\le K\le M$, and $\delta>0$.
If $M\ge n^2/\delta$, then the total variation
distance in Eq.~\eqref{eq:uniform-product-hiding-unscaled} is at most
$4C_\ast\delta$.
\end{corollary}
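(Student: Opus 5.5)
The corollary follows directly from Theorem~\ref{thm:uniform-product-hiding}, specialized to $N=2n$; no new random-matrix input is needed. The plan has three steps: invoke Eq.~\eqref{eq:uniform-product-hiding-unscaled} with $N=2n$, rewrite $N^2$ in terms of $n^2$, and then use the hypothesis $M\ge n^2/\delta$.

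For the first step, the hypotheses $n\ge1$, $N=2n\le M$ and $1\le K\le M$ are exactly the range $1\le N\le M$, $1\le K\le M$ of the theorem. Eq.~\eqref{eq:uniform-product-hiding-unscaled} therefore gives
\[
 d_{\rm TV}\!\bigl(\Law(MU_{N,K}U_{N,K}^{\T}),\Law(G_{N,K}G_{N,K}^{\T})\bigr)\le\delta_{M,N}.
\]
By the definition in Eq.~\eqref{eq:hiding-remainder},
\[
 \delta_{M,N}=\min\{1,C_\ast N^2/M\}\le C_\ast N^2/M .
\]

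For the second step, substituting $N^2=4n^2$ gives the bound $4C_\ast n^2/M$. For the third step, the hypothesis $M\ge n^2/\delta$ with $\delta>0$ is equivalent to $n^2/M\le\delta$. Hence the distance is at most $4C_\ast\delta$.

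The only point to watch is that we discard the $\min$ with $1$ rather than use it. This costs nothing: when $4C_\ast\delta\ge1$ the claimed bound is trivially true anyway, since total variation distance never exceeds $1$. Thus there is no genuine obstacle, and all of the difficulty sits in Theorem~\ref{thm:uniform-product-hiding}. The factor $4$ here is just the conversion between the photon-pair count $n$ used in the conjecture's notation and the photon count $N=2n$ used in the theorem.
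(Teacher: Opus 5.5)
Your proposal is correct and matches the paper's argument: the paper states the corollary as the immediate consequence of taking $N=2n$ in Eq.~\eqref{eq:uniform-product-hiding-unscaled} and bounding $\min\{1,4C_\ast n^2/M\}\le 4C_\ast n^2/M\le 4C_\ast\delta$. Your remark about the $\min$ with $1$ is unnecessary, since $\min\{1,x\}\le x$ always holds, but it does no harm.
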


To match the conjecture literally, Ehrenberg et al.\ use a $k\times2n$
block and a Gaussian factor with entry variance $1/m$. Since $U^{\T}$ is
also Haar, their block has the same law as $U_{2n,K}^{\T}$ when $m=M$
and $k=K$. Setting $X=M^{-1/2}G_{2n,K}^{\T}$ gives
$X^{\T}X=M^{-1}G_{2n,K}G_{2n,K}^{\T}$. Scaling both product laws by $M$
therefore identifies their distance with ours.
Thus Corollary~\ref{cor:all-input-hiding} resolves Conjecture~1 (Formal),
Supplemental Eq.~(S62), in its stated form: the total variation distance
is at most $4C_\ast\delta$ under $m\ge n^2/\delta$, uniformly in $1\le k\le m$
\cite[Conjecture~1 and Supplemental Eq.~(S62)]{EhrenbergEtAl2025Transition}.
This is a distributional assertion. It does not construct a Haar
interferometer conditional on a prescribed product instance.
\endgroup

\begin{table}[!htbp]
\centering\small
\caption{{The reference ensembles and the claims compared in this paper.
The cited $C'$ estimate is asymptotic; $C_\ast$ is explicit. The
product-hiding conjecture is distinct from an instance-generation algorithm.}}
\label{tab:literature-comparison}
\begin{tabular}{@{}>{\raggedright\arraybackslash}p{.18\textwidth}>{\raggedright\arraybackslash}p{.19\textwidth}>{\raggedright\arraybackslash}p{.23\textwidth}>{\raggedright\arraybackslash}p{.30\textwidth}@{}}
\toprule
Result & Reference law & Quantitative guarantee & Scope\\
\midrule
Ehrenberg et al.\ formal conjecture \cite{EhrenbergEtAl2025Transition}
& Finite $X^{\T}X$
& $m\ge n^2/\delta$ implies $O(\delta)$ in total variation
& All input counts $1\le k\le m$, with $2n$ selected modes\\
Shou et al.\ Theorem~1.1 \cite{ShouEtAl2026ArbitrarySqueezers}
& Independent $G_N^{\rm sym}$
& $O(N/\sqrt K)$
& $N\le K\le M$; approximate instance generation treated separately there\\
Shou et al.\ Remark~1.1 \cite{ShouEtAl2026ArbitrarySqueezers}
& Finite $GG^{\T}/\sqrt K$
& Sufficient $M\ge N^4/(c\delta^4)$
& Product-law comparison in their polynomial hiding formulation\\
Theorem~\ref{thm:uniform-product-hiding} and Corollary~\ref{cor:all-input-hiding}
& Finite $GG^{\T}/\sqrt K$
& $\min\{1,C_\ast N^2/M\}$
& All $1\le N,K\le M$\\
\bottomrule
\end{tabular}
\end{table}
\endgroup
\begingroup\section{Two routes to relative probability accuracy}
\label{sec:two-route-results}
\subsection{Gaussian inputs and their probability scales}
Let $N=2n$ and, throughout the common comparison, assume $4n\le K\le M$.
For $G\in\C^{N\times K}$ with independent $\CN(0,1)$ entries, set
\begin{align}
 H_{K,n}&=\haf(GG^{\T}),\nonumber\\
 \sigma_{K,n}^2&=\E|H_{K,n}|^2
 =(2n-1)!!\prod_{q=0}^{n-1}(K+2q).
 \label{eq:gram-hafnian-variance}
\end{align}
The exact second moment is due to Ehrenberg et al.
\cite[Theorem~1 and Supplemental Sec.~S3]{EhrenbergEtAl2025Transition};
see also the independent derivation in Ref.~\cite{ZhaoExactMoments2026}.
The companion's Gaussian theorem gives, for all $z\in\C$ and
$\varepsilon\ge0$,
\begin{equation}
 \Pp\{|H_{K,n}-z|\le\varepsilon\sigma_{K,n}\}
 \le\min\{1,B_{K,n}\varepsilon^2\},
 \label{eq:imported-anticoncentration}
\end{equation}
where
\begin{align}
 b_n&=\frac{2\Gamma(n+1/2)}{\sqrt\pi\Gamma(n)},\nonumber\\
 B_{K,n}&=b_n\frac{K}{K-1}
       \prod_{r=2}^{n}\frac{K+2r-2}{K-4r+1}.
 \label{eq:anticoncentration-coefficients}
\end{align}
This is a local bound at shrinking radii, rather than only a moment-ratio
criterion for weak anticoncentration
\cite{EhrenbergEtAl2025SecondMoment,KolarovszkiEtAl2026Framework,ZhaoAnticoncentration2026}.

The second reference is $G_N^{\rm sym}$, whose entries above the diagonal
are independent $\CN(0,1)$ variables and whose independent diagonal entries
have complex variance two. Put
$H_n^{\rm sym}=\haf(G_N^{\rm sym})$ and
$(\sigma_n^{\rm sym})^2=(2n-1)!!$. Diagonal entries do not enter the
hafnian. The companion also proves
\begin{equation}
 \Pp\{|H_n^{\rm sym}-z|\le\varepsilon\sigma_n^{\rm sym}\}
 \le\min\{1,b_n\varepsilon^2\}.
 \label{eq:symmetric-anticoncentration}
\end{equation}
Here $b_n\le2\sqrt{n/\pi}$. This establishes polynomial lower-tail control for the independent symmetric
Gaussian hafnian appearing in the original GBS anticoncentration proposals
\cite{HamiltonEtAl2017GBS,KruseEtAl2019GBS,ZhaoAnticoncentration2026}.
It does not settle the separate average-case estimation-hardness assumption.

For Route~2, let $\delta_2$ be any certified bound satisfying
\begin{equation}
 d_{\rm TV}\!\left(\Law\!\left(\frac M{\sqrt K}A_S\right),
                         \Law(G_N^{\rm sym})\right)\le\delta_2\le1.
 \label{eq:route-two-input}
\end{equation}
The value one is always admissible. In the asymptotic regime of
Shou et al., Theorem~1.1, one can use
\begin{equation}
 \overline\delta_2=\min\{1,C'N/\sqrt K\},
 \label{eq:symmetric-hiding-rate}
\end{equation}
where $C'$ is the universal constant implicit in their estimate as
$K\to\infty$ with $N\le K\le M$ \cite{ShouEtAl2026ArbitrarySqueezers}.
Our finite statements use the certified input
Eq.~\eqref{eq:route-two-input}; Eq.~\eqref{eq:symmetric-hiding-rate} is used
only in its cited asymptotic range. Write $\delta_1=\delta_{M,N}$ for the
explicit Route~1 error.

The natural probability scales are
\begin{align}
 p_1&=\frac{\tanh(\xi)^N}{M^N\cosh(\xi)^K}\sigma_{K,n}^2,
 \label{eq:reference-probability}\\
 p_2&=\frac{\tanh(\xi)^N}{M^N\cosh(\xi)^K}K^n(2n-1)!!.
 \label{eq:symmetric-reference-probability}
\end{align}
Their exact ratio is
\begin{equation}
 p_1=\mathcal R^{\mathrm{scale}}_{K,n}p_2,\qquad
 \mathcal R^{\mathrm{scale}}_{K,n}=\prod_{q=0}^{n-1}\left(1+\frac{2q}{K}\right).
 \label{eq:reference-scale-ratio}
\end{equation}
Neither scale is asserted to equal the finite-Haar mean of $p_S$.
In particular, total variation control alone does not transfer a
relative expectation estimate for a growing, unbounded Gaussian observable.

\begin{figure}[t]
 \centering
 \includegraphics[width=.62\textwidth]{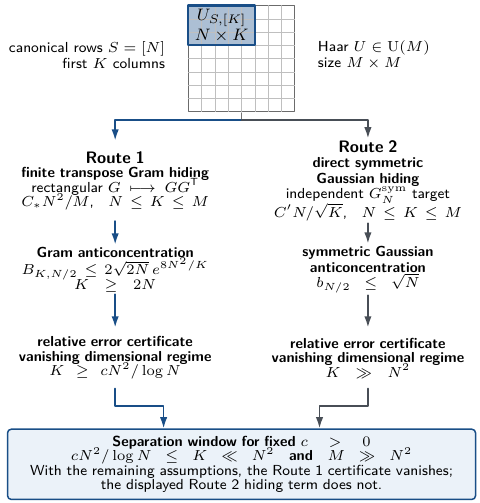}
 \caption{{Two reference routes for the same Haar interferometer block.
 Route~1 retains the dependent finite transpose Gram law; Route~2 uses the
 independent symmetric Gaussian law. The additive threshold is common, but
 the reference probability scales differ by $\mathcal R^{\mathrm{scale}}_{K,n}$.
 The lower strip concerns the currently available error envelopes, not an
 impossibility result for Route~2. The common comparison also assumes
 $K\ge4n=2N$.}}
 \label{fig:two-routes}
\end{figure}

\subsection{Finite-Haar local anticoncentration}
\begin{theorem}[Two finite-interferometer lower tails]
\label{thm:two-haar-tails}
Let $n\ge1$, $N=2n$, $4n\le K\le M$, and let $S$ be a fixed
collision-free pattern. For every $z\in\C$ and $\varepsilon\ge0$,
\begin{align}
 &\Pp_U\{|\haf(MA_S)-z|\le\varepsilon\sigma_{K,n}\}
 \nonumber\\[-2pt]
 &\hspace{1cm}\le\min\{1,B_{K,n}\varepsilon^2+\delta_1\},
 \label{eq:finite-haar-composition}\\
 &\Pp_U\{|\haf(MA_S/\sqrt K)-z|
                  \le\varepsilon\sigma_n^{\rm sym}\}
 \nonumber\\[-2pt]
 &\hspace{1cm}\le\min\{1,b_n\varepsilon^2+\delta_2\}.
 \label{eq:symmetric-haar-composition}
\end{align}
Consequently, for every $t\ge0$,
\begin{align}
 \Pp_U\{p_S\le t p_1\}&\le\min\{1,B_{K,n}t+\delta_1\},
 \label{eq:small-denominator}\\
 \Pp_U\{p_S\le t p_2\}&\le\min\{1,b_nt+\delta_2\}.
 \label{eq:symmetric-small-denominator}
\end{align}
The same conclusions hold after averaging over any independent choice of
pattern label with size $N$.
\end{theorem}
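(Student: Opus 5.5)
The plan is a direct composition of the hiding bounds with the Gaussian small-ball estimates. The key elementary fact is that for any Borel set $E$ and laws $\mu,\nu$ we have $\mu(E)\le\nu(E)+d_{\rm TV}(\mu,\nu)$. The event in question is the preimage of a disc under the hafnian map.

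For Eq.~\eqref{eq:finite-haar-composition}, fix $z$ and $\varepsilon$. Define the closed set
\[
E=\{A\in\C^{N\times N}:|\haf(A)-z|\le\varepsilon\sigma_{K,n}\}.
\]
It is Borel because $\haf$ is a polynomial in the entries, hence continuous. Since $S$ is fixed with $|S|=N$, Haar bi-invariance gives $\Law(MA_S)=\Law(MU_{N,K}U_{N,K}^{\T})$, exactly as in the discussion preceding Theorem~\ref{thm:uniform-product-hiding}. Then Eq.~\eqref{eq:uniform-product-hiding-unscaled} yields
\[
\Pp_U\{MA_S\in E\}\le\Pp\{GG^{\T}\in E\}+\delta_1.
\]
The right-hand probability is exactly $\Pp\{|H_{K,n}-z|\le\varepsilon\sigma_{K,n}\}$, which Eq.~\eqref{eq:imported-anticoncentration} bounds by $\min\{1,B_{K,n}\varepsilon^2\}$. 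This requires $4n\le K$, which ensures the denominators $K-4r+1$ are positive. Capping at one gives the stated minimum. Eq.~\eqref{eq:symmetric-haar-composition} is identical: replace $E$ by the corresponding disc at scale $\sigma_n^{\rm sym}$, use the certified input Eq.~\eqref{eq:route-two-input} for $\Law(MA_S/\sqrt K)$, and then apply Eq.~\eqref{eq:symmetric-anticoncentration}.

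For the probability tails, take $z=0$ and translate. By Eq.~\eqref{eq:gbs-probability} and homogeneity of the hafnian of degree $n$ in the entries, $\haf(MA_S)=M^n\haf(A_S)$. Hence
\[
p_S=\frac{\tanh(\xi)^N}{\cosh(\xi)^K}M^{-N}|\haf(MA_S)|^2 .
\]
Comparing with Eq.~\eqref{eq:reference-probability}, the event $p_S\le tp_1$ is exactly $|\haf(MA_S)|^2\le t\sigma_{K,n}^2$, that is, $|\haf(MA_S)|\le\sqrt t\,\sigma_{K,n}$. Applying Eq.~\eqref{eq:finite-haar-composition} with $\varepsilon=\sqrt t$ and $z=0$ gives Eq.~\eqref{eq:small-denominator}. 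Similarly, $\haf(MA_S/\sqrt K)=K^{-n/2}\haf(MA_S)$, so $p_S\le tp_2$ is equivalent to $|\haf(MA_S/\sqrt K)|^2\le t(2n-1)!!=t(\sigma_n^{\rm sym})^2$. This gives Eq.~\eqref{eq:symmetric-small-denominator}.

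For a random pattern label $\mathbf S$ independent of $U$ with $|\mathbf S|=N$, condition on $\mathbf S=S$. Each conditional probability satisfies the same bound, since the bound does not depend on $S$, and averaging preserves it. Measurability is trivial when the label ranges over the finitely many subsets.

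No step is a real obstacle. The only points needing care are bookkeeping. First, the powers of $M$ and $K$ must match: the hafnian has degree $n$ while $p_S$ carries $M^{-N}=M^{-2n}$ from squaring. Second, TV distance, which is defined on the ambient $\C^{N\times N}$, must be applied to the same measurable set for both laws. Third, Route~2 uses only the assumed $\delta_2$, so its validity does not depend on any asymptotic range.
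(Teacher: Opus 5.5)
Your proposal is correct and follows essentially the same route as the paper's proof. Both transfer the Borel disk event, the preimage under the continuous hafnian, through the hiding bound $\delta_1$ (respectively the certified $\delta_2$), then use hafnian homogeneity to rewrite $p_S/p_1$ and $p_S/p_2$, set $z=0$, $\varepsilon=\sqrt t$, and average over a label chosen independently of $U$ using Haar row invariance.
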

\begin{proof}
The inverse image of a disk under the hafnian is measurable.
Theorem~\ref{thm:uniform-product-hiding} transfers
Eq.~\eqref{eq:imported-anticoncentration} with error $\delta_1$.
The certified comparison in Eq.~\eqref{eq:route-two-input} transfers
Eq.~\eqref{eq:symmetric-anticoncentration} with error $\delta_2$.
Hafnian homogeneity gives
$p_S/p_1=|\haf(MA_S)|^2/\sigma_{K,n}^2$ and
$p_S/p_2=|\haf(MA_S/\sqrt K)|^2/(\sigma_n^{\rm sym})^2$.
Set $z=0$ and $\varepsilon=\sqrt t$.
Haar row invariance gives the same bound for every fixed label, which can
then be averaged against a label law independent of $U$.
\end{proof}

\subsection{A common-threshold comparison}
Fix one measurable randomized estimate $\widetilde p$ of $p_{\bm S}(U;\xi)$,
where $\bm S$ is fixed or chosen independently of $U$. The estimator may
use internal randomness depending on $(U,\bm S)$.
For $\Delta p=\widetilde p-p_{\bm S}$ and $\tau\ge0$, define its
\emph{joint} additive tail by $\gamma(\tau)=\Pp\{|\Delta p|>\tau\}$.
No additive guarantee is inferred from Haar invariance.

\begin{theorem}[Relative accuracy at one absolute additive threshold]
\label{thm:fair-comparison}
Under the assumptions of Theorem~\ref{thm:two-haar-tails}, for $\rho>0$
and every $\tau\ge0$, let
$F_\rho=\Pp\{|\Delta p|>\rho p_{\bm S}\}$. Then
\begin{align}
 F_\rho&\le\min\{E_1(\tau),E_2(\tau)\},
 \label{eq:fair-budget}\\
 E_1(\tau)&=\min\!\left\{1,\gamma(\tau)
             +\frac{B_{K,n}\tau}{\rho p_1}+\delta_1\right\},\nonumber\\
 E_2(\tau)&=\min\!\left\{1,\gamma(\tau)
             +\frac{b_n\mathcal R^{\mathrm{scale}}_{K,n}\tau}{\rho p_1}+\delta_2\right\}.
 \nonumber
\end{align}
In particular, with $\mathcal E_i=\inf_{\tau\ge0}E_i(\tau)$,
\begin{equation}
 F_\rho\le\min\{\mathcal E_1,\mathcal E_2\}.
 \label{eq:optimized-budget}
\end{equation}
\end{theorem}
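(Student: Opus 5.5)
The argument is a union bound on a single event inclusion, run separately for each route and each threshold $\tau$; the minimum over routes and the infimum over $\tau$ then follow because every bound holds simultaneously.

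Fix $\rho>0$ and $\tau\ge0$. If $|\Delta p|>\rho p_{\bm S}$, then either $|\Delta p|>\tau$, or $|\Delta p|\le\tau$ and hence $\rho p_{\bm S}<\tau$. This gives the inclusion
\[
 \{|\Delta p|>\rho p_{\bm S}\}\subseteq\{|\Delta p|>\tau\}\cup\{p_{\bm S}<\tau/\rho\}.
\]
Measurability is automatic, because $\widetilde p$ is a measurable randomized estimate and $p_{\bm S}$ is a continuous function of $(U,\bm S)$. Taking probabilities under the joint law of $(U,\bm S,\text{internal randomness})$ yields
\[
 F_\rho\le\gamma(\tau)+\Pp\{p_{\bm S}\le\tau/\rho\}.
\]
The second event depends only on $(U,\bm S)$, so its probability is a marginal probability with $\bm S$ independent of $U$.

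For Route~1, apply Eq.~\eqref{eq:small-denominator} with $t=\tau/(\rho p_1)$, in its form averaged over an independent label. This gives
\[
 \Pp\{p_{\bm S}\le\tau/\rho\}\le B_{K,n}\tau/(\rho p_1)+\delta_1 .
\]
Note that $p_1>0$ since $\xi>0$ and $\sigma_{K,n}>0$.

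For Route~2, apply Eq.~\eqref{eq:symmetric-small-denominator} with $t=\tau/(\rho p_2)$. Then rewrite $1/p_2=\mathcal R^{\mathrm{scale}}_{K,n}/p_1$ using the exact ratio in Eq.~\eqref{eq:reference-scale-ratio}, which produces the term $b_n\mathcal R^{\mathrm{scale}}_{K,n}\tau/(\rho p_1)+\delta_2$.

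Since $F_\rho\le1$ trivially, each bound may be capped at one, giving $F_\rho\le E_1(\tau)$ and $F_\rho\le E_2(\tau)$. Both hold for the same $\tau$, so Eq.~\eqref{eq:fair-budget} follows. Because the left side does not depend on $\tau$, taking the infimum over $\tau\ge0$ in each bound separately gives Eq.~\eqref{eq:optimized-budget}.

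There is essentially no analytic obstacle here; all the depth sits in Theorem~\ref{thm:two-haar-tails}. The main point of care is the probability space. $\gamma(\tau)$ is a joint tail that includes the estimator's internal randomness, while the lower-tail bounds are statements about $U$ and an independent label. The union bound is valid on the joint space, and the second event's probability reduces to its $(U,\bm S)$-marginal, which is exactly what Theorem~\ref{thm:two-haar-tails} controls.

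A second point of care is the boundary. The event $\{p_{\bm S}<\tau/\rho\}$ is contained in $\{p_{\bm S}\le\tau/\rho\}$, so the non-strict lower-tail bounds apply. The case $\tau=0$ is consistent: the bounds reduce to $\gamma(0)+\delta_i$.
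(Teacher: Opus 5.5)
Your proof is correct and follows essentially the same route as the paper: the inclusion $\{|\Delta p|>\rho p_{\bm S}\}\subseteq\{|\Delta p|>\tau\}\cup\{p_{\bm S}\le\tau/\rho\}$, a union bound with the lower tails of Theorem~\ref{thm:two-haar-tails} at $t=\tau/(\rho p_i)$, capping at one, substituting $p_1=\mathcal R^{\mathrm{scale}}_{K,n}p_2$, and then taking the infimum over $\tau$. Your added remarks on the joint probability space and on the strict versus non-strict boundary are accurate details of that same argument.
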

\begin{proof}
For each $\tau\ge0$,
\begin{equation}
 \{|\Delta p|>\rho p_{\bm S}\}
 \subseteq\{|\Delta p|>\tau\}\cup\{p_{\bm S}\le\tau/\rho\}.
 \label{eq:relative-failure-inclusion}
\end{equation}
Apply Theorem~\ref{thm:two-haar-tails} and the union bound, cap by one,
and substitute $p_1=\mathcal R^{\mathrm{scale}}_{K,n}p_2$. Since $F_\rho$ is a lower bound
for every element of each nonempty set $\{E_i(\tau):\tau\ge0\}$,
it is at most each infimum. No attainment of the infimum is required.
\end{proof}

The comparison does not say that Route~1 is always better. In fact,
\begin{equation}
 \frac{B_{K,n}}{b_n\mathcal R^{\mathrm{scale}}_{K,n}}
   =\frac K{K-1}\prod_{r=2}^{n}\frac K{K-4r+1}>1.
 \label{eq:coefficient-tradeoff}
\end{equation}
Thus Route~2 has the smaller local-denominator coefficient at the common
absolute threshold, while Route~1 can have the smaller hiding error.
Comparing at equal normalized errors $\tau/p_i$ would change the estimation
task and obscure this tradeoff.

\subsection{A regime separating the available certificates}
The companion coefficient expansion implies, uniformly when
$N^2/K\le\kappa\log N$ for fixed $\kappa>0$,
\begin{align}
 B_{K,n}&=O(N^{1/2+3\kappa/4}),\nonumber\\
 \mathcal R^{\mathrm{scale}}_{K,n}&=O(N^{\kappa/4}),\qquad
 b_n\mathcal R^{\mathrm{scale}}_{K,n}=O(N^{1/2+\kappa/4}).
 \label{eq:polynomial-coefficients}
\end{align}
Indeed, $\log(B_{K,n}/b_n)=3n^2/K+o(1)$ in this range, and
$\log\mathcal R^{\mathrm{scale}}_{K,n}\le n(n-1)/K$.

In the regime $N^2/K\to0$, this same inequality and
$\mathcal R^{\mathrm{scale}}_{K,n}\ge1$ give $\mathcal R^{\mathrm{scale}}_{K,n}\to1$.
Thus the two reference scales agree asymptotically in the regime where
the cited independent-reference hiding envelope vanishes.

\begin{corollary}[Separation of the current error envelopes]
\label{cor:separation-window}
Fix $c>0$ and consider a sequence with $N=2n\to\infty$,
$cN^2/\log N\le K\ll N^2$, and $N^2/M\to0$.
Suppose that for some $a>1/2+3/(4c)$ there are thresholds with
$\tau/(\rho p_1)=N^{-a}$ and $\gamma(\tau)\to0$.
Then $\mathcal E_1\to0$.
If the Route~2 certificate uses
$\delta_2=\overline\delta_2$ from Eq.~\eqref{eq:symmetric-hiding-rate},
then $E_2(\tau)=1$ for every threshold eventually, and hence
$\mathcal E_2=1$ eventually.
\end{corollary}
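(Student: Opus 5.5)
Both claims follow by plugging the asymptotics of Eq.~\eqref{eq:polynomial-coefficients} into Theorem~\ref{thm:fair-comparison}; no new probabilistic input is needed.

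\textbf{Route~1.} Choose $\kappa$ so the lower bound on $K$ becomes the hypothesis of Eq.~\eqref{eq:polynomial-coefficients}. From $K\ge cN^2/\log N$ we get $N^2/K\le c^{-1}\log N$, so take $\kappa=1/c$. Then $B_{K,n}=O(N^{1/2+3/(4c)})$.

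Evaluate $E_1$ at the given thresholds, using $\tau/(\rho p_1)=N^{-a}$:
\[
 E_1(\tau)\le\gamma(\tau)+C N^{1/2+3/(4c)-a}+\delta_1 .
\]
The first term tends to zero by hypothesis. The second tends to zero because $a>1/2+3/(4c)$. The third is $\delta_1=\min\{1,C_\ast N^2/M\}\to0$ because $N^2/M\to0$. Since $\mathcal E_1\le E_1(\tau)$, we get $\mathcal E_1\to0$.

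\textbf{Route~2.} Show that the certificate $\overline\delta_2$ saturates. Since $K\ll N^2$, we have $N/\sqrt K\to\infty$, so $C'N/\sqrt K\ge1$ eventually and $\overline\delta_2=1$.

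Every other term of $E_2(\tau)$ is nonnegative: $\gamma(\tau)\ge0$, and $b_n\mathcal R^{\mathrm{scale}}_{K,n}\tau/(\rho p_1)\ge0$. Hence the uncapped sum is at least $1$ for every $\tau\ge0$, so $E_2(\tau)=1$ for every $\tau$. Taking the infimum gives $\mathcal E_2=1$ eventually.

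\textbf{Points to check.}
\begin{itemize}
 \item The citation range of Eq.~\eqref{eq:symmetric-hiding-rate} requires $N\le K\to\infty$. This holds because $K\ge cN^2/\log N\gg N$.
 \item Eq.~\eqref{eq:polynomial-coefficients} needs $N^2/K\le\kappa\log N$ uniformly. This holds with $\kappa=1/c$.
 \item The standing assumption $4n\le K$ of Theorem~\ref{thm:two-haar-tails} holds eventually, again since $K\gg N$.
\end{itemize}
The main obstacle is the uniformity in the coefficient expansion, and that is imported from the companion, so the argument is otherwise routine.
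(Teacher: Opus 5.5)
Your proposal is correct and follows the paper's proof exactly: substitute Eq.~\eqref{eq:polynomial-coefficients} with $\kappa=1/c$ into $E_1(\tau)$ together with $\delta_1\to0$, and for Route~2 note that $K\ll N^2$ forces $\overline\delta_2=1$ while every other term of $E_2(\tau)$ is nonnegative. Your checklist of side conditions is sound, and you could also record that $K\le M$ holds eventually since $K\ll N^2\ll M$.
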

\begin{proof}
The first conclusion follows by substituting
Eq.~\eqref{eq:polynomial-coefficients} with $\kappa=1/c$ into
Eq.~\eqref{eq:fair-budget}, together with $\delta_1\to0$.
For the second, $K/N^2\to0$ implies $C'N/\sqrt K\to\infty$,
so the displayed certified remainder equals one. All other terms are
nonnegative, irrespective of threshold selection.
\end{proof}

This corollary compares upper bounds. It does not prove that the true
Route~2 total variation distance stays positive, nor does it exclude a
better independent-reference theorem. It also leaves the estimator's
additive guarantee as an explicit premise.
\endgroup
\begingroup\section{Photon-sector normalization and sampling error}
\label{sec:sampling-interface}
The hardness argument for approximate boson sampling relates total
variation sampling error to additive error at a typical output label, and
uses anticoncentration to pass to relative accuracy
\cite[Secs.~5.2 and~7]{AaronsonArkhipov2013}. For completeness, we develop
the corresponding normalization and sampling-error conversion for GBS
\cite{HamiltonEtAl2017GBS,KruseEtAl2019GBS}. Here the total photon number
fluctuates, so the reference scales must account for both the number of
collision-free labels and the probability of the chosen photon sector.
Throughout, the Gaussian reference expectation is kept distinct from the
finite-Haar law.

\subsection{An exact reference-scale identity}
Let $D_{M,N}=\binom MN$ and write $(a)_n=\prod_{j=0}^{n-1}(a+j)$.
For $K$ identical squeezed inputs, the probability of total photon number
$N=2n$, before conditioning on collisions, is
\begin{equation}
 W_{K,n}(\xi)=\frac{(K/2)_n}{n!}
            \tanh(\xi)^{2n}\cosh(\xi)^{-K}.
 \label{eq:sector-probability}
\end{equation}
This is the negative-binomial photon-pair law
\cite[Eq.~(12)]{HamiltonEtAl2017GBS}.
It also follows directly by raising the single-mode pair-count generating
function $(1-\tanh^2(\xi)z)^{-1/2}/\cosh(\xi)$ to the $K$th power.

\begin{proposition}[Probability scale and the size of the label set]
\label{prop:sector-scale}
For {$n\ge1$,} $M\ge N=2n$, $K\ge1$, and $\xi>0$,
\begin{equation}
 D_{M,N}p_1=W_{K,n}(\xi)\prod_{j=0}^{N-1}(1-j/M).
 \label{eq:sector-scale-identity}
\end{equation}
The squeezing choice $\tanh^2\xi=N/(K+N)$, equivalently
$K\sinh^2\xi=N$, maximizes $W_{K,n}(\xi)$. At that choice, uniformly for
$K\ge4n$ as $n\to\infty$,
\begin{equation}
 W_{K,n}(\xi)=\frac{1+O(n^{-1})}
                  {\sqrt{2\pi n(1+2n/K)}}.
 \label{eq:sector-stirling}
\end{equation}
Consequently $D_{M,N}p_1=\Theta(N^{-1/2})$ if also $N^2/M\to0$.
\end{proposition}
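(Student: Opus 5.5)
The identity \eqref{eq:sector-scale-identity} is pure algebra, and I would verify it directly. From \eqref{eq:reference-probability} and \eqref{eq:gram-hafnian-variance},
\[
 p_1=\frac{\tanh(\xi)^{2n}}{M^{2n}\cosh(\xi)^K}(2n-1)!!\prod_{q=0}^{n-1}(K+2q).
\]
Write $\prod_{q=0}^{n-1}(K+2q)=2^n(K/2)_n$ and $(2n-1)!!\,2^n=(2n)!/n!$. This gives
\[
 p_1=\frac{(2n)!}{M^{2n}}\,\frac{(K/2)_n}{n!}\,\tanh(\xi)^{2n}\cosh(\xi)^{-K}
 =\frac{N!}{M^N}\,W_{K,n}(\xi)
\]
by \eqref{eq:sector-probability}. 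Multiplying by $D_{M,N}=M(M-1)\cdots(M-N+1)/N!$ cancels $N!$. What remains is $\prod_{j=0}^{N-1}(M-j)/M^N=\prod_{j}(1-j/M)$, which is the identity.

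For the maximization, put $x=\tanh^2\xi\in(0,1)$ and use $\cosh^{-2}\xi=1-x$. Then $W_{K,n}\propto x^n(1-x)^{K/2}$. The logarithmic derivative $n/x-(K/2)/(1-x)$ vanishes exactly at $x=2n/(K+2n)=N/(K+N)$. This critical point is the unique maximizer, since $\log W$ is strictly concave in $x$ and tends to $-\infty$ at both endpoints. Equivalently, $\sinh^2\xi=x/(1-x)=N/K$.

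Substituting this choice gives the exact value
\[
 W=\frac{\Gamma(K/2+n)}{\Gamma(K/2)\,n!}\,\Bigl(\frac{n}{K/2+n}\Bigr)^{n}\Bigl(\frac{K/2}{K/2+n}\Bigr)^{K/2}.
\]
Set $a=K/2\ge 2n$. I would apply Stirling with explicit remainder, $\log\Gamma(z)=(z-\tfrac12)\log z-z+\tfrac12\log 2\pi+O(1/z)$, to $\Gamma(a+n)$, $\Gamma(a)$ and $n!=\Gamma(n+1)$. The exponential factors cancel: the terms $(a+n)\log(a+n)$, $a\log a$ and $n\log n$ cancel against the two powers. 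The leftover half-powers give $\sqrt{a/(2\pi n(a+n))}=1/\sqrt{2\pi n(1+2n/K)}$. The remainders are $O(1/(a+n))+O(1/a)+O(1/n)=O(1/n)$, uniformly because $a\ge 2n$. So $W=(1+O(n^{-1}))/\sqrt{2\pi n(1+2n/K)}$.

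The main care point is the uniformity: the $O(1/a)$ term must be controlled using $a\ge 2n$ rather than a fixed $K$. This is also where the hypothesis $K\ge4n$ enters.

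For the consequence, I would bound the product using $\log(1-j/M)\ge -2j/M$ for $j/M\le 1/2$. This gives $\prod_{j<N}(1-j/M)\ge \exp(-N^2/M)\to1$, and the product is trivially at most $1$, so it tends to $1$ when $N^2/M\to0$. Meanwhile $1\le 1+2n/K\le 3/2$, so $W=\Theta(n^{-1/2})=\Theta(N^{-1/2})$. Hence $D_{M,N}p_1=\Theta(N^{-1/2})$.
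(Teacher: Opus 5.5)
Your proposal is correct and follows essentially the same route as the paper: the double-factorial and Pochhammer rewriting for the identity, the logarithmic derivative of $x^n(1-x)^{K/2}$ for the maximizer, and Stirling applied to $\Gamma(a+n)/(\Gamma(a)\,n!)$ with $a=K/2\ge 2n$ to get a remainder that is uniformly $O(n^{-1})$. Your explicit bound $\prod_{j<N}(1-j/M)\ge e^{-N^2/M}$ (valid once $N-1\le M/2$, which holds eventually since $N^2/M\to0$) just makes the paper's final product step concrete.
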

\begin{proof}
Using $(2n-1)!!=(2n)!/(2^n n!)$ and
$\prod_{q=0}^{n-1}(K+2q)=2^n(K/2)_n$ in
Eq.~\eqref{eq:reference-probability} gives
\[
 \binom M{2n}\frac{\sigma_{K,n}^2}{M^{2n}}
   =\frac{(K/2)_n}{n!}\frac{M(M-1)\cdots(M-2n+1)}{M^{2n}},
\]
which proves Eq.~\eqref{eq:sector-scale-identity}.
For $x=\tanh^2\xi\in(0,1)$ the variable factor is
$x^n(1-x)^{K/2}$. Its logarithmic derivative vanishes uniquely at
$x=n/(n+K/2)$, which is its maximum.
Put $a=K/2$ and use
\[
 W=\frac{\Gamma(a+n)}{\Gamma(a)\Gamma(n+1)}
       \left(\frac n{a+n}\right)^n
       \left(\frac a{a+n}\right)^a.
\]
Stirling's formula \cite[Sec.~5.11(i)--(ii)]{DLMFStirling} has relative error
$1+O(a^{-1}+n^{-1})=1+O(n^{-1})$ uniformly for $a\ge2n$.
Cancellation of the powers gives
$\sqrt{a/(2\pi n(a+n))}$, proving Eq.~\eqref{eq:sector-stirling}.
Finally,
$\prod_{j=0}^{N-1}(1-j/M)=1+O(N^2/M)$ when $N^2/M\to0$,
by the sum of logarithms, or by the elementary product bound.
\end{proof}

The identity concerns the Gaussian reference scale $p_1$, not the
finite-interferometer mass of collision-free outputs. It is therefore not
an additional assertion of collision suppression. It shows that the scale
appearing in our error budget is inverse polynomial relative to the number
of labels at the physically natural mean-matched squeezing.

\subsection{From total variation sampling error to an additive tail}
Let $P_U$ be the ideal distribution on the full photon-counting outcome
space and let $Q_U$ be another probability distribution on that same space.
Denote their probabilities at a collision-free label by $p_S$ and $q_S$.
Suppose
\begin{equation}
 \E_U d_{\rm TV}(P_U,Q_U)\le\epsilon_{\rm sam}.
 \label{eq:sampler-assumption}
\end{equation}
A per-circuit bound implies this assumption, but is not required.
For an independent uniform label $\bm S$ from the $D_{M,N}$ patterns,
\begin{equation}
 \E_{U,\bm S}|p_{\bm S}-q_{\bm S}|
       \le\frac{2\epsilon_{\rm sam}}{D_{M,N}}.
 \label{eq:sampler-mean}
\end{equation}
This follows by restricting the full $\ell^1$ sum to the chosen sector.
For $0<\zeta<1$, Markov's inequality therefore gives the additive threshold
$2\epsilon_{\rm sam}/(\zeta D_{M,N})$ with failure probability at most
$\zeta$ (with the zero-error case obtained directly).

\begin{corollary}[Relative accuracy of the sampler probabilities]
\label{cor:sampler-interface}
Assume Eq.~\eqref{eq:sampler-assumption} and the common dimension conditions
of Theorem~\ref{thm:two-haar-tails}. For $\rho>0$ and $0<\zeta<1$,
\begin{align}
 &\Pp_{U,\bm S}\{|q_{\bm S}-p_{\bm S}|>\rho p_{\bm S}\}
 \nonumber\\
 &\quad\le\min_{i=1,2}\min\!\left\{1,\zeta+
 \frac{2\beta_i\epsilon_{\rm sam}}
      {\rho\zeta D_{M,N}p_i}+\delta_i\right\},
 \label{eq:sampler-interface}
\end{align}
where $\beta_1=B_{K,n}$ and $\beta_2=b_n$.
Optimizing the elementary upper bound gives
\begin{equation}
 \begin{aligned}
 &\Pp\{|q_{\bm S}-p_{\bm S}|>\rho p_{\bm S}\}\\
 &\quad\le\min_{i=1,2}\min\!\left\{1,\delta_i+
 2\sqrt{\frac{2\beta_i\epsilon_{\rm sam}}{\rho D_{M,N}p_i}}\right\}.
 \end{aligned}
 \label{eq:sampler-optimized}
\end{equation}
\end{corollary}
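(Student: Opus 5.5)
The plan is to reduce Corollary~\ref{cor:sampler-interface} to Theorem~\ref{thm:fair-comparison}. The estimator is the sampler probability itself, $\widetilde p=q_{\bm S}$, and the common threshold comes from the Markov bound after Eq.~\eqref{eq:sampler-mean}. Since $\bm S$ is uniform and independent of $U$, Theorem~\ref{thm:two-haar-tails} applies after averaging over the label, and so does Theorem~\ref{thm:fair-comparison}. We set $\tau=2\epsilon_{\rm sam}/(\zeta D_{M,N})$. Markov's inequality applied to Eq.~\eqref{eq:sampler-mean} gives $\gamma(\tau)=\Pp\{|q_{\bm S}-p_{\bm S}|>\tau\}\le\zeta$. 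If $\epsilon_{\rm sam}=0$, then $q_{\bm S}=p_{\bm S}$ almost surely, the failure probability vanishes, and the bound is trivial.

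Next we handle each route separately. For route $i$ we use the inclusion in Eq.~\eqref{eq:relative-failure-inclusion} together with Eqs.~\eqref{eq:small-denominator}--\eqref{eq:symmetric-small-denominator}, applied with $t=\tau/(\rho p_i)$. This gives
\[
\Pp\{|q_{\bm S}-p_{\bm S}|>\rho p_{\bm S}\}\le\zeta+\frac{\beta_i\tau}{\rho p_i}+\delta_i ,
\]
and substituting $\tau$ yields the $i$th term of Eq.~\eqref{eq:sampler-interface}. For $i=2$ we work directly with $p_2$ and $\beta_2=b_n$, rather than rewriting through $\mathcal R^{\mathrm{scale}}_{K,n}p_1$, so the expression matches the displayed form. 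We then cap the bound at one, since it is a probability. Because the bound holds for both $i=1,2$, it holds for the minimum over $i$.

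For Eq.~\eqref{eq:sampler-optimized} we write $x=2\beta_i\epsilon_{\rm sam}/(\rho D_{M,N}p_i)$, so the $i$th bound reads $\delta_i+\zeta+x/\zeta$. When $x<1$, we choose $\zeta=\sqrt x\in(0,1)$, which gives $\delta_i+2\sqrt x$. When $x\ge1$, we have $2\sqrt x\ge1$, so the right-hand side of Eq.~\eqref{eq:sampler-optimized} is already at least one and the inequality holds trivially. The case $x=0$ corresponds to $\epsilon_{\rm sam}=0$ and is covered by the zero-error case above.

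The main obstacle is the measurability and independence bookkeeping in the first step. We must check that $q_{\bm S}$ is a measurable function of $(U,\bm S)$, so that the joint tail $\gamma$ is defined. We must also check that Eq.~\eqref{eq:sampler-mean} really follows from the $\ell^1$ restriction: averaging over the uniform label gives $D_{M,N}^{-1}\sum_S|p_S-q_S|\le D_{M,N}^{-1}\cdot 2\,d_{\rm TV}$, and taking $\E_U$ then gives the factor $2\epsilon_{\rm sam}/D_{M,N}$. Once this is in place, everything else is the union bound and elementary optimization.
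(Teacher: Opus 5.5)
Your proposal is correct and follows the paper's proof. You use the estimate $q_{\bm S}$ with the Markov threshold $\tau=2\epsilon_{\rm sam}/(\zeta D_{M,N})$ in the inclusion behind Theorem~\ref{thm:fair-comparison}, then optimize with $\zeta=\sqrt{x}$ and settle $x\ge1$ by the cap. The differences are cosmetic: for $x=0$ you note that the failure probability vanishes when $\epsilon_{\rm sam}=0$, where the paper lets $\zeta\downarrow0$, and you write Route~2 directly in terms of $p_2$ rather than $\mathcal R^{\mathrm{scale}}_{K,n}p_1$.
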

\begin{proof}
Apply Theorem~\ref{thm:fair-comparison} to the estimate $q_{\bm S}$ and
the Markov threshold. For a fixed route put
$A=2\beta_i\epsilon_{\rm sam}/(\rho D_{M,N}p_i)$.
If $0<A<1$, choose $\zeta=\sqrt A$. If $A=0$, let $\zeta\downarrow0$.
If $A\ge1$, the capped bound in Eq.~\eqref{eq:sampler-optimized} is one
and follows without optimization.
\end{proof}

These are statements about the probabilities of a candidate sampler,
not about their efficient numerical evaluation. If a classical sampler
runs in polynomial time using finitely many random bits, approximate
counting supplies an additional probability-estimation interface under its
usual oracle assumptions \cite{Stockmeyer1983,AaronsonArkhipov2013}.
Its own estimation error must then be included in $\gamma(\tau)$ or
combined by a triangle inequality. Nothing in
Corollary~\ref{cor:sampler-interface} assumes that querying $q_S$ is free.

At mean-matched squeezing, Proposition~\ref{prop:sector-scale} shows that
inverse polynomial $\epsilon_{\rm sam}$ can make the Route~1 expression
vanish whenever $B_{K,n}$ is polynomial and $N^2/M\to0$. For example, in
Corollary~\ref{cor:separation-window}, a fixed $\rho>0$ and
$\epsilon_{\rm sam}=o(N^{-1-3/(4c)})$ suffice for this probability
comparison. This makes the normalization explicit, but it does not replace
the ensemble-specific average-case hardness assumption.
\endgroup
\begingroup\section{Transpose Gram structure and congruence equivariance}
\label{sec:structure}

For $Y\in\C^{N\times K}$ define
\begin{equation}
 \begin{aligned}
 \Upsilon(Y)&:=YY^{\T},&
 \Upsilon_K(Y)&:=K^{-1/2}YY^{\T},\\
 \rho_g(A)&:=gAg^{\T}.&&
 \end{aligned}
 \label{eq:hide-projection}
\end{equation}
The maps $\Upsilon$ and $\Upsilon_K$ take values in
$\Sym_N(\C)=\{A:A^{\T}=A\}$.  Their elementary covariance is the structural
reason the proof closes after taking the product.

Here transpose Gram refers specifically to the
ordinary transpose product $YY^{\T}$, not to the Hermitian Gram product
$YY^{*}$.  Consequently $YY^{\T}$ is complex symmetric, but it is not in
general Hermitian or positive semidefinite and is not the usual complex
Wishart matrix.

For every $Y\in\C^{N\times K}$ and $g\in\mathrm{GL}_N(\C)$,
\begin{equation}
 \Upsilon_K(gY)=\rho_g(\Upsilon_K(Y))
 =g\Upsilon_K(Y)g^{\T}.
 \label{eq:hide-covariance}
\end{equation}
Direct multiplication gives
$K^{-1/2}(gY)(gY)^{\T}=g(K^{-1/2}YY^{\T})g^{\T}$.

\endgroup
\begingroup\section{Uniform hiding after the transpose Gram map}
\label{sec:uniform-hiding}

This section proves the matrix hiding theorem stated in
\cref{thm:uniform-product-hiding}.  The theorem compares product matrices, not the underlying
rectangular blocks.  This distinction is essential when $K$ is comparable
with $M$, because entrywise Gaussian approximation of the whole block is then
unavailable.

Figure~\ref{fig:uniform-hiding-geometry} summarizes the four stages: exact
ambient recursion, radial transport to a square COE corner, a summable
centered estimate, and a rectangular comparison.  Their details are in
Appendices~\ref{app:haar-stiefel} through~\ref{app:rectangular-kl}; the
external mathematical inputs are stated in Appendix~\ref{app:external-inputs}.

\begin{figure*}[!tp]
  \centering
  \includegraphics[width=0.98\textwidth]{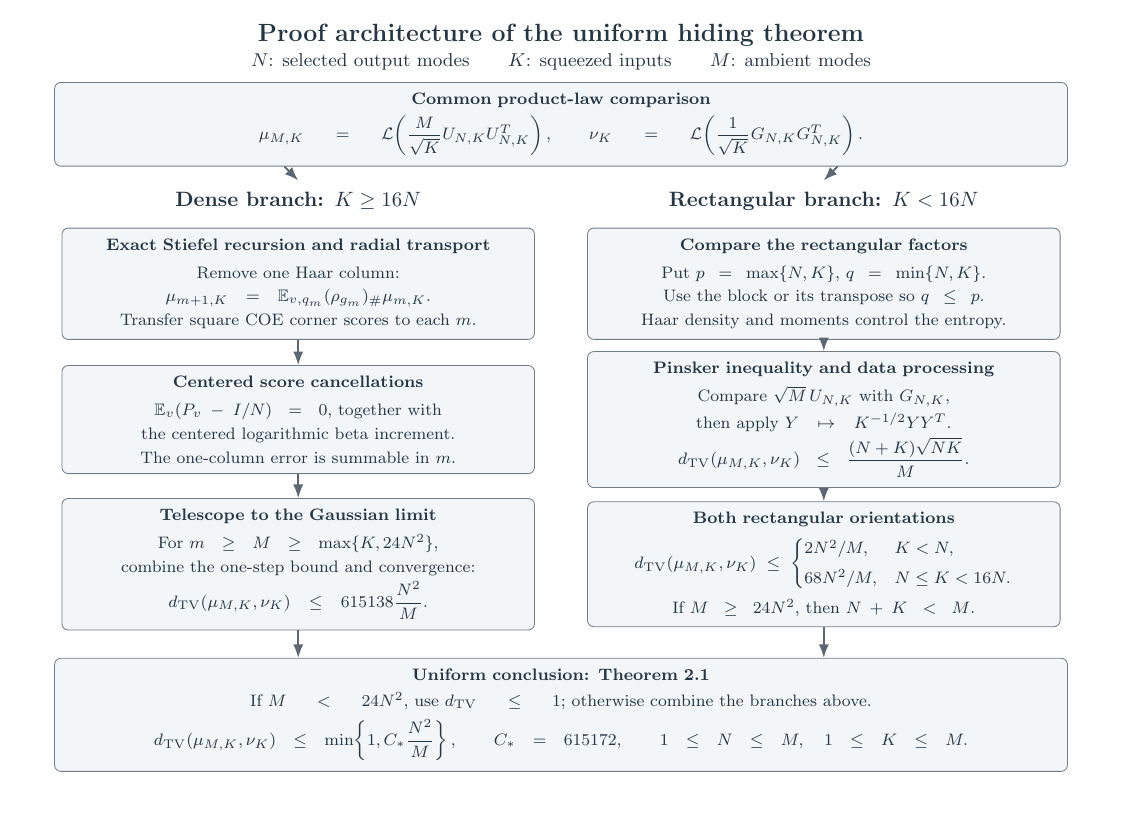}
  \caption{Proof architecture.  The dense branch combines exact one column
  recursion, radial transport, centered COE scores, and telescoping.  The
  rectangular branch uses a Haar block relative entropy bound and data
  processing.  Their combination proves
  Theorem~\ref{thm:uniform-product-hiding}.}
  \label{fig:uniform-hiding-geometry}
\end{figure*}

\subsection{Exact ambient recursion}

Classical Haar Stiefel column deletion and beta radial laws
\cite{BourgadeEtAl2008,Mezzadri2007,ZyczkowskiSommers2001,
ZyczkowskiSommers2000Truncations} yield the exact one column recursion below.
A related positive congruence factorization appears in
Ref.~\cite[Lemma~2.2]{ShouEtAl2026ArbitrarySqueezers}.

For $m\ge\max\{N,K\}$, a \emph{complex Haar {row Stiefel} matrix} is a random
$V\in\C^{N\times m}$ with $VV^*=I_N$ whose law is the invariant probability
measure on
$\{W\in\C^{N\times m}:WW^*=I_N\}$; equivalently, $V$ consists of the first
$N$ rows of a Haar matrix in $\mathrm U(m)$
~\cite{Mezzadri2007,ZyczkowskiSommers2000Truncations}.  Let $V_K$ contain the
first $K$ columns of $V$, and put
\begin{equation}
 {Z_{m,K}}:=\frac{m}{\sqrt K}V_KV_K^{\T},
 \qquad \mu_{m,K}:=\Law({Z_{m,K}}).
 \label{eq:hide-mu-definition}
\end{equation}
Thus $\mu_{m,K}$ is a probability measure on $\C^{N\times N}$ supported on
the complex symmetric matrices.  If $\Lambda$ is a random probability
measure on this ambient space, we write $\E\Lambda$ for its barycenter, defined by
$(\E\Lambda)(E):=\E[\Lambda(E)]$ for every Borel set $E$.
Also define
\begin{equation}
 B_K:=\frac{1}{\sqrt K}G_{N,K}G_{N,K}^{\T},
 \qquad \nu_K:=\Law(B_K).
 \label{eq:hide-nu-definition}
\end{equation}
{For a measurable map $f$, the symbol $f_\#\eta$ denotes the
pushforward of a measure $\eta$.  In particular, $\rho_g(A)=gAg^{\T}$ was
defined in Eq.~\eqref{eq:hide-projection}.}

\begin{lemma}[{One column} {Haar Stiefel} recursion]
\label{lem:hide-recursion}
For $1\le N\le K\le m$, set $g_m:=\exp(a_mI+b_mP_v)$, and let
$A_m\sim\mu_{m,K}$ be independent of $(v,q_m)$.  Then the recursion, written
directly as an identity of probability measures on $\C^{N\times N}$, is
\begin{equation}
 \mu_{m+1,K}
 =\E_{v,q_m}\big[(\rho_{g_m})_\#\mu_{m,K}\big]
 =\Law(g_mA_mg_m^{\T}).
 \label{eq:hide-recursion-measure}
\end{equation}
Equivalently, expanding the barycenter and pushforward, for every Borel set
$E\subset\C^{N\times N}$,
\begin{equation}
 \begin{aligned}
 \mu_{m+1,K}(E)
 &=\E_{v,q_m}\!\left\{
   \big[(\rho_{g_m})_\#\mu_{m,K}\big](E)\right\}\\
 &=\E_{v,q_m}\!\int
   \1_E(g_mAg_m^{\T})\,\mu_{m,K}(\dd A).
 \end{aligned}
 \label{eq:hide-recursion}
\end{equation}
where $P_v=vv^*$, $v$ is uniform on the unit sphere of $\C^N$, and
\begin{equation}
 \begin{aligned}
 a_m&=\frac12\log\!\left(1+\frac1m\right),&
 b_m&=\frac12\log q_m,\\
 q_m&\sim\operatorname{Beta}(m-N+1,N).&&
 \end{aligned}
 \label{eq:hide-ab}
\end{equation}
The inner integral samples $A\sim\mu_{m,K}$ independently of $(v,q_m)$,
whereas the outer expectation samples {$v$ and $q_m$
independently of each other}.  Equivalently,
Eq.~\eqref{eq:hide-recursion-measure} is a positive congruence Markov step
with scalar part $a_mI$ and rank one part $b_mP_v$.
\end{lemma}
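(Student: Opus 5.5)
The plan is to realize $V\in\C^{N\times(m+1)}$ as the first $N$ rows of a Haar matrix in $\mathrm U(m+1)$ and to delete its \emph{last} column. Since $K\le m$, the first $K$ columns are untouched by the deletion. Write $V=[W,\,c]$ with $W\in\C^{N\times m}$ and $c\in\C^N$. Then $VV^*=I_N$ gives
\[
 WW^*=I_N-cc^*.
\]
I intend to show the factorization $W=hV''$, where $h:=(I_N-cc^*)^{1/2}$ and $V''$ is an $N\times m$ complex Haar row Stiefel matrix independent of $c$. Taking the first $K$ columns then gives $V_K=hV''_K$, and hence
\[
 Z_{m+1,K}=\frac{m+1}{\sqrt K}\,hV''_KV''^{\T}_Kh^{\T}
 =g\Big(\frac m{\sqrt K}V''_KV''^{\T}_K\Big)g^{\T},
 \qquad g:=\sqrt{\tfrac{m+1}{m}}\,h .
\]
The inner matrix has law $\mu_{m,K}$ and is independent of $g$, which is exactly Eq.~\eqref{eq:hide-recursion-measure}. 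Here $g^{\T}=\bar h\sqrt{(m+1)/m}$ is simply the ordinary transpose required by $\rho_g$; no symmetry of $h$ under transpose is needed.

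\textbf{Identifying $g$ with $g_m$.} Put $v=c/|c|$ (defined a.s.) and $q=1-|c|^2$. Then $h$ acts as $\sqrt q$ on $v$ and as the identity on $v^\perp$, so $h=\exp(\tfrac12\log q\,P_v)$. Since $\sqrt{(m+1)/m}=\exp(a_m)$ commutes with everything, $g=\exp(a_mI+b_mP_v)$. The column $c$ consists of the first $N$ coordinates of the last column of a Haar unitary, i.e.\ of a uniform unit vector in $\C^{m+1}$. Unitary invariance under $\mathrm U(N)$ acting on these coordinates makes $v$ uniform on the sphere and independent of $|c|$. The standard Dirichlet/beta law for partial sums of squared moduli of a uniform complex unit vector gives $|c|^2\sim\operatorname{Beta}(N,m+1-N)$, hence $q\sim\operatorname{Beta}(m-N+1,N)$. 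The hypothesis $N\le m$ keeps both parameters positive and gives $|c|<1$ a.s., so $h$ is invertible.

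\textbf{The step needing care: independence and uniformity of $V''$.} I would argue by right invariance. For any $O\in\mathrm U(m)$, right multiplication of the Haar matrix by $\operatorname{diag}(O,1)$ preserves Haar measure, fixes $c$, and sends $W\mapsto WO$. Hence, conditionally on $c$ (via a regular conditional distribution), the law of $W$ is $\mathrm U(m)$-right-invariant on the fiber
\[
 \{W:WW^*=h^2\}=h\cdot\{X\in\C^{N\times m}:XX^*=I_N\}.
\]
Because $h$ is invertible, $X\mapsto hX$ is a $\mathrm U(m)$-equivariant bijection from the Stiefel manifold onto this fiber. The right action on the Stiefel manifold is transitive, so the invariant probability measure there is unique. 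Therefore $V''=h^{-1}W$ is, conditionally on $c$, Haar row Stiefel, and this conditional law does not depend on $c$; that is, $V''$ is independent of $c$.

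Finally, the Borel-set form Eq.~\eqref{eq:hide-recursion} is Fubini applied to the independent pair $\big((v,q_m),A\big)$, since $(g,A)\mapsto gAg^{\T}$ is continuous.
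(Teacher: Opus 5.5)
Your proposal is correct and follows essentially the same route as the paper. Both arguments delete the last column $c$ of an $N\times(m+1)$ Haar row Stiefel matrix, use $WW^*=I_N-cc^*$ and right invariance under $\operatorname{diag}(O,1)$ to show that $(I_N-cc^*)^{-1/2}W$ is Haar row Stiefel and independent of $c$, and then identify $\sqrt{(m+1)/m}\,(I_N-cc^*)^{1/2}$ with $\exp(a_mI+b_mP_v)$ via the beta radial law. The differences are only in presentation: you derive $|c|^2\sim\operatorname{Beta}(N,m+1-N)$ from the Dirichlet law instead of citing the truncated-Haar radial law, and you spell out the equivariant-bijection and uniqueness step and the Fubini expansion that the paper states more briefly.
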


\begin{proof}
Take an $N$ by $m+1$ complex Haar {row Stiefel} matrix and
delete its last column $u$.  Let $R$ be the remaining $N$ by $m$ matrix.
{Row orthonormality gives
\[
 RR^*+uu^*=I_N,\qquad RR^*=I_N-uu^*.
\]}
{Writing $u=rv$, unitary invariance makes $v$ uniform and
independent of $r$, while the classical truncated Haar radial law
\cite[Sec.~III, Eq.~(7)]{ZyczkowskiSommers2000Truncations} gives}
\[
 r^2\sim\operatorname{Beta}(N,m+1-N),\qquad
 v\sim\operatorname{Unif}(S^{2N-1}).
\]
{Since $r^2<1$ almost surely, $I_N-uu^*$ is positive definite.
Define
\[
 \begin{aligned}
 W&:=(I_N-uu^*)^{-1/2}R,\\
 WW^*&=(I_N-uu^*)^{-1/2}RR^*(I_N-uu^*)^{-1/2}=I_N.
 \end{aligned}
\]
Then $R=(I_N-uu^*)^{1/2}W$.  For every fixed $Q\in\mathrm U(m)$,
\[
 \begin{bmatrix}R&u\end{bmatrix}
 \begin{pmatrix}Q&0\\0&1\end{pmatrix}
 =\begin{bmatrix}RQ&u\end{bmatrix}
 \law\begin{bmatrix}R&u\end{bmatrix},
\]
by right unitary invariance of the original Haar row Stiefel law.
The last column is unchanged, so $(R,u)\law(RQ,u)$ and
$\Law(RQ\mid u)=\Law(R\mid u)$ almost surely.
Since $WW^*=I_N$, the conditional law of $W$ given $u$ is a
right-unitarily invariant probability law on the row Stiefel manifold,
hence the Haar row Stiefel law.  This conditional law does not depend on
$u$, so $W$ is independent of $u$.}

{Returning to the radial variable, set $q_m=1-r^2$.
The beta law of $r^2$ above gives
$q_m\sim\operatorname{Beta}(m-N+1,N)$.}
If $R_K$ and $W_K$ denote the first $K$ columns
{of $R$ and $W$, respectively}, then
\[
 \begin{aligned}
 \frac{m+1}{\sqrt K}R_KR_K^{\T}
 &=g_m\left(\frac m{\sqrt K}W_KW_K^{\T}\right)g_m^{\T},\\
 g_m&=\sqrt{\frac{m+1}{m}}(I-uu^*)^{1/2}
      =e^{a_mI+b_mP_v}.
 \end{aligned}
\]
Equation~\eqref{eq:hide-covariance} now gives the measure identity
Eq.~\eqref{eq:hide-recursion-measure}; expanding its barycenter gives
Eq.~\eqref{eq:hide-recursion}.
\end{proof}

For fixed $1\le N\le K$, every starting index $m_0$, and every
$\varepsilon>0$, Appendix~\ref{app:haar-stiefel} proves
\begin{equation}
 \exists n\in\mathbb N:\qquad
 d_{\rm TV}(\mu_{m_0+n,K},\nu_K)\le\varepsilon.
 \label{eq:hide-target-convergence}
\end{equation}
The dense argument uses the ambient recursion because its first-order drift
cancels exactly.

\subsection{Radial decomposition and orbital commutation}

{By the unitary invariance of the Haar and Gaussian blocks
and Eq.~\eqref{eq:hide-covariance}, both $\mu_{m,K}$ and
$\nu_K$ are invariant under unitary congruence.  Conditional on the Takagi
singular values, each has the same angular law induced by Haar measure on the
corresponding orbit.  Only their radial Takagi distributions can therefore
differ, motivating the radial factorization below.}

{Radial and angular} factorizations and matrix beta variables are standard in
multivariate Wishart theory, while complex projective moments and
multiplicative spherical transforms are standard Haar tools
~\cite{OlkinRubin1964,CollinsSniady2006,KieburgKosters2019}.  Their role here is
different: radial factorization transports the resulting single column
probability bound from one square COE corner to every ambient dimension,
and projective averaging isolates the
centered congruence directions that act on the transpose Gram product.

For a probability law $\xi$ on $\mathrm{GL}_N(\C)$ and a probability
law $\eta$ on $\C^{N\times N}$, let $G\sim\xi$ and $X\sim\eta$ be independent and
write
\begin{equation}
 \mathcal A_\xi\eta:=\Law(GXG^{\T}).
 \label{eq:hide-congruence-action}
\end{equation}
Thus $\mathcal A_\xi$ is a Markov congruence action.  If $\lambda$ is a
probability law on $\mathrm{GL}_N(\C)$ supported on
$\operatorname{Herm}^{++}_N(\C)$, we also write
$\mathcal T_\lambda:=\mathcal A_\lambda$.  Here ``orbital'' refers to
averaging over the unitary orbit of a fixed rank one direction.  With
$v\sim\operatorname{Unif}(S^{2N-1})$, define
\begin{equation}
 \begin{aligned}
 P_v&=vv^*,\qquad Q_v=P_v-I/N,\\
 \mathcal K_s\eta&=\E_v(\rho_{\exp(sQ_v)})_\#\eta.
 \end{aligned}
 \label{eq:hide-orbital-kernel}
\end{equation}

\begin{lemma}[Spherical commutation and variation transport]
\label{lem:hide-commute}
Let $N\ge1$.  Let $\eta$ be a probability law on $\C^{N\times N}$ invariant
under unitary congruence, and let $\lambda$ be a probability law on
$\mathrm{GL}_N(\C)$, supported on $\operatorname{Herm}^{++}_N(\C)$ and
invariant under unitary conjugation.  Then
$\mathcal T_\lambda\eta$ is invariant under unitary congruence and, for every
$s\in\R$,
\begin{equation}
 \mathcal K_s\mathcal T_\lambda\eta
 =\mathcal T_\lambda\mathcal K_s\eta.
 \label{eq:hide-commute}
\end{equation}
The operator $\mathcal T_\lambda$ has a bounded linear extension to finite
signed Borel measures, and its variation norm satisfies
\begin{equation}
 \bigl\|\mathcal T_\lambda\sigma\bigr\|_{\rm var}
 \le \|\sigma\|_{\rm var}.
 \label{eq:hide-commute-contraction}
\end{equation}
Let $I\subset\R$ be open and let $r\in\mathbb N_0$.  If
$s\mapsto\mathcal K_s\eta$ is $C^r$ on $I$ in the variation norm, then, for
every $0\le j\le r$ and $s\in I$,
\begin{equation}
 \begin{aligned}
  \partial_s^j(\mathcal K_s\mathcal T_\lambda\eta)
  &=\mathcal T_\lambda\!\left[\partial_s^j(\mathcal K_s\eta)\right],\\
  \bigl\|\partial_s^j(\mathcal K_s\mathcal T_\lambda\eta)\bigr\|_{\rm var}
  &\le \bigl\|\partial_s^j(\mathcal K_s\eta)\bigr\|_{\rm var}.
 \end{aligned}
 \label{eq:hide-commute-derivative}
\end{equation}
\end{lemma}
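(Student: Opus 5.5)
The plan is to prove the commutation identity at the level of random matrices, and then get the analytic statements from the fact that $\mathcal T_\lambda$ is a Markov kernel.

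\emph{Invariance of $\mathcal T_\lambda\eta$.} Let $H\sim\lambda$ and $X\sim\eta$ be independent, and fix $u\in\mathrm U(N)$. Then
\[
 u(HXH^{\T})u^{\T}=(uHu^*)(uXu^{\T})(uHu^*)^{\T},
\]
because $(uHu^*)^{\T}=\bar u H^{\T}u^{\T}$ and $u^*\bar u=\overline{u^{\T}u}$ is not generally the identity. So the factorization must be written as
\[
 uHXH^{\T}u^{\T}=(uHu^*)\,(uX u^{\T})\,(\bar u H^{\T} u^{\T}),
\]
and this requires $\bar u H^{\T}u^{\T}=(uHu^*)^{\T}$. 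That identity does hold: $(uHu^*)^{\T}=\bar u H^{\T}u^{\T}$. The pair $(uHu^*,uXu^{\T})$ has the law of $(H,X)$ by the two invariance hypotheses and independence, which gives invariance under unitary congruence.

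\emph{Commutation.} Write $\mathcal K_s\mathcal T_\lambda\eta=\Law(e^{sQ_v}HXH^{\T}e^{sQ_v^{\T}})$ with $v,H,X$ independent. The key step is to move $e^{sQ_v}$ past $H$. Diagonalize $H=W D W^*$ with $W$ unitary and $D>0$ diagonal, choosing $W$ measurably and independently of $(v,X)$. Define $v'=W^*v$ and $X'=W^{*}X\bar W$, the congruence by $W^*$. The matrix $e^{sQ_v}H$ does not equal $He^{sQ_{v'}}$ pointwise, so the equality has to come from the joint law rather than an algebraic identity.

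One route would realize $\mathcal K_s$ as a composition of congruences on independent factors, $\Law(A\,H\,X\,H^{\T}A^{\T})$ with $A=e^{sQ_v}$, and compare it with $\Law(H\,A\,X\,A^{\T}H^{\T})$. But for fixed $H$, the laws of $AH$ and $HA$ differ in general, so this would need rotation of $v$ by the eigenvectors of $H$ together with a conditional spherical symmetry, and the diagonal part $D$ obstructs it.

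I expect this to be the main obstacle. The natural way through is to disintegrate over the radial data of $X$. Because $\eta$ is invariant under unitary congruence, $X\stackrel{d}{=}V\Sigma V^{\T}$ with $V$ Haar and independent of the Takagi values $\Sigma$. Then $HX H^{\T}=(HV)\Sigma(HV)^{\T}$ and $e^{sQ_v}X e^{sQ_v^{\T}}=(e^{sQ_v}V)\Sigma(\cdot)^{\T}$. Both sides of the commutation identity become laws of $C\Sigma C^{\T}$, with $C=e^{sQ_v}HV$ on one side and $C=He^{sQ_v}V$ on the other. Right Haar invariance lets me replace $e^{sQ_v}V$ by its polar-type factor. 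The identity then reduces to equality of the laws of the Hermitian squares $CC^*$ modulo right unitaries, that is, equality in law of the multisets of singular values of $e^{sQ_v}H$ and $He^{sQ_v}$. These agree, since the singular values of $AH$ and $HA$ coincide for Hermitian $A$ and $H$: $(AH)^*=HA$, and a matrix and its adjoint have the same singular values.

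After this reduction the argument needs only that the left polar unitary of $C$ is absorbed by the Haar $V$ through left–right invariance, which follows from independence of $V$. I would check this reduction carefully, since it is where the proof actually lives.

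\emph{Analytic part.} For a finite signed measure $\sigma$, define $\mathcal T_\lambda\sigma(E)=\int\!\int \1_E(hAh^{\T})\,\lambda(dh)\,\sigma(dA)$. This is linear and, by the Hahn decomposition, satisfies $\|\mathcal T_\lambda\sigma\|_{\rm var}\le|\sigma|(\C^{N\times N})$, which is the contraction bound. Since $\mathcal T_\lambda$ is a bounded linear operator, it commutes with variation-norm difference quotients and their limits. Induction on $j$ then gives $\partial_s^j\mathcal T_\lambda\mathcal K_s\eta=\mathcal T_\lambda\partial_s^j\mathcal K_s\eta$, and the commutation identity lets us rewrite the left side as $\partial_s^j(\mathcal K_s\mathcal T_\lambda\eta)$. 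The norm inequality in \eqref{eq:hide-commute-derivative} follows by applying the contraction to $\partial_s^j\mathcal K_s\eta$.
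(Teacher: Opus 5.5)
\textbf{Gap in the commutation step.} Your route differs from the paper's and can be completed, but the step you flag as the heart of the proof is not justified, and the reason you give for it is wrong.

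Once the Haar factor $V$ has absorbed the right unitary, your reduction controls $\Law(CC^*)=\Law(AA^*)$. That is, it compares the laws of $e^{sQ_v}H^2e^{sQ_v}$ and $He^{2sQ_v}H$. These two matrices have the same spectrum pointwise, but they are different matrices. So your ``that is'', which passes from equal singular-value laws to equal matrix laws, needs an invariance under left unitaries. Your last sentence attributes this to $V$. But $V$ multiplies $A$ on the right and never touches the left singular vectors of $A$.

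In fact your commutation argument never uses that $\lambda$ is conjugation invariant or that $v$ is uniform. As written, it would equally ``prove'' the identity for $\lambda=\delta_{H_0}$ with a fixed non-scalar $H_0>0$, where the identity is false. For large $s$, after normalization, $e^{sQ_v}H_0^2e^{sQ_v}$ is nearly rank one in the uniformly distributed direction $v$. By contrast, $H_0e^{2sQ_v}H_0$ is nearly rank one in the non-uniform direction $H_0v/\|H_0v\|$. Taking $\eta=\Law(VV^{\T})$, this difference survives in the output laws.

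\textbf{How to repair it.} The missing input is the computation from your first paragraph, applied to $A_1=e^{sQ_v}H$ and $A_2=He^{sQ_v}$ with $(v,H)$ independent. Then $\mathcal K_s\mathcal T_\lambda\eta=\Law(A_1XA_1^{\T})$ and $\mathcal T_\lambda\mathcal K_s\eta=\Law(A_2XA_2^{\T})$.
\begin{itemize}
\item Since $ue^{sQ_v}u^*=e^{sQ_{uv}}$ and $(uv,uHu^*)\stackrel{d}{=}(v,H)$, each $\Law(A_i)$ is invariant under $A\mapsto uAu^*$.
\item Hence $\Law(A_iXA_i^{\T})$ is invariant under unitary congruence. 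It therefore equals $\Law(WA_iVXV^{\T}A_i^{\T}W^{\T})$ for independent Haar $W,V$.
\item Write $A_1=U_1DU_2^*$ with measurable factors. Then $A_2=A_1^*=U_2DU_1^*$, because $H$ and $e^{sQ_v}$ are Hermitian.
\item The pairs $(WU_1,U_2^*V)$ and $(WU_2,U_1^*V)$ are each independent Haar pairs, independent of $(D,X)$.
\end{itemize}
So both sides equal $\Law(W_1DW_2XW_2^{\T}DW_1^{\T})$.

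Two smaller points. Use $X\stackrel{d}{=}VXV^{\T}$ rather than a Takagi decomposition, since $\eta$ lives on all of $\C^{N\times N}$, not just on symmetric matrices. In your first paragraph, the identity actually used is $u^{\T}\bar u=\overline{u^*u}=I$; the aside about $u^*\bar u$ is irrelevant.

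\textbf{Comparison with the paper.} Repaired this way, your argument is an elementary unpacking of the paper's proof. The paper lifts both kernels to biunitarily invariant measures on $\mathrm{GL}_N(\C)$. It then invokes commutativity for the Gelfand pair $(\mathrm{GL}_N(\C),\mathrm U(N))$ via the R\"osler--Voit involution criterion. You instead use directly that such laws are determined by singular values, and that $(e^{sQ_v}H)^*=He^{sQ_v}$; this performs the involution step pointwise.

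Your treatment of the bounded extension, the variation contraction, and the derivative transport matches the paper's.
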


\begin{proof}
The complete proof is given in Appendix~\ref{app:haar-stiefel},
Sec.~\ref{subsec:hide-commute-direct-proof}, where the commutation,
variation contraction, and derivative transport in
Eqs.~\eqref{eq:hide-commute}--\eqref{eq:hide-commute-derivative}
are established in turn.
Figure~\ref{fig:haar-lift-transport} summarizes its Haar lift,
convolution, Markov contraction, and derivative transport steps.
\end{proof}

\begin{figure*}[!tp]
\centering
\includegraphics[width=0.92\textwidth]{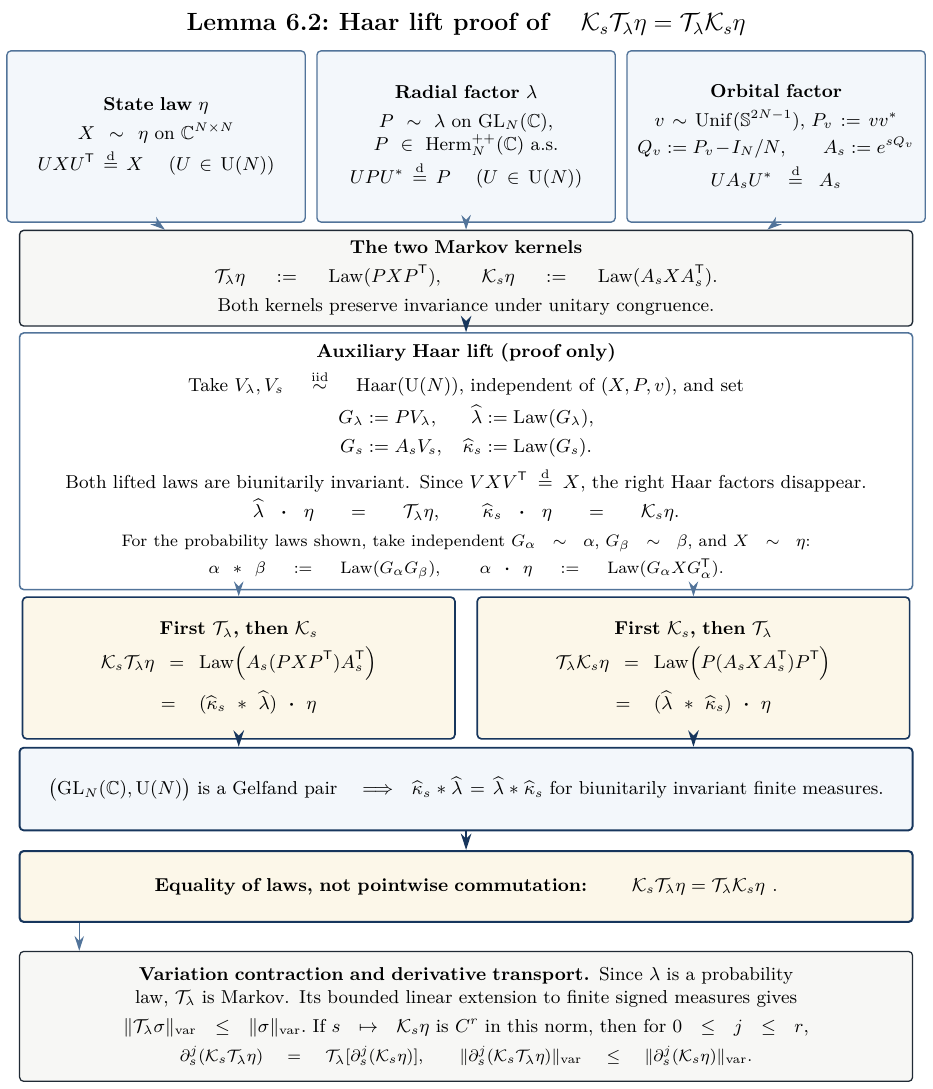}
\caption{Haar lift and variation transport in Lemma~\ref{lem:hide-commute}.
The auxiliary Haar factors identify the two congruence kernels with
convolutions of biunitarily invariant laws. Their commutation and Markov
contraction give the stated transport of variation derivatives.}
\label{fig:haar-lift-transport}
\end{figure*}

For the square COE base used below, Appendix~\ref{app:coe-density} constructs
the transported density derivatives through order four as $L^1$ curves.  The
variation norm of a measure with density is its $L^1$ norm, so those derivatives
supply the $C^4$ variation regularity required by Lemma~\ref{lem:hide-commute}.

We make the radial transport used in the proof explicit.  Let
\[
 \mathsf R_{m,N}\eta
 :=\E_{v,q_m}(\rho_{\exp(a_mI+b_mP_v)})_\#\eta
\]
be the one column Markov kernel of Lemma~\ref{lem:hide-recursion}.  Define the
finite chain by
\[
 \mathcal T_{K,K}:=\operatorname{Id},\qquad
 \mathcal T_{m+1,K}:=\mathsf R_{m,N}\mathcal T_{m,K}\quad(m\ge K).
\]
Iterating the exact recursion gives
\begin{equation}
 \mu_{m,K}=\mathcal T_{m,K}\mu_{K,K},\qquad m\ge K.
 \label{eq:hide-radial}
\end{equation}
The base law in this chain is
\begin{equation}
 \mu_{K,K}=\Law(\sqrt K\,{C_{N,K}}),
 \label{eq:hide-base}
\end{equation}
where $W$ is Haar in $\mathrm U(K)$,
$\operatorname{COE}(K):=\Law(WW^{\T})$, and
{$C_{N,K}:=(WW^{\T})_{[N],[N]}$} is its upper $N$ by
$N$ corner.
{Each single column kernel in $\mathcal T_{m,K}$ is a mixture
of positive definite matrices with conjugation invariant law, so
Lemma~\ref{lem:hide-commute} applies successively.  This is commutation of
averaged laws, not pointwise commutation.

{We use the following consequence for probability laws.  For
$1\le N\le K\le m$ and every integer $r\ge N$, let
$\mathsf R_{r,N}$ be the single column kernel with
$q_r\sim\operatorname{Beta}(r-N+1,N)$.  Both beta parameters are positive
in this range.  The concrete radial commutation theorem gives}
\begin{equation}
 \mathsf R_{r,N}\mu_{m,K}
 =\mathcal T_{m,K}(\mathsf R_{r,N}\mu_{K,K}).
 \label{eq:hide-radial-one-column-commute}
\end{equation}
Every $\mathcal T_{m,K}$ is Markov.  Its eventwise total variation
contraction says that, for
every $\delta\ge0$,
\begin{equation}
 \begin{aligned}
 &d_{\rm TV}\!\left(\mu_{K,K},\mathsf R_{r,N}\mu_{K,K}\right)\le\delta\\
 &\quad\Longrightarrow\quad
 d_{\rm TV}\!\left(\mu_{m,K},\mathsf R_{r,N}\mu_{m,K}\right)\le\delta .
 \end{aligned}
 \label{eq:hide-radial-TV}
\end{equation}
Thus the score calculation is needed only for the square COE base law; its
resulting single column probability bound, rather than a signed derivative
measure, is what is transported to every ambient dimension.}

\subsection{Centered COE score estimate}

The square COE corner is the base law for the radial transport.  Its density
and {beta prime} representation are classical; the new technical estimate is
explicit control of the \emph{averaged centered} score.  The
projector $Q_v=P_v-I/N$ cancels the {first order} orbital response, and the
remaining bounds of second and {third order} have precisely the sizes needed for
a summable ambient step.  We have not found these simultaneous orbital and
scalar {first order} cancellations in earlier {Haar product} hiding arguments.
The density and boundary calculation is in
Appendix~\ref{app:coe-density}; the projective tensor contractions and moment
calculations are in Appendix~\ref{app:projective-scores}.

For a Borel set $E\subset\C^{N\times N}$ define the central and centered orbital
event paths at the square base law by
\begin{equation}
 \begin{aligned}
 S_E(t)&:=((\rho_{e^{tI}})_\#\mu_{K,K})(E),\\
 O_E(s)&:=(\mathcal K_s\mu_{K,K})(E).
 \end{aligned}
 \label{eq:hide-event-paths}
\end{equation}

\begin{lemma}[Uniform event score bounds]
\label{lem:hide-scores}
For $N\ge1$, $K\ge16N$, every Borel $E$, every $t\in\R$, and every
$\abs{s}\le1/N$, the paths $S_E$ and $O_E$ are respectively $C^2$ and
$C^4$, and
\begin{gather}
 \abs{S_E'(0)}\le 5N,\qquad
 \abs{S_E''(t)}\le 44N^2,
 \label{eq:hide-scale-score}\\[2pt]
 O_E'(0)=0,\qquad
 \abs{O_E''(0)}\le573,\qquad
 \abs{O_E'''(s)}\le306840N.
 \label{eq:hide-orbital-score}
\end{gather}
\end{lemma}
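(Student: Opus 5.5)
The plan is to write $\mu_{K,K}$ as the law of $X=\sqrt K\,C_{N,K}$ and to use an explicit density $f$ for it on $\Sym_N(\C)$, with respect to Lebesgue measure on the $N(N+1)$ real coordinates. For $K\ge 16N$ the COE corner is a truncation whose density is a power of $\det(I-\bar C C)$ on the Takagi unit ball, up to normalization. After rescaling we get $f(A)\propto\det(I-\bar AA/K)^{\alpha}$ with $\alpha=(K-2N-1)/2$ or a similar value; I will verify the exponent through the beta-prime representation.

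Every event path is then an integral of $f$ pulled back under a congruence. Let $J_g$ be the real Jacobian of $A\mapsto gAg^{\T}$ on $\Sym_N(\C)$, namely $|\det g|^{2(N+1)}$. Then
\[
((\rho_g)_\#\mu)(E)=\int_E f(g^{-1}Ag^{-\T})\,J_{g^{-1}}\,dA .
\]
Hence $S_E^{(j)}(t)=\int_E\partial_t^j[\ldots]\,dA$, and each derivative is bounded by the $L^1$ norm of the density derivative, uniformly in $E$. The $C^2$ and $C^4$ regularity comes from differentiating under the integral sign. Near the boundary of the ball I will control this by dominated convergence, using that $\alpha$ is large, so the density and its first four derivatives vanish on the boundary. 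This is the content deferred to Appendix~\ref{app:coe-density}.

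For the scalar path, $g=e^{tI}$. Writing $h=\log f$, the score is
\[
\partial_t\log\bigl[f(e^{-2t}A)e^{-2tN(N+1)}\bigr]=-2\langle\nabla h,A\rangle-2N(N+1),
\]
which is a polynomial in $\tr(\bar AA(K-\bar AA)^{-1})$. Then
\[
|S_E'(0)|\le\E|\text{score}|\le\sqrt{\operatorname{Var}(\text{score})},
\]
since the score has mean zero. That variance is computed from second moments of $\tr(\bar XX)$ and of the resolvent traces under the matrix-beta law of $\bar XX/K$. These moments are explicit rational functions of $K$ and $N$, and for $K\ge16N$ they give the bound $O(N)$, with constant $5$. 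Applying the same approach to $S_E''(t)$ (second score plus squared score) gives $O(N^2)$; there is no centering gain here, which is why this bound carries the full power of $N$.

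For the orbital path, first-order differentiation in $s$ gives $O_E'(0)=\E_v\int_E \langle \text{score field},Q_v\rangle f$. This expression is linear in $P_v-I/N$, and $\E_vP_v=I/N$, so it vanishes identically. The second derivative is a quadratic form in $Q_v$. I will average it with the projective moments $\E_v P_v^{\otimes 2}=(I+\mathrm{SWAP})/(N(N+1))$. The centering then kills the leading $N$-dependent parts and leaves an $O(1)$ quantity, which I bound by Cauchy--Schwarz together with the same beta moments. The third derivative for $|s|\le1/N$ uses $\E_vP_v^{\otimes3}$ and a uniform bound on $\|e^{sQ_v}\|$, at most $e^{1/N}$ distortion, and yields a crude $O(N)$ bound.

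The main obstacle is the second-order orbital estimate. It requires exact tensor contractions of the Hessian of $h$ against the centered projective moments, keeping track of cancellations between the $I/N$ shifts and the resolvent terms. I would first carry this out in the Takagi radial coordinates, using unitary congruence invariance, so that everything reduces to symmetric functions of the singular values. Then I would bound the remaining resolvent moments through the matrix-beta Selberg-type formulas.
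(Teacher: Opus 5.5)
Your density, regularity, scalar-path, $O_E'(0)=0$ and second-order orbital arguments match the paper in substance. The paper likewise collects the averaged quadratic density, centers it using zero total mass, and bounds it by the standard deviations of $T_1$, $T_1^2$ and $T_2$.

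The gap is the third-order bound on $|s|\le1/N$. For fixed $v$, the flow group identity and the unit Jacobian of the centered congruence give
\[
 O_{v,E}'''(s)=\int\mathbf 1_E\bigl(e^{sQ_v}Ae^{sQ_v^{\T}}\bigr)\,
 L_{3,Q_v}(A)\,\mu_{K,K}(\dd A),\qquad
 L_{3,Q_v}=\ell_1^3+3\ell_1\ell_2+\ell_3 .
\]
For $s\ne0$ the pulled-back event depends on $v$. So the identity for $\E_vP_v^{\otimes3}$ cannot be applied to $L_{3,Q_v}$ alone; it gives the averaged cubic density only at $s=0$. Without that averaging there is no $O(N)$ bound. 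The fixed-direction information is $\E\ell_1^2=4\,\E(T_2-T_1^2/N)/(N(N+1))\asymp N$. Heuristically $\ell_1$ is nearly Gaussian and $\ell_2$ concentrates near $-\E\ell_1^2$. Then $\ell_1^3+3\ell_1\ell_2$ behaves like $(\E\ell_1^2)^{3/2}(Z^3-3Z)$, so $\E_v\|L_{3,Q_v}\|_{L^1}$ is of order $N^{3/2}$.

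The extra $N^{1/2}$ is fatal downstream. In Proposition~\ref{prop:hide-one-step} it is multiplied by $\E|b_m|^3\asymp N^3/m^3$, which no longer gives $O(N^2/m^2)$ when $m\asymp N^2$. The bound $\|e^{sQ_v}\|\le e^{1/N}$ does not repair this. It controls neither the difference between $\mathbf 1_E$ and its $v$-dependent pullback, nor the likelihood ratio $\Lambda_{s,Q_v}$, which is not uniformly bounded near the boundary of the matrix ball.

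The paper closes this with two ingredients your plan lacks.
\begin{itemize}
\item \emph{At the origin:} it writes the averaged cubic density as $\gamma_{N,K}\tr(Y^3)+R_{N,K}$ with $\gamma_{N,K}>0$. Since $\tr(Y^3)\ge0$ and the density has zero total mass, $\|\overline L_{3,N,K}\|_1\le2\|R_{N,K}\|_1\le3934N$. No moment of $\tr(Y^3)$ is needed, whereas a Cauchy--Schwarz bound as in your second-order step would need $\E(\tr Y^3)^2$.
\item \emph{Away from the origin:} it bounds the \emph{fourth} derivative uniformly in time by $302906N^2$. This uses the unaveraged fourth density score $\mathsf B_4(\ell_1,\ell_2,\ell_3,\ell_4)$, the sign $\ell_4\le0$ from the sum-of-squares identity for $\operatorname{Re}\tr\mathcal J_4$, zero total mass, and the moment bounds on $\ell_1^4$, $\ell_2^2$ and $|\ell_1\ell_3|$. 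The flow group identity then transports the bound to every base time.
\end{itemize}
At fourth order the loss from not averaging is affordable, since $N^2|s|\le N$. Integrating gives $|O_E'''(s)|\le3934N+302906N^2|s|\le306840N$.

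So the second-order estimate you single out as the main obstacle is comparatively routine. The missing key lemma is the fourth-order score estimate, together with the positivity trick for the cubic density. Also, your plan produces only $O(\cdot)$ bounds beyond the constant $5$, whereas the statement asserts the explicit constants $44$, $573$ and $306840$.
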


\begin{proof}
Appendix~\ref{app:coe-density} proves the required regularity.  Appendix
~\ref{app:projective-scores}, Sec.~\ref{subsec:hide-scores-direct-proof},
derives all five evaluated bounds; Fig.~\ref{fig:event-score-flow} shows
how its density scores, inverse Wishart moments, and projective
averages fit together.  The finite coefficient identities and closing
inequalities follow by exact rational arithmetic from the four cited
mathematical results A1 through A4 stated in
Appendix~\ref{app:external-inputs}.
\end{proof}

\begin{figure*}[!tp]
\centering
\includegraphics[width=\textwidth]{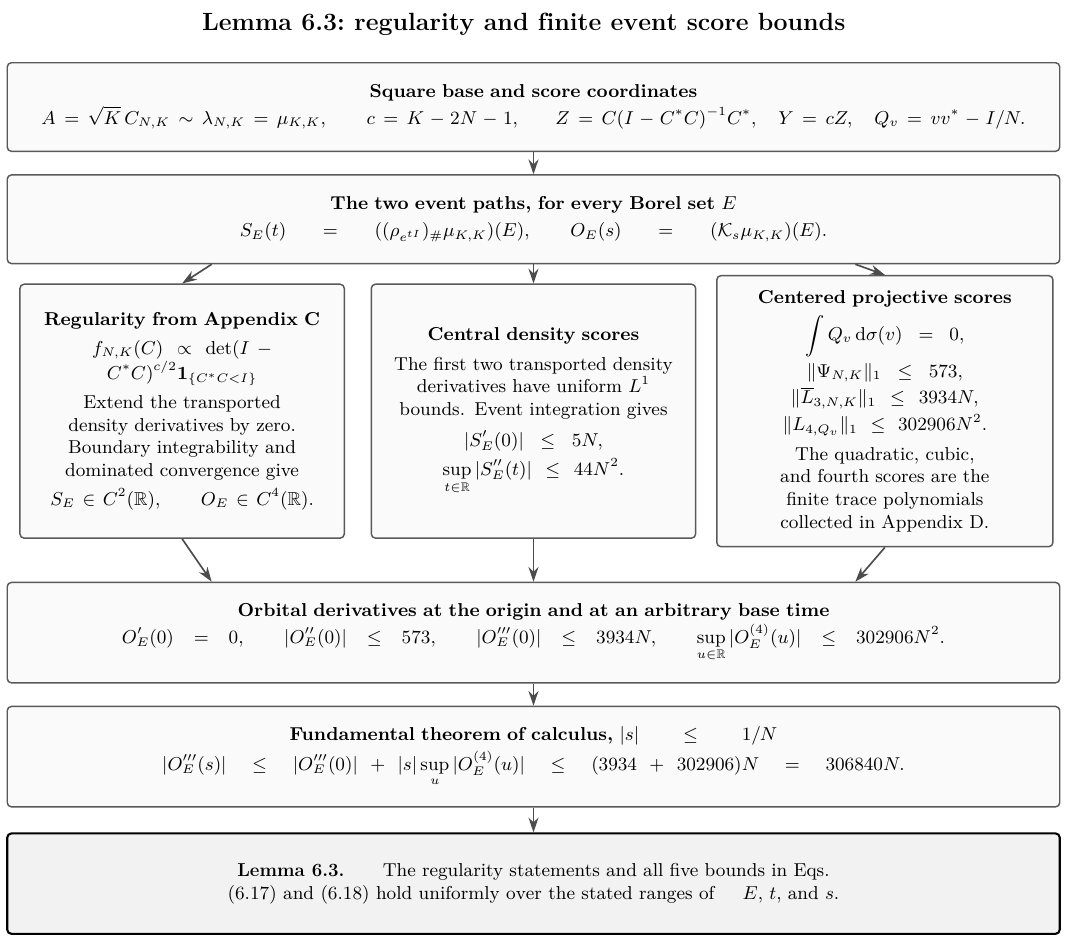}
\caption{The eventwise score bounds of Lemma~\ref{lem:hide-scores}, for
$K\ge16N$. Central density scores give the scalar estimates, while
centered projective contractions give the orbital estimates. The bound
away from the origin follows by integrating the fourth derivative over
$|s|\le1/N$.}
\label{fig:event-score-flow}
\end{figure*}

For a fixed beta sample in the single column update, the central step merely
replaces $E$ in the orbital path by its measurable inverse image.  Hence the
uniform event bounds in Lemma~\ref{lem:hide-scores} apply to the correlated
scalar and orbital path using the same beta sample and give the square base
one column probability estimate.  By Eqs.~\eqref{eq:hide-radial} and
\eqref{eq:hide-radial-one-column-commute}, the two ambient laws are the images
of the two square base laws under the same Markov operator
$\mathcal T_{m,K}$.  Total variation contraction for that operator gives
Eq.~\eqref{eq:hide-radial-TV} and transports the completed probability
estimate to $\mu_{m,K}$.

\subsection{One step estimate and dense telescope}

Two first order cancellations drive the new one step estimate.  Projective
centering gives $\E_vQ_v=0$ and hence $O_E'(0)=0$ for every Borel $E$, while
the scalar beta cumulants center $c_m=a_m+b_m/N$ so that its mean is of order
$N/m^2$ rather than $1/m$.  The Taylor expansion and final telescope are standard
replacement devices, analogous in spirit to a Lindeberg telescope
~\cite{Chatterjee2006}, once these cancellations and the averaged score bounds
are available.

Split the exponent in Eq.~\eqref{eq:hide-recursion} as
\begin{equation}
 a_mI+b_mP_v=c_mI+b_mQ_v,\qquad c_m=a_m+b_m/N.
 \label{eq:hide-split}
\end{equation}
The exact {log beta} cumulants and the scalar and orbital Taylor
remainders, and the {exceptional event} tail are evaluated in
Appendix~\ref{app:projective-scores}.  Together with
Lemma~\ref{lem:hide-scores}, those calculations give the following {one step}
statement, including its summable remainder.

The estimates enter with fixed coefficients, so the result can be stated
with the universal constant $C_\ast$ rather than asymptotic notation.
\begin{proposition}[Dense {one step} estimate]
\label{prop:hide-one-step}
For $N\ge1$, $K\ge16N$, and $m\ge\max\{K,24N^2\}$,
\begin{equation}
 d_{\rm TV}(\mu_{m+1,K},\mu_{m,K})
 \le C_\ast\frac{N^2}{m(m+1)}.
 \label{eq:hide-one-step}
\end{equation}
\end{proposition}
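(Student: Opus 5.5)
By Lemma~\ref{lem:hide-recursion}, $\mu_{m+1,K}=\mathsf R_{m,N}\mu_{m,K}$. By Eq.~\eqref{eq:hide-radial-TV}, it therefore suffices to prove the one step bound for the square base law. That is, we aim to show
\[
 d_{\rm TV}(\mu_{K,K},\mathsf R_{m,N}\mu_{K,K})\le C_\ast\frac{N^2}{m(m+1)},
\]
where $m\ge\max\{K,24N^2\}$ now only sets the beta parameters of $q_m$. Fix a Borel set $E$. Using the split Eq.~\eqref{eq:hide-split}, and since $I$ commutes with $Q_v$, we have $\exp(c_mI+b_mQ_v)=e^{c_m}\exp(b_mQ_v)$. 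Hence
\[
 (\mathsf R_{m,N}\mu_{K,K})(E)=\E_{q_m}\bigl[\,\E_v((\rho_{e^{c_mI}}\rho_{\exp(b_mQ_v)})_\#\mu_{K,K})(E)\bigr].
\]
For a fixed beta sample, the inner quantity is an orbital path of the form $O_{E'}(b_m)$, where $E'=\rho_{e^{-c_mI}}(E)$ is again Borel.

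We then expand in two stages. First compare with the central path alone. Because Lemma~\ref{lem:hide-scores} is uniform over Borel sets, Taylor's theorem gives, on the good event $|b_m|\le1/N$,
\[
 O_{E'}(b_m)-O_{E'}(0)=\tfrac12 O_{E'}''(0)b_m^2+R,\qquad |R|\le\tfrac16\cdot306840N|b_m|^3.
\]
The first order term vanishes since $O'_{E'}(0)=0$. Next, $O_{E'}(0)=S_E(c_m)$. A second Taylor expansion gives
\[
 S_E(c_m)-S_E(0)=S_E'(0)c_m+\tfrac12 S_E''(\theta)c_m^2,
\]
so its contribution is bounded by $5N|\E c_m|+22N^2\E c_m^2$.

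It remains to estimate the log beta cumulants of $q_m\sim\operatorname{Beta}(m-N+1,N)$. Its mean $1-q_m$ is of order $N/m$, so $b_m=\tfrac12\log q_m$ has $\E b_m^2\lesssim N^2/m^2$ and $\E|b_m|^3\lesssim N^3/m^3$. The decisive point is centering: $\E c_m=a_m+\E b_m/N$. By the digamma identity $\E\log q_m=\psi(m-N+1)-\psi(m+1)$, we get $\E b_m/N=-\tfrac1{2m}+O(N/m^2)$, which cancels against $a_m=\tfrac1{2m}+O(m^{-2})$. This leaves $|\E c_m|=O(1/m^2)$, and so $N|\E c_m|=O(N/m^2)$. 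Likewise $\E c_m^2=O(1/m^2)$, giving $N^2\E c_m^2=O(N^2/m^2)$. The orbital second order term gives $573\,\E b_m^2=O(N^2/m^2)$. The cubic remainder gives $N\cdot N^3/m^3=N^2\cdot N^2/m^3\le N^2/(24m^2)$, using $m\ge24N^2$; this is exactly where that hypothesis enters.

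The exceptional event $\{|b_m|>1/N\}$, i.e.\ $q_m<e^{-2/N}$, requires $1-q_m\gtrsim1/N$ while its mean is $N/m$. A Chernoff or high-moment beta tail bound, valid for $m\ge24N^2$, makes its probability $O(N^2/m^2)$. On that event we simply bound the event difference by one. Summing all contributions, replacing $m^2$ by $m(m+1)/2$, and taking the supremum over $E$ gives the bound with some explicit constant. Tracking the rational constants then checks that the total is at most $C_\ast=615172$.

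I expect the main obstacle to be the exact scalar cancellation $\E c_m=O(m^{-2})$ with an \emph{explicit} constant uniform in $N\le\sqrt{m/24}$, together with the exceptional beta tail. Both need sharp, non-asymptotic digamma and beta bounds rather than Stirling asymptotics. I would use $\psi(x+1)-\psi(x)=1/x$, which gives $\E\log q_m=-\sum_{j=m-N+1}^{m}1/j$, and then bound this sum against $N/m$ by elementary integral comparisons.
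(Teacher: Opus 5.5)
Your proposal is correct and is essentially the paper's proof. The shared steps are: radial total variation contraction to the square COE base; one common beta sample driving a second-order scalar Taylor step that keeps the signed mean $\E c_m$, together with a third-order centered orbital Taylor step on $\{\abs{b_m}\le1/N\}$; harmonic-sum bounds for the log-beta moments; and a high-moment beta tail for the exceptional event.

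One small correction. The digamma cancellation leaves $\abs{\E c_m}\approx N/(4m^2)$, not $O(1/m^2)$; the paper uses $\abs{\E c_m}\le N/m^2$. The linear scalar term is therefore $O(N^2/m^2)$, which is still within the target. Also, the cubic term needs only $m\ge N^2$. The hypothesis $m\ge24N^2$ is what makes the exceptional tail $(6N^2/m)^{N+2}$ at most $72N^2/[m(m+1)]$.
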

\begin{proof}
The complete eventwise proof is given in
Appendix~\ref{app:projective-scores},
Sec.~\ref{subsec:hide-one-step-direct-proof}.  It shows explicitly how
Lemma~\ref{lem:hide-scores} controls the scalar Taylor term and the centered
orbital Taylor term while retaining the same beta sample, and then combines
those terms with the exceptional event estimate and radial total variation
contraction.  Figure~\ref{fig:one-step-taylor-flow} displays this
three term decomposition.
\end{proof}

\begin{figure*}[!tp]
\centering
\includegraphics[width=\textwidth]{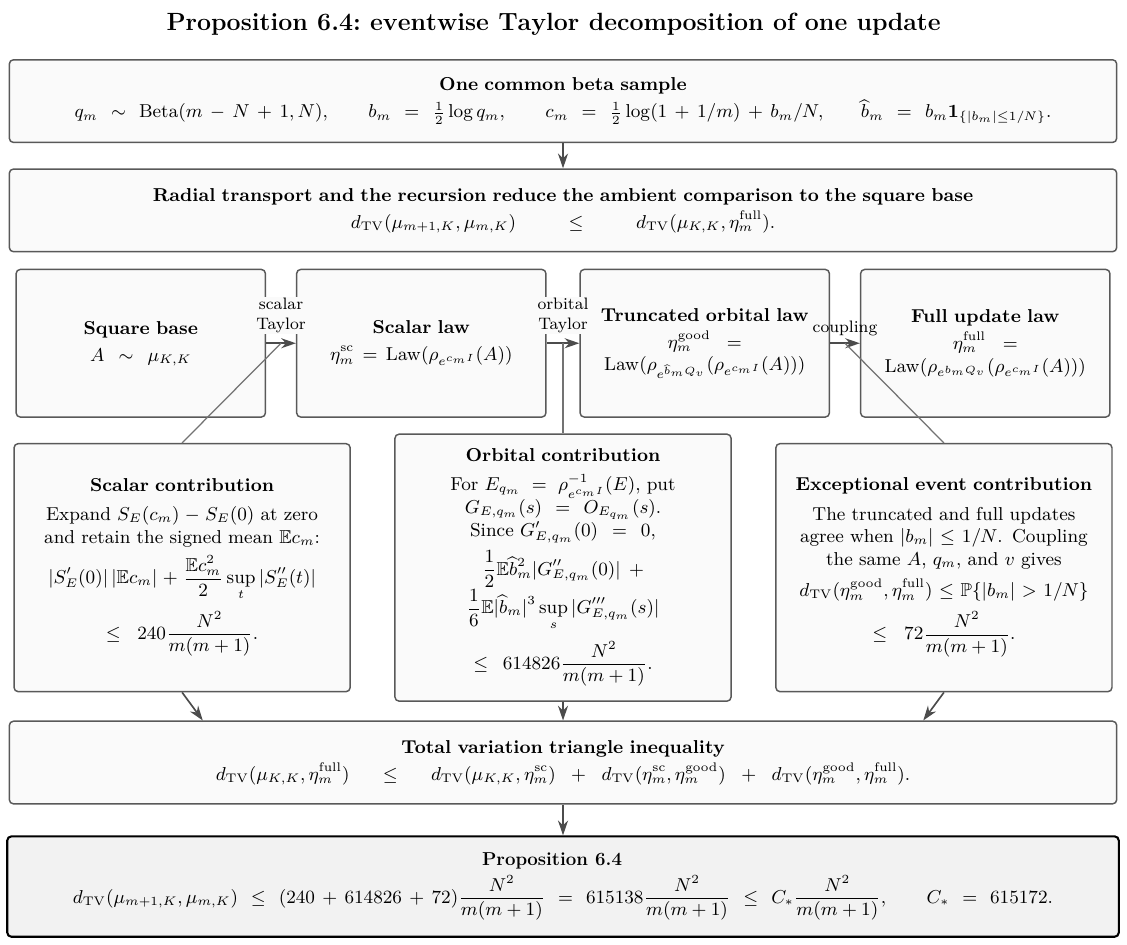}
\caption{The one-column comparison in Proposition~\ref{prop:hide-one-step}.
The same beta sample drives the scalar and centered orbital increments.
The scalar, orbital, and exceptional-event contributions sum to
$615138N^2/[m(m+1)]$, which is bounded by the stated estimate with
$C_\ast=615172$.}
\label{fig:one-step-taylor-flow}
\end{figure*}
The passage to the Gaussian target can be kept finite.  For $R\ge M$, the
triangle inequality gives
\[
 \begin{aligned}
 d_{\rm TV}(\mu_{M,K},\nu_K)
 &\le \sum_{m=M}^{R-1}
 d_{\rm TV}(\mu_{m+1,K},\mu_{m,K})\\
 &\quad+d_{\rm TV}(\mu_{R,K},\nu_K).
 \end{aligned}
\]
Given $\varepsilon>0$, choose $R$ from
Eq.~\eqref{eq:hide-target-convergence} so that the last term is at most
$\varepsilon$.  Since
$\sum_{m=M}^{R-1}[m(m+1)]^{-1}=M^{-1}-R^{-1}$,
Eq.~\eqref{eq:hide-one-step} gives the dense estimate after letting
$\varepsilon\downarrow0$:
\begin{equation}
 d_{\rm TV}(\mu_{M,K},\nu_K)\le C_\ast\frac{N^2}{M}
 \label{eq:dense-product-hiding}
\end{equation}
whenever the dense branch hypotheses above hold.

\subsection{Finite rectangular entropy branch}

The complementary branch compares the whole rectangular block before taking
its transpose Gram product.  {The required Gaussian
approximation for strict size Haar blocks is known
\cite{Jiang2009HaarBlocks}, but the cited result does not provide an explicit
universal constant for the finite upper bound needed here.  We therefore
derive that constant directly from the exact block density.  An elementary
second order bound for the logarithmic determinant, together with the exact
first and second Haar Gram moments, yields the stated relative entropy
estimate.  Pinsker's inequality converts this estimate to total variation,
and data processing under the transpose Gram map gives the matrix product
comparison.}

\begin{lemma}[Finite rectangular comparison]
\label{lem:hide-sparse}
For probability laws $P\ll Q$, use the oriented convention
$D_{\rm KL}(P\Vert Q)=\int\log(\dd P/\dd Q)\,\dd P$.
For random matrices $X,Y$, the abbreviation $D_{\rm KL}(X\Vert Y)$ below
means $D_{\rm KL}(\Law(X)\Vert\Law(Y))$.
Suppose
\begin{equation}
 1\le q\le p,\qquad p+q<M.
 \label{eq:hide-sparse-hyp}
\end{equation}
If $U_{p,q}$ is an upper $p$ by $q$ block of a Haar unitary and $G_{p,q}$ is
standard complex Gaussian, then
\begin{equation}
 D_{\rm KL}(\sqrt M U_{p,q}\Vert G_{p,q})
 \le\frac{3pq(p+q)^2}{4M^2},
 \label{eq:hide-sparse-KL}
\end{equation}
and
\begin{equation}
 d_{\rm TV}(\sqrt M U_{p,q},G_{p,q})
 \le\frac{(p+q)\sqrt{pq}}{M}.
 \label{eq:hide-finite-sparse}
\end{equation}
Ordinary transposition transports the estimate to the corresponding wide
$q$ by $p$ Haar and Gaussian blocks; {applying the measurable map
$Y\mapsto YY^{\T}$ to those wide blocks does not increase the total variation distance.}
\end{lemma}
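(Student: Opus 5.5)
The plan is to compute the relative entropy directly from the exact density of a truncated Haar block, convert it to total variation with Pinsker's inequality, and then push the result through transposition and the transpose Gram map by data processing.

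\textbf{Density of the block.} Under $p+q<M$, the law of $X=U_{p,q}$ has a density on $\C^{p\times q}$. Working with the $q\times q$ Gram matrix $X^*X$ (of size $q\le p$), this density is
\[
 c_{M,p,q}\det(I_q-X^*X)^{M-p-q}\,\1\{X^*X<I_q\}
\]
(Zyczkowski--Sommers / Collins form). The support condition has Gaussian measure less than one, but $P\ll Q$ is in the correct direction, so the relative entropy is finite. Write $Y=\sqrt M X$, so that $\log(\dd P/\dd Q)(Y)$ equals
\[
 \log c'+(M-p-q)\log\det(I-Y^*Y/M)+\operatorname{tr}(Y^*Y),
\]
with $c'$ explicit: the ratio of products of Gamma functions against $\pi^{-pq}$ and the Jacobian $M^{-pq}$.

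\textbf{Bounding the relative entropy.} The relative entropy is the expectation of this expression under $P$. Set $\lambda_i\in[0,1)$ to be the eigenvalues of $X^*X$. I will use the elementary bounds
\[
 -x-\frac{x^2}{2(1-x)}\le\log(1-x)\le -x-\frac{x^2}{2}.
\]
Only an upper bound on $\E_P\log(\dd P/\dd Q)$ is needed, so the upper bound on $\log(1-x)$ suffices:
\[
 D\le \log c'-(M-p-q)\sum\lambda_i-\frac{M-p-q}{2}\sum\lambda_i^2+M\sum\lambda_i .
\]
The exact Haar moments are $\E\operatorname{tr}X^*X=pq/M$ and
\[
 \E\operatorname{tr}(X^*X)^2=\frac{pq(p+q)M-pq(\dots)}{M(M^2-1)},
\]
which is the Weingarten second moment. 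The linear terms then contribute $(p+q)pq/M$ and the quadratic term contributes about $-\tfrac{pq(p+q)}{2M}$.

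\textbf{Main obstacle: the normalizing constant.} The key step is an expansion of
\[
 \log c'=\sum_{j}\Bigl[\log\Gamma(M-j)-\log\Gamma(M-p-j)-p\log M\Bigr]
\]
that is accurate to second order. Using
\[
 \log\frac{\Gamma(M-j)}{\Gamma(M-j-p)M^{p}}=\sum_{i=1}^{p}\log\Bigl(1-\frac{j+i}{M}\Bigr)
\]
and again bounding $\log(1-x)\le -x-x^2/2$ with explicit remainder control, the first-order pieces must cancel exactly against the linear and quadratic moment contributions above. What survives is a sum of cubic-type remainders of order $\sum_{i,j}(i+j)^2/M^2\lesssim pq(p+q)^2/M^2$. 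Tracking the constants so that they come out at most $3/4$ requires care with the exact moment formulas, and this is where I would do the rational bookkeeping. If the naive remainder bound loses a factor, I would instead bound the remainder via $x^3/(3(1-x))$ using $p+q<M$.

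\textbf{Conversion and transport.} Pinsker's inequality gives
\[
 d_{\rm TV}\le\sqrt{D/2}\le\sqrt{3/8}\,\frac{(p+q)\sqrt{pq}}{M}\le\frac{(p+q)\sqrt{pq}}{M}.
\]
Transposition is a bijection, so it preserves total variation and transports the estimate to the wide $q\times p$ blocks, because $U^{\T}$ is again Haar. Finally, the measurable map $Y\mapsto YY^{\T}$ cannot increase total variation, since the preimage of any event is an event, which gives the last claim of the lemma.
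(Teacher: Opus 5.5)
Your proposal is correct and takes essentially the paper's route: the exact block density ratio, $\log(1-x)\le -x-x^2/2$ on the log-determinant, the exact Weingarten values of $\E\tr W$ and $\E\tr W^2$ for $W=Z^*Z$, then Pinsker, transposition, and data processing. The step you flag as the main obstacle is easier than you expect. The paper bounds the normalizer only to first order, $L_0\le -pqs/(2M)$ with $s=p+q$, and the exact lower bound $\E\tr W^2\ge pq(s-s^2/(2M))$ then leaves exactly $\frac{pqs^2}{4M^2}(3-s/M)$. Your second-order expansion of the normalizer also closes, with a smaller constant, so no $x^3/(3(1-x))$ remainder is needed.
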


\begin{proof}
Appendix~\ref{app:rectangular-kl} starts from the exact density ratio
in Eq.~\eqref{eq:rectangular-rn-density} and the entropy identity
in Eq.~\eqref{eq:rectangular-kl-identity}.  The logarithmic determinant bound
and the two Haar Gram moments give the cancellation in
Eq.~\eqref{eq:rectangular-kl-cancellation}; Pinsker's inequality and data
processing then give the total variation and transpose Gram conclusions.
\end{proof}

\subsection{Proof of Theorem~\ref{thm:uniform-product-hiding}}
\label{subsec:uniform-theorem-proof}

\begin{proof}

Assume $1\le N\le M$ and $1\le K\le M$, as in the theorem.
The dense calculation has
coefficient $615138$. The retained common constant
$C_\ast=615172$ is larger than this dense coefficient, the rectangular
coefficients $68$ and $2$, and the trivial-range threshold $24$.
If $M<24N^2$, the right hand side of
Eq.~\eqref{eq:uniform-product-hiding} is one because the chosen value of
{$C_\ast$} dominates the fixed threshold, so the universal bound
$d_{\rm TV}\le1$ proves the claim.

Now suppose $M\ge24N^2$. If $K<N$, put $p=N$ and $q=K$.
Then $1\le q\le p$ and $N+K<2N\le24N^2\le M$, so
Lemma~\ref{lem:hide-sparse} applies. Data processing under
$Y\mapsto K^{-1/2}YY^{\T}$ gives
\begin{align*}
 d_{\rm TV}(\mu_{M,K},\nu_K)
 &\le\frac{(N+K)\sqrt{NK}}{M}\\
 &\le 2\frac{N^2}{M}\le C_\ast\frac{N^2}{M}.
\end{align*}
Here $\sqrt{NK}\le N$ because $K<N$.

In the dense branch $K\ge16N$, every integer
$m\ge M$ satisfies $m\ge\max\{K,24N^2\}$.  Thus
Proposition~\ref{prop:hide-one-step} applies to every term of the ambient
telescope and supplies Eq.~\eqref{eq:hide-one-step}.  Combining those terms
with the target convergence in Eq.~\eqref{eq:hide-target-convergence} gives
the dense estimate in Eq.~\eqref{eq:dense-product-hiding}.  Its assumptions
are therefore exactly $1\le N\le K\le M$, $K\ge16N$, and $M\ge24N^2$.

In the remaining branch $N\le K<16N$, put $p=K$ and $q=N$.
The theorem assumptions give $1\le q\le p$.  Moreover,
$K+N<17N\le24N^2\le M$, so $p+q<M$.  These are precisely the dimensional
hypotheses of Lemma~\ref{lem:hide-sparse}.  Its total variation estimate,
followed by transposition and the transpose Gram data processing map, gives
\begin{equation}
 d_{\rm TV}(\mu_{M,K},\nu_K)
 \le\frac{(K+N)\sqrt{KN}}{M}
 \le 68\frac{N^2}{M}.
 \label{eq:hide-sparse-stitch}
\end{equation}
Since $2,68<C_\ast$, the $K<N$ estimate, the dense estimate in
Eq.~\eqref{eq:dense-product-hiding}, and the rectangular estimate in
Eq.~\eqref{eq:hide-sparse-stitch} together give
$d_{\rm TV}(\mu_{M,K},\nu_K)\le C_\ast N^2/M$.  Combining this estimate with
the universal probability bound $d_{\rm TV}\le1$ gives exactly the minimum in
Eq.~\eqref{eq:uniform-product-hiding}; when $M<24N^2$, that minimum is one and
the same universal bound applies.  Finally, the measurable bijection
$A\mapsto\sqrt K\,A$ sends the two normalized laws to the two laws in
Eq.~\eqref{eq:uniform-product-hiding-unscaled}.  Total variation is invariant
under a common measurable bijection, so the unscaled statement follows.
\end{proof}

The proof reveals why the uniform result is possible.  The sparse branch
approximates the rectangular Haar block itself and gives a uniform
$O(N^2/M)$ bound throughout $K<16N$. The dense branch works after $\Upsilon_K$, where
Eq.~\eqref{eq:hide-covariance} turns the ambient change into a congruence
perturbation.  The centered projective score bounds then make the one column
error summable, and the final rate loses all dependence on $K$.
\endgroup
{Further applications to measurable observables, jointly preselected patterns, and order statistics are in Appendix~\ref{sec:applications}.}
\begingroup\section{Relation to earlier results and remaining questions}
\label{sec:discussion}
Theorem~\ref{thm:uniform-product-hiding} and
Corollary~\ref{cor:all-input-hiding} resolve Conjecture~1 (Formal) of
Ehrenberg et al., Supplemental Eq.~(S62), with the conjectured
$m\ge n^2/\delta$ scaling and all $1\le k\le m$
\cite{EhrenbergEtAl2025Transition}.
Shou et al.\ prove hiding in a broader reduction framework, including a
product-law comparison and approximate generation of prescribed instances
\cite[Theorem~1.1 and Remark~1.1]{ShouEtAl2026ArbitrarySqueezers}.
Our finite estimate sharpens the sufficient ambient-size dependence for
the product law. The two-route comparison explains why retaining the finite
Gaussian factor can be useful before the independent symmetric limit is
quantitatively controlled. The statements are complementary, not a claim
that one method dominates in every parameter regime.

Three questions remain particularly relevant. First, can an efficient
conditional instance-generation procedure be obtained with the same
$M=O(N^2/\delta)$ error scale? Distributional closeness does not itself
answer that algorithmic question. Second, what is the actual total
variation behavior of the independent symmetric reference when
$K$ is between $N^2/\log N$ and $N^2$? The nonvanishing envelope used here
is not a matching lower bound. Third, can local anticoncentration remain
polynomial for substantially smaller $K$? The companion proves a sufficient
$K\gtrsim N^2/\log N$ regime, not its necessity.

The average-case complexity of the resulting hafnian estimation problem
also remains separate. The original GBS proposals distinguish approximate
estimation assumptions from their analytic lower-tail inputs
\cite{HamiltonEtAl2017GBS,KruseEtAl2019GBS}; related formulations can
place the remaining conjecture directly at additive accuracy
\cite{DeshpandeEtAl2022,ShouEtAl2026ArbitrarySqueezers}.
Neither the hiding theorem nor the two-route comparison is a proof of
average-case $\#\mathrm P$ hardness or of quantum advantage for a noisy
experiment. Unequal squeezing, losses, displacements, and labels sampled
adaptively from the circuit require additional analysis. Classical
simulation remains a separate benchmark
\cite{QuesadaArrazola2020ExactSimulation,QuesadaEtAl2022QuadraticSpeedup,BulmerEtAl2022Boundary}.

The companion structure follows the mathematics. The present paper proves
finite Haar replacement and develops its optical and estimation
consequences. The companion proves the Gaussian local theorems from
Fourier cofactor compression and a preserved-coordinate Wishart identity.
Together they give the explicit finite-interferometer lower tails in
Theorem~\ref{thm:two-haar-tails}, with no appeal to an unproved
anticoncentration conjecture in the stated parameter range.
\endgroup

\begingroup
\section*{Author contributions and use of artificial intelligence}
Hongru Zhao is responsible for the mathematical content, its presentation,
and the accompanying source materials. OpenAI's GPT-5.6 Sol and GPT-6 Astra Ultra models assisted with
literature searches, manuscript restructuring and editing, algebraic checks, figure generation,
and preparation of LaTeX and Lean source materials.

\section*{Data and code availability}
\begingroup
No experimental data were created or analyzed. Verification of
Theorem~\ref{thm:uniform-product-hiding},
Corollary~\ref{cor:all-input-hiding}, and Route~1 of
Theorem~\ref{thm:fair-comparison} is maintained in the GitHub
repository~\cite{Zhao2026UniformHidingGitHub}.
Route~1 uses the companion anticoncentration
formalization~\cite{Zhao2026ComplexGramHafniansGitHub}.
The broader formalization and paper-to-code correspondence are archived
on Zenodo~\cite{Zhao2026UniformHidingLean}.
The hiding and Route~1 formal proofs depend on the four literature inputs in
Appendix~\ref{app:external-inputs}. Coverage of auxiliary results is
partial, and the Route~2 hiding input is not formalized. The accompanying
verification documentation specifies the exact statements covered.
\endgroup

\endgroup

\appendix
\begingroup\clearpage
\begingroup

\section{The four literature axioms used in the Lean formalization}
\label{app:external-inputs}

This appendix documents the four literature axioms A1 through A4 accepted
in the Lean formalization.  We import these mathematical results rather
than reprove them inside Lean.  The purpose is to justify precisely those
imported assumptions by comparison with the cited sources.
\textbf{Part (1) of each subsection is the exact mathematical translation
of the Lean axiom's hypotheses and conclusions in this paper's notation,
not Lean code or a quotation from the source.}
Parts (2)--(4) give, respectively, the source statement, the justification
for any differences, and the notation dictionary.  These source-to-axiom
justifications explain why the Lean assumptions are legitimate; they are
not presented as separately formalized proofs.  The four subsections
correspond to the four axioms.

All finite-dimensional spaces carry their Borel structures.  For a
measurable map $f$, $f_\#\mu$ denotes the pushforward measure,
$(f_\#\mu)(B)=\mu(f^{-1}(B))$.  A permutation invariant map on
$\mathbb R^r$ satisfies $F(x\circ\pi)=F(x)$ for every coordinate
permutation $\pi$.  Table~\ref{tab:external-component-audit} summarizes the
four inputs; their full quantified statements follow.

{\small
\setlength{\tabcolsep}{4pt}
\renewcommand{\arraystretch}{1.16}
\begin{table}[!ht]
\color{black}
\caption{Imported mathematical content and the source-to-axiom conversion.}
\label{tab:external-component-audit}
\centering
\begin{tabular}{@{}>{\raggedright\arraybackslash}p{.08\textwidth}>{\raggedright\arraybackslash}p{.27\textwidth}>{\raggedright\arraybackslash}p{.27\textwidth}>{\raggedright\arraybackslash}p{.32\textwidth}@{}}
\hline
Axiom & Imported conclusion & Cited source & Conversion explained below\\
\hline
A1 & Haar principal COE corner equals its normalized determinant density.
& Friedman--Mello, Eqs.~(1.2), (3.7), Appendix~1.
& Haar transpose convention; independent symmetric coordinates; normalization.\\
A2 & One positive finite radial constant, global spectrum measurability,
and all measurable symmetric-test pushforwards.
& FitzGerald--Warren, Sec.~6, p.~165; An--Wang--Yan, Thm.~4.2 and remark.
& Flat Jacobian to measurable integration; selected spectrum; arbitrary targets.\\
A3 & Global measurability of squared GSVD coordinates and their
beta Jacobi law under all measurable symmetric tests.
& Edelman--Sutton, Definition~1.1 and Proposition~1.2.
& Hermitian representative; values on singular samples; measurable selector.\\
A4 & Paired Wishart and inverse Wishart tensor moments for a supplied
Wishart probability law.
& Matsumoto, {arXiv v3,} Thm.~3; Eq.~(4.10), Lemma~5, and Eq.~(5.2).
& Laplace-transform law convention; inverse matching-Gram versus zonal
Weingarten coefficient.\\
\hline
\end{tabular}
\end{table}}

\subsection{COE principal block law (A1)}

\paragraph{(1) Exact mathematical translation of the Lean axiom.}
Let $N,K$ be integers with $N\ge1$ and $2N\le K$, and let $h_K$ be Haar
probability measure on $\mathrm U(K)$.  Set
\[
 C_{N,K}(U)=[UU^{\T}]_{1:N,1:N},\qquad
 \mathcal S_N=\{C\in\mathbb C^{N\times N}:C^{\T}=C\},\qquad
 \dd C=\prod_{i\le j}\dd\Re C_{ij}\,\dd\Im C_{ij}.
\]
Embed this coordinate measure in the full square-matrix space, supported
on $\mathcal S_N$, and define
\[
 w_{N,K}(C)=
 \begin{cases}
 \det(I_N-C^*C)^{(K-2N-1)/2},& I_N-C^*C>0,\\
 0,&\text{otherwise},
 \end{cases}
 \qquad Z_{N,K}=\int_{\mathcal S_N}w_{N,K}(C)\,\dd C.
\]
The imported law identity is
\begin{equation}
 (C_{N,K})_\#h_K
   =Z_{N,K}^{-1}w_{N,K}(C)\,\dd C.
 \label{eq:external-A1}
\end{equation}
The right-hand side is defined by normalizing the raw measure by its own
total mass.  No closed gamma or pi prefactor is part of the axiom, and no
separate differentiability or moment conclusion is imported.  On the
support, the determinant is positive and real, so its real power is exactly
the weight used in the formalization.  The use of $N,K$ here is a renaming
of the axiom's general block and ambient dimensions, not a restriction of
its scope.

\paragraph{(2) Statement in the cited source.}
Write $\mathcal U\in\mathrm U(n)$ for Friedman and Mello's Haar unitary,
to distinguish it from the axiom's $U$.  Their Eq.~(1.2) defines
$S=\mathcal U^{\T}\mathcal U$.  For its leading $m\times m$ block $s$, their
Eq.~(3.7) writes the marginal, in redundant matrix coordinates, as
\[
 p_0(s)\ \propto\ \delta(s-s^{\T})
              [\det(I_m-s^\dagger s)]^{(n-2m-1)/2}.
\]
The delta factor imposes complex symmetry.  The matrix-ball support comes
from the corner construction and the derivation; it is not printed as a
separate indicator in that display.  The stated range is $m\le n/2$;
Appendix~1 explicitly includes $n\ge2m$.  Thus the endpoint $n=2m$, with
exponent $-1/2$, is included
\cite[Eqs.~(1.2), (3.7) and Appendix~1]{FriedmanMello1985}.
The modern principal-corner density in
Ref.~\cite[Theorem~2.1 and Eq.~(2.1)]{ShouMillerGalitski2025} is
corroboration, not an additional input in A1.

\paragraph{(3) Difference from the source and its justification.}
There are two changes of representation.  First, the source has
$\mathcal U^{\T}\mathcal U$, whereas the axiom has $UU^{\T}$.
Transposition preserves Haar probability: for fixed $A\in\mathrm U(n)$,
$A\mathcal U^{\T}=(\mathcal U A^{\T})^{\T}$, so right invariance of the
law of $\mathcal U$ makes the law of $\mathcal U^{\T}$ left invariant.
With $n=K$, Haar uniqueness gives $\mathcal U^{\T}\law U$, hence
$\mathcal U^{\T}\mathcal U\law UU^{\T}$.  Taking the same principal block
preserves equality in law.  This is not a pointwise equality of the two
products.

Second, the axiom uses independent symmetric coordinates and their
embedding, rather than the source's redundant coordinates and symmetry
delta.  Removing the redundant coordinates changes only a fixed volume
factor, which disappears on normalization.  The source density therefore
has exactly the support and exponent in Eq.~\eqref{eq:external-A1}, and its
probability normalization gives $0<Z_{N,K}<\infty$.  These convention
conversions are included in the imported law identity; no stronger
regularity statement is being attributed to Friedman and Mello.

\paragraph{(4) Notation correspondence.}
The dimension and matrix dictionary is
\begin{equation}
 n_{\mathrm{source}}=K,\qquad
 m_{\mathrm{source}}=N,\qquad
 s_{\mathrm{source}}\law C_{N,K}.
 \label{eq:external-A1-dictionary}
\end{equation}
The source's $\dagger$ is this paper's $*$; both mean conjugate transpose.
Its $S=\mathcal U^{\T}\mathcal U$ corresponds in law to our $UU^{\T}$.  Its independent
corner-entry volume becomes $\dd C$, and its unspecified normalizer becomes
$Z_{N,K}^{-1}$.  With this dictionary the imported Lean law is the density
in Eq.~\eqref{eq:hide-coe-density}.  Boundary regularity and score estimates
are not part of A1.

\subsection{Flat Takagi and Weyl integration (A2)}

\paragraph{(1) Exact mathematical translation of the Lean axiom.}
For an integer $N\ge1$, use the Borel structures on Euclidean spaces and let
\[
 \begin{aligned}
 \mathcal M_N&=\C^{N\times N},\qquad
 \mathcal S_N=\{C\in\mathcal M_N:C^{\T}=C\},\\
 \dd C&=\prod_{1\le i\le j\le N}
       \dd\Re C_{ij}\,\dd\Im C_{ij}.
 \end{aligned}
\]
We regard $\dd C$ as a measure on $\mathcal M_N$ supported on
$\mathcal S_N$, by embedding these independent upper triangular coordinates
as a symmetric matrix.  On all of $\mathcal M_N$ define
\[
 \lambda(C)=1-\operatorname{eig}(I-C^*C),
 \qquad
 \Delta(\lambda)=\prod_{i<j}(\lambda_j-\lambda_i),
\]
where $\operatorname{eig}$ uses a fixed ordering of the Hermitian
eigenvalues, with the fixed reindexing used by the formalization.
The coordinates of $\lambda(C)$ are the squared singular values as a
multiset.  Put
\[
 \rho_N(\dd\lambda)=
 \1_{(0,\infty)^N}(\lambda)\abs{\Delta(\lambda)}\,\dd\lambda.
\]
Then $\lambda:\mathcal M_N\to\mathbb R^N$ is measurable, and there is
\emph{one} finite constant $c_N>0$, depending only on $N$, such that for
\emph{every} measurable space $\mathcal Y$ and every measurable map
$F:\mathbb R^N\to\mathcal Y$ satisfying
$F(\lambda\circ\pi)=F(\lambda)$ for every $\lambda\in\mathbb R^N$
and every coordinate permutation $\pi$,
\begin{equation}
 (F\circ\lambda)_\#(\dd C)=c_N F_\#\rho_N.
 \label{eq:external-A2prime}
\end{equation}
The quantifier for $c_N$ precedes those for $\mathcal Y$ and $F$.
No standard Borel assumption on $\mathcal Y$ or finite-mass assumption is
needed.  These are exactly the data asserted by A2 in the formalization:
a positive finite constant, measurability of the selected spectrum on all
square matrices, and the displayed equality for arbitrary measurable
permutation invariant tests.

\paragraph{(2) Statement in the cited source.}
FitzGerald and Warren display the flat Jacobian
\[
 \dd X\ \propto
 \prod_{i<j}|\lambda_i-\lambda_j|\,\dd\lambda\,\dd\Omega
\]
in Section~6, printed p.~165, immediately after Eq.~(70)
\cite[Sec.~6, p.~165]{FitzGeraldWarren2020}.
Their matrix is complex symmetric, their flat coordinates are its independent
complex entries, and their $\lambda_i$ are the eigenvalues of $X^*X$.
Thus the matrix space, coordinate volume, squared singular values, and
Vandermonde power agree with those above.  We use this geometric Jacobian,
not the subsequent density with Gaussian parameters.  The formula is
unnumbered; Eq.~(70) locates it but is not itself the Jacobian.
For the general integration theorem and its measurable-integrand form we
also use An, Wang, and Yan, Theorem~4.2 and its following remark, printed
p.~13 \cite[Thm.~4.2 and following remark]{AnWangYan2006}.
Their notation is $G$ for the group, $X$ for the integration manifold,
$Y$ for its closed section, and $K$ for the section's common stabilizer.
Write $K_{\mathrm{AWY}}$ for this subgroup to distinguish it from our
ambient dimension.  Under their ensemble conditions (invariant measures,
orbit coverage, transversality, isotropy dimension and orthogonality),
and a finite $d$-sheeted covering $G/K_{\mathrm{AWY}}\times Y'\to X'$,
their Eq.~(4--4) states
\[
 \int_X f(x)p(x)\,\dd x
 =\frac1d\int_Y
       \left[\int_{G/K_{\mathrm{AWY}}}
                    f(\sigma_g(y))\,\dd\mu([g])\right]\dd\nu(y).
\]
Here $\sigma_g$ is the action, $p(x)\dd x$ is the invariant measure, and
$\dd\nu$ includes the section Jacobian.  The theorem states this for
smooth nonnegative or integrable $f$; its following remark replaces
smoothness by measurability, retaining nonnegativity or integrability.
Neither source states the arbitrary-target pushforward identity in
Eq.~\eqref{eq:external-A2prime} verbatim.

\paragraph{(3) Difference from the source and its justification.}
The imported assertion includes the specific spectrum map's measurability
on all square matrices, one constant before all tests, and arbitrary
measurable targets.  The source supplies the geometric integration formula.
The following deductions connect that formula to every imported clause.
They are mathematical explanations of the bundled A2 input, not separate
Lean proofs of those clauses.

\smallskip\noindent\emph{The selected spectrum is globally measurable.}
For Hermitian matrices $A,B$, the min--max formula gives, for consistently
ordered eigenvalues,
\[
 \max_i|\operatorname{eig}_i(A)-\operatorname{eig}_i(B)|
 \le \|A-B\|_{\mathrm{op}}.
\]
The same bound holds after any fixed reindexing.  Since
$C^*C-D^*D=C^*(C-D)+(C-D)^*D$, it follows for all
$C,D\in\mathcal M_N$ that
\[
 \max_i|\lambda_i(C)-\lambda_i(D)|
 \le (\|C\|_{\mathrm{op}}+\|D\|_{\mathrm{op}})
                      \|C-D\|_{\mathrm{op}}.
\]
Thus the exact selector used in A2 is continuous on all square matrices,
including at spectral multiplicities.  Also,
$(UCU^{\T})^*(UCU^{\T})=\overline U(C^*C)U^{\T}$ for unitary $U$,
so its spectrum is unchanged by unitary congruence.  Every nonnegative
Borel permutation invariant $h$ therefore gives a measurable invariant
integrand $C\mapsto h(\lambda(C))$.  No choice of Takagi vectors is needed
for this assertion.

\smallskip\noindent\emph{Regular coordinates and their uniqueness.}
Let $\mathcal S_N^{\mathrm{reg}}$ consist of the symmetric matrices whose
squared singular values are positive and pairwise distinct, and put
\[
 \mathcal W=\{s\in\mathbb R^N:0<s_1<\cdots<s_N\},\qquad
 H_N=\{\operatorname{diag}(\varepsilon_1,\ldots,\varepsilon_N):
                         \varepsilon_i\in\{-1,1\}\}.
\]
The complement of $\mathcal S_N^{\mathrm{reg}}$ is flat null: it is the
union of the zero sets of $|\det C|^2$ and the discriminant of the
characteristic polynomial of $C^*C$.  Their restrictions to $\mathcal S_N$
are nonzero real polynomials, as a positive diagonal matrix with distinct
entries shows.  For $N=1$ the collision condition is empty.

Takagi factorization makes the smooth map
\[
 \Phi:(\mathrm U(N)/H_N)\times\mathcal W
       \longrightarrow\mathcal S_N^{\mathrm{reg}},\qquad
 \Phi([U],s)=U\operatorname{diag}(s)U^{\T},
\]
surjective.  It is injective as well.  The ordered positive diagonal is
fixed by the singular values.  If a unitary $V$ stabilizes
$D=\operatorname{diag}(s)$, then $VDV^{\T}=D$ implies
$VD^2V^*=D^2$.  Because $D^2$ has distinct diagonal entries, $V$ is
diagonal; the first equality then gives $V_{ii}^2=1$.  Hence the stabilizer
is exactly $H_N$, and the angular coordinate is unique modulo this group.

\smallskip\noindent\emph{The differential and the integration theorem.}
At $([I],s)$, a real diagonal variation $E$ and a unitary tangent $X^*=-X$ give
\[
 \dd C=E+XD+DX^{\T}=E+XD-D\overline X.
\]
For $i<j$, write $X_{ij}=x_{ij}+\mathrm i y_{ij}$.  The corresponding entry
of the orbit tangent is
$(s_j-s_i)x_{ij}+\mathrm i(s_j+s_i)y_{ij}$.
Writing $X_{ii}=\mathrm i t_i$, the diagonal variation is
$E_{ii}+2\mathrm i s_i t_i$.  These real coordinate blocks are all
invertible on $\mathcal W$.  Thus $\Phi$ is a local diffeomorphism, and its
bijectivity makes it a global diffeomorphism.  This proves the covering
condition with one sheet; no further sign or permutation multiplicity is
left in these coordinates.

Here is the precise specialization of the cited integration theorem.
Take its integration manifold to be $\mathcal S_N^{\mathrm{reg}}$, its
section to be $\{\operatorname{diag}(s):s\in\mathcal W\}$, its group to be
$\mathrm U(N)$, and $p=1$.  The exceptional sets within these manifolds
are empty, so $X'=X$ and $Y'=Y$ in the source's notation.
The section is closed relative to
$\mathcal S_N^{\mathrm{reg}}$: a limit there of positive ordered diagonals
still has positive, distinct, ordered entries.  Its real diagonal tangent
is orthogonal to the orbit tangent in the real Frobenius metric, and the
displayed differential gives their direct sum.  The stabilizers equal the
finite group $H_N$.  Thus orbit coverage, transversality, the isotropy
dimension condition, orthogonality, and the covering condition all hold.
The Frobenius volume is $2^{N(N-1)/2}\dd C$, so $\dd C$ itself is also
invariant under unitary congruence.  Apply the theorem to Frobenius volume
and divide both sides by this fixed factor to obtain the formula for $\dd C$.

Give $\mathrm U(N)/H_N$ its invariant probability measure.  The same
differential gives, with one constant $a_N\in(0,\infty)$ depending only on
the fixed coordinate and angular normalizations, the radial Jacobian
$a_N\prod_i(2s_i)\prod_{i<j}(s_j^2-s_i^2)$.  The constant is independent of
$s$ by the displayed coordinate blocks, and independent of the angular
point by unitary invariance.  The measurable form of the integration
theorem therefore gives, for every nonnegative Borel permutation invariant
$h$, including when the integrals are infinite,
\[
 \int_{\mathcal S_N}h(\lambda(C))\,\dd C
 =a_N\int_{\mathcal W}h(s_1^2,\ldots,s_N^2)
               \prod_i(2s_i)\prod_{i<j}(s_j^2-s_i^2)\,\dd s.
\]
The flat null set removed above does not affect this nonnegative integral.

\smallskip\noindent\emph{Squared coordinates and the common constant.}
Set $x_i=s_i^2$ and
$\mathcal W_x=\{x\in\mathbb R^N:0<x_1<\cdots<x_N\}$.  The factors $2s_i$
cancel exactly against the coordinate differentials:
\[
 \prod_i(2s_i)\prod_{i<j}(s_j^2-s_i^2)\,\dd s
 =\Delta(x)\,\dd x.
\]
The positive orthant, except for its collision hyperplanes, is the disjoint
union of $N!$ coordinate permutations of $\mathcal W_x$.  Permutation
invariance of $h$, $|\Delta|$, and Lebesgue measure gives
\[
 \int_{\mathcal S_N}h(\lambda(C))\,\dd C
 =a_N\int_{\mathcal W_x}h(x)\Delta(x)\,\dd x
 =\frac{a_N}{N!}\int_{(0,\infty)^N}h(x)|\Delta(x)|\,\dd x.
\]
Thus $c_N=a_N/N!$ is fixed before any choice of test or target.
For an independent normalization check, insert $h(x)=e^{-\sum_i x_i}$.
Since $\tr(C^*C)=\sum_i|C_{ii}|^2+2\sum_{i<j}|C_{ij}|^2$, this gives
\[
 c_N=
 \frac{2^{-N(N-1)/2}\pi^{N(N+1)/2}}
 {\displaystyle\int_{(0,\infty)^N}
       e^{-\sum_i x_i}|\Delta(x)|\,\dd x}.
\]
The denominator is positive on an open chamber and finite by polynomial
growth and exponential decay.  For $N=1$ this reduces to $c_1=\pi$,
as also follows directly from planar polar coordinates.

\smallskip\noindent\emph{Arbitrary measurable targets.}
Finally, for any measurable $B\subseteq\mathcal Y$ take
$h(\lambda)=\1_{\{F(\lambda)\in B\}}$ in the scalar integration formula.
Measurability and permutation invariance of $F$ give the corresponding
properties of $h$.  The resulting identity is equality of the two measures
in Eq.~\eqref{eq:external-A2prime} on every measurable $B$, which proves
that equation for an arbitrary measurable target.  This completes the
implication from the cited formula to every part of the stated A2.

\paragraph{(4) Notation correspondence.}
FitzGerald and Warren's matrix $X$ is our $C\in\mathcal S_N$, their size
$n$ is $N$, and their $\lambda_i$ are the eigenvalues of $C^*C$.
Our vector $\lambda(C)=1-\operatorname{eig}(I-C^*C)$ contains this same
multiset in a fixed order.  Their angular variables $\Omega$ become
$[U]\in\mathrm U(N)/H_N$ with invariant probability measure in the
calculation above.  Their proportionality constant becomes the single
$c_N=a_N/N!$ when the ordered squared chamber is replaced by the full
positive orthant.  In the An--Wang--Yan specialization, $G=\mathrm U(N)$,
their $K$ is our $H_N$, $X=\mathcal S_N^{\mathrm{reg}}$, and $Y$ is the
positive ordered diagonal section.  Their action is
$\sigma_g(C)=gCg^{\T}$.  Their $K$ is a subgroup, whereas our $K$ elsewhere
is an ambient dimension; their $d$ counts covering sheets, here one.

In the formalization, A2 is applied with A1 to permutation invariant tests
of the determinant weighted COE law.  Its constant $c_N$ cancels under
probability normalization.  The odds and trace power maps are subsequent
constructions, not clauses imported in A2.  Permutation invariance is
essential to the axiom: an unrestricted equality between one canonically
ordered eigenvalue vector and a measure on the full unordered orthant
would be false.

\subsection{Gaussian GSVD beta Jacobi law (A3)}

\paragraph{(1) Exact mathematical translation of the Lean axiom.}
Let $N\ge1$, $a,b\in\mathbb Z_{\ge0}$, and $\beta\in\{1,2\}$.  On the
common sample space
\[
 \Omega_{N,a,b}=\mathbb C^{(N+a)\times N}
                      \times\mathbb C^{(N+b)\times N},
\]
let $\mu_{N,a,b,\beta}$ be the law of an independent pair $(X_1,X_2)$.
For $\beta=1$ their entries are independent $N(0,1)$ variables embedded in
$\mathbb C$; for $\beta=2$ they are independent
$(G_1+\mathrm iG_2)/\sqrt2$, with $G_1,G_2$ independent $N(0,1)$.
Define on every sample, including singular samples,
\[
 A=X_1^*X_1,\quad B=X_2^*X_2,\quad R=(A+B)^{1/2},\quad
 J=R^{-1}A(R^{-1})^*,\quad
 x_i(X_1,X_2)=\bigl(\sqrt{\operatorname{eig}_i(J)}\bigr)^2.
\]
Here the matrix square root is the positive semidefinite one, the inverse
is the usual inverse on invertible matrices and the zero matrix on singular
matrices, and the eigenvalues use the fixed ordering and reindexing of the
formalization.  This inverse convention is not the Moore--Penrose inverse.
The matrix $J$ is positive semidefinite, so $x_i=\operatorname{eig}_i(J)$.

Let $\mathsf J_{N,a,b,\beta}$ be the measure obtained by normalizing the
following kernel by its own integral $Z_{N,a,b,\beta}$:
\begin{equation}
 \mathsf J_{N,a,b,\beta}(\dd t)=
 \frac{\1_{(0,1)^N}(t)}{Z_{N,a,b,\beta}}
 \prod_{i=1}^N t_i^{\beta(a+1)/2-1}(1-t_i)^{\beta(b+1)/2-1}
 \prod_{i<j}|t_i-t_j|^\beta\,\dd t.
 \label{eq:external-A3-jacobi}
\end{equation}
The imported assertion has two conclusions: $x:\Omega_{N,a,b}\to\mathbb R^N$
is measurable on the whole sample space, and, for every measurable space
$\mathcal Y$ and every measurable permutation invariant
$F:\mathbb R^N\to\mathcal Y$,
\begin{equation}
 (F\circ x)_\#\mu_{N,a,b,\beta}=F_\#\mathsf J_{N,a,b,\beta}.
 \label{eq:external-A3-symmetric-test}
\end{equation}
Collision nullity, project parameter substitutions, and later trace laws
are not additional conclusions of this axiom.

\paragraph{(2) Statement in the cited source.}
Edelman and Sutton's Definition~1.1 uses GSVD cosine coordinates $c_i$,
the diagonal entries of a nonnegative diagonal matrix $C_0$ accompanied by
$S_0$ with $C_0^2+S_0^2=I_n$.  The source explicitly notes that these
coordinates are unique only up to reordering; its $c_i$ are not the ratios
$c_i/s_i$ called generalized singular values in another convention.
Proposition~1.2 states that independent Gaussian matrices $N_1,N_2$ of
sizes $(n+a)\times n$ and $(n+b)\times n$, with the real or complex
standard Gaussian conventions above, have squared GSVD coordinates
$\{c_1^2,\ldots,c_n^2\}$ with the Jacobi law of parameters $a,b$ and
$\beta=1$ or $2$, respectively
\cite[Definition~1.1 and Proposition~1.2]{EdelmanSutton2008}.
The source density is Eq.~\eqref{eq:external-A3-jacobi} with $N=n$.
Its proof identifies the coordinates with the eigenvalues of
\[
 (N_1^*N_1)(N_1^*N_1+N_2^*N_2)^{-1}.
\]
The proposition does not specify our eigenvalue selector or its values and
measurability on every singular sample, and it does not give an ordered
vector the symmetric density on the entire cube.

\paragraph{(3) Difference from the source and its justification.}
The additional interface content is the concrete coordinate map and its
global measurability.  Each Gaussian matrix has full column rank almost
surely: the determinant of its first $N$ rows is a nonzero polynomial, whose
zero set has Gaussian measure zero.  On that event $A,B>0$, and
\[
 RJR^{-1}=A(A+B)^{-1},\qquad 0<J<I_N.
\]
Thus the Hermitian matrix used in the axiom is similar to the source's
matrix without any commutation assumption on $A,B$.  Its eigenvalues are
the same squared GSVD coordinates.  The chosen values on the null singular
set do not affect the probability law.

For measurability on all of $\Omega_{N,a,b}$, the Gram maps and the positive
matrix square root are continuous.  Inversion extended by zero is Borel:
it is the continuous rational map $\operatorname{adj}(R)/\det R$ on the
open invertible set and is constant on its closed complement.  Hence $J$
is measurable.  Continuity of ordered Hermitian eigenvalues, fixed
reindexing, and the real square root proves measurability of $x$.  Since
$J\ge0$ on every sample, squaring its real square-root eigenvalues does not
change them.

Finally, a permutation invariant $F$ has the same value on every ordering
of the squared GSVD coordinates.  Apply the source's unordered law to the
indicator of $F^{-1}(D)$ for each measurable $D\subseteq\mathcal Y$ to
obtain Eq.~\eqref{eq:external-A3-symmetric-test}.  This also covers arbitrary
measurable targets.  These coordinate and measurability deductions explain
the extension from Proposition~1.2; they are bundled into A3, not separately
proved by that imported Lean declaration.  Collision nullity is then
derived by the symmetric collision indicator and the null hyperplanes of
the Jacobi density.

\paragraph{(4) Notation correspondence.}
The source's $n,N_1,N_2,c_i^2$ correspond respectively to this subsection's
$N,X_1,X_2,x_i$ up to ordering.  The Gaussian variances and the parameters
$a,b,\beta$ are unchanged.  The formalization uses the specialization
\begin{equation}
 n_{\mathrm{source}}=N,\qquad a=1,\qquad
 b=K-2N,\qquad\beta=1,\qquad K\ge2N.
 \label{eq:external-A3-dictionary}
\end{equation}
The two matrix sizes become $(N+1)\times N$ and $(K-N)\times N$; the two
individual-coordinate exponents become $0$ and $(K-2N-1)/2$.  This is the
radial kernel obtained from A1 and A2.  Reflection, odds transformation,
and trace power maps give the beta prime trace law only after these inputs
are combined.

\subsection{Wishart and inverse Wishart tensor moments (A4)}

\paragraph{(1) Exact mathematical translation of the Lean axiom.}
Let $N,q\ge1$, $\beta,\gamma\in\mathbb R$, and
$\sigma\in\operatorname{Sym}_N^+(\mathbb R)$, with
\begin{equation}
 \gamma=\beta-\frac{N+1}{2},\qquad \gamma>q-1.
 \label{eq:external-A4-gap}
\end{equation}
Here $q$ is the moment order, and $\beta$ is the Wishart shape parameter,
not the real or complex index in A3.  Let $\mu$ be a probability measure on
$\operatorname{Sym}_N^+(\mathbb R)$ satisfying, for every real symmetric
$\theta$ such that $\sigma^{-1}-\theta>0$,
\[
 \int e^{\tr(\theta w)}\,\mu(\dd w)
       =\det(I_N-\theta\sigma)^{-\beta}.
\]
This supplied probability law and its transform identity are hypotheses
of A4, through the definition of $W\sim W_N(\beta,\sigma;\mathbb R)$.
The axiom does not separately assert existence of a law for arbitrary
shape parameters.  Write $\E_\mu$ for integration against this law.

For arbitrary complex $N\times N$ matrices $m_1,\ldots,m_q$, a permutation
$g\in S_{2q}$, and real symmetric $w$, define
\begin{equation}
 T_g(w;m)=\sum_{j_1,\ldots,j_{2q}=1}^N
       \left(\prod_{r=1}^q(m_r)_{j_{2r-1},j_{2r}}\right)
       \left(\prod_{r=1}^q w_{j_{g(2r-1)},j_{g(2r)}}\right).
 \label{eq:external-A4-Tg}
\end{equation}
Let $\mathcal M(2q)$ be the perfect matchings of $\{1,\ldots,2q\}$ and
$M_0=\{\{1,2\},\ldots,\{2q-1,2q\}\}$.  Represent a matching $M$ by the
permutation $g_M$ that lists each pair increasingly and lists the first
members of pairs increasingly.  Let $\ell(M,L)$ count the connected
components in $M\cup L$ and put $\kappa(g)=\ell(M_0,gM_0)$.
The coefficient used in the axiom is defined by the finite matching matrix:
\[
 G_z(M,L)=z^{\ell(M,L)},\qquad
 \mathrm{Wg}^{\mathrm O}_{\mathrm L}(g;z)
      =G_z^{-1}(M_0,gM_0),\qquad
 \widetilde{\mathrm{Wg}}_{\mathrm L}(g;\gamma)
      =(-1)^q2^q\mathrm{Wg}^{\mathrm O}_{\mathrm L}(g;-2\gamma).
\]
The subscript $\mathrm L$ distinguishes this definition from the source's
definition below.  The matrix inverse is the ordinary inverse at
$z=-2\gamma$; its existence throughout the stated range is justified in
part (3).  The imported conclusion is the conjunction
\begin{equation}
 \E_\mu T_g(W;m)
 =2^{-q}\sum_{M\in\mathcal M(2q)}
          (2\beta)^{\kappa(g^{-1}g_M)}T_{g_M}(\sigma;m),
 \label{eq:external-A4-direct}
\end{equation}
\begin{equation}
 \E_\mu T_g(W^{-1};m)
 =\sum_{M\in\mathcal M(2q)}
       \widetilde{\mathrm{Wg}}_{\mathrm L}(g^{-1}g_M;\gamma)
                              T_{g_M}(\sigma^{-1};m).
 \label{eq:external-A4-inverse}
\end{equation}
These are equalities of complex integrals.  The declaration does not
return separate integrability conclusions or any specialized trace or
score bound.

\paragraph{(2) Statement in the cited source.}
{All source theorem, equation, and page numbers in this
subsection refer to
\href{https://arxiv.org/abs/1004.4717v3}{arXiv:1004.4717v3}
of Ref.~\cite{Matsumoto2012}.}
Matsumoto's Theorem~3 takes $W\sim W_d(\beta,\sigma;\mathbb R)$,
$\gamma=\beta-(d+1)/2>n-1$, arbitrary $d\times d$ matrices
$m_1,\ldots,m_n$, and $g\in S_{2n}$.  It states
\[
 \begin{aligned}
 \E T_g(W;m)
   &=2^{-n}\sum_{M\in\mathcal M(2n)}
                  (2\beta)^{\kappa(g^{-1}g_M)}T_{g_M}(\sigma;m),\\
 \E T_g(W^{-1};m)
   &=\sum_{M\in\mathcal M(2n)}
      \widetilde{\mathrm{Wg}}_{\mathrm{src}}(g^{-1}g_M;\gamma)
                                      T_{g_M}(\sigma^{-1};m).
 \end{aligned}
\]
{These are the paired identities of Theorem~3, p.~20.}  The source's Wishart convention is the
Laplace-transform definition in its Section~1.1; when $2\beta=p$ is an
integer its Gaussian realization is a sum of $p$ independent outer
products with vector covariance $\sigma/2$.

The source defines its coefficient differently from part (1).
Writing its moment order as $q=n$ for the next display, Eq.~(4.10) gives
\[
 \mathrm{Wg}^{\mathrm O}_{\mathrm{src}}(g;z)
   =\frac{1}{(2q-1)!!}
        \sum_{\nu\vdash q}
        \frac{f^{2\nu}\omega^\nu(g)}{C_\nu(z)},\qquad
 C_\nu(z)=\prod_{(i,j)\in\nu}(z+2j-i-1).
\]
Here $f^{2\nu}$ is the irreducible character dimension for $S_{2q}$,
$H_q$ is the stabilizer of $M_0$ with $|H_q|=2^qq!$, and
$\omega^\nu(g)=|H_q|^{-1}\sum_{h\in H_q}\chi^{2\nu}(gh)$ is the
normalized zonal spherical function.  The definition requires every
$C_\nu(z)\ne0$.  Equation~(5.2) sets
$\widetilde{\mathrm{Wg}}_{\mathrm{src}}(g;\gamma)
=(-1)^q2^q\mathrm{Wg}^{\mathrm O}_{\mathrm{src}}(g;-2\gamma)$
\cite[Eqs.~(4.10), (5.2)]{Matsumoto2012}.

\paragraph{(3) Difference from the source and its justification.}
The tensor contractions and the two moment identities are the same after
renaming $d=N$ and $n=q$.  There are two representation issues to check.
First, the supplied probability measure in part (1) is identified with the
source Wishart law by its Laplace transform on a neighborhood of zero in
the real symmetric-matrix coordinates.  Uniqueness of such a transform
therefore identifies the laws.  The gap condition implies
$\beta>(N+1)/2$, so the source's singular Wishart regimes do not arise.

Second, the inverse-Gram coefficient must be identified with the zonal
coefficient with its exact normalization.  For every box $(i,j)$ in a
partition of $q$, $j\le q$ and $i\ge1$, so
\[
 -2\gamma+2j-i-1\le-2\gamma+2q-2<0.
\]
Thus all the zonal denominators are nonzero at $z=-2\gamma$.
Put $Q_z(g)=z^{\kappa(g)}$ and
$V_z(g)=\mathrm{Wg}^{\mathrm O}_{\mathrm{src}}(g;z)$.
Matsumoto's Lemma~5, with its normalized identity element
$|H_q|^{-1}\1_{H_q}$, says for ordinary finite-group convolution that
\[
 Q_z*V_z=|H_q|\1_{H_q}.
\]
Evaluate at $g_M^{-1}g_L$ and sum over the cosets represented by $g_P$.
The two functions are $H_q$-bi-invariant, so the group sum has a common
factor $|H_q|$.  After dividing by that factor, the identity is exactly
\[
 \sum_{P\in\mathcal M(2q)}
      G_z(M,P)V_z(g_P^{-1}g_L)=\delta_{M,L}.
\]
This constructs a right inverse of the finite square matrix $G_z$.
Consequently $G_z$ is invertible and
\[
 \mathrm{Wg}^{\mathrm O}_{\mathrm{src}}(g;z)
      =G_z^{-1}(M_0,gM_0)
      =\mathrm{Wg}^{\mathrm O}_{\mathrm L}(g;z).
\]
The same factor $(-1)^q2^q$ then identifies the two modified coefficients.
No additional factorial or sign is present
\cite[Eq.~(4.10) and Lemma~5]{Matsumoto2012}.

This proves mathematical equivalence of the two definitions in the whole
range used by A4.  The code defines the inverse-Gram coefficient directly
and imports the moment equalities with it; it does not separately formalize
the zonal expansion and this conversion.  Similarly, the source supplies
finite moments in its stated range, but the imported integral equalities
should not be read as separate Lean integrability assertions.

\paragraph{(4) Notation correspondence.}
The source's matrix dimension $d$ is our $N$, and its moment order $n$ is
our $q$.  This $n$ is unrelated to the half-photon count in $N=2n$ elsewhere
in the paper.  The symbols $\beta,\gamma,\sigma,m_r,g$ keep their meanings;
the source's $\widetilde{\mathrm{Wg}}$ equals
$\widetilde{\mathrm{Wg}}_{\mathrm L}$ by part (3).  For the denominator
Wishart calculation, the dictionary is
\begin{equation}
 d=N,\qquad n=q=4,\qquad k=K-N,\qquad
 \beta=\frac{K-N}{2},\qquad\gamma=\frac{K-2N-1}{2}.
 \label{eq:external-A4-dictionary}
\end{equation}
The condition $K\ge2N+8$ gives $\gamma\ge7/2>3=q-1$.
Variance-$1/2$ Gaussian entries give source scale $\sigma=I_N$;
variance-one entries give $\sigma=2I_N$.  The deterministic factor of two
relates these Gram matrices.  Choosing the permutation and insertion
matrices so that $T_g(W^{-1};m)=\tr(W^{-4})$ is a later specialization,
followed by the matching sum, integrability closure, centering, and score
estimates.  None of these specialized estimates is built into A4.

\endgroup
\clearpage
\endgroup
\begingroup\section{Haar/Stiefel disintegration and target convergence}
\label{app:haar-stiefel}

{Haar/Stiefel deletion, beta radial laws, and polar disintegration
are classical {invariant measure} tools
~\cite{Chikuse1991Stiefel,Mezzadri2007,ZyczkowskiSommers2000Truncations,
BourgadeEtAl2008,ZyczkowskiSommers2001}.}  A direct GBS
precedent also factors an arbitrary $K$ Haar transpose Gram product as a
positive random congruence of a square COE product
~\cite[Lemma~2.2]{ShouEtAl2026ArbitrarySqueezers}.  The step used here is
narrower: deleting one ambient coordinate closes exactly as the
rank one beta congruence recursion in Lemma~\ref{lem:hide-recursion}.
The radial Markov chain then transports the resulting centered square COE
single column probability bound without reintroducing a dependence on $K$.

{For the commutation argument, the single column radial factors in
Lemma~\ref{lem:hide-recursion} are positive definite and their laws are
invariant under unitary conjugation.  A finite radial kernel chain is a
composition of these congruence mixtures.  It is therefore enough to prove
commutation for one
arbitrary conjugation invariant positive definite factor: the conclusion
then applies successively to every factor and hence to the whole chain.}
\subsection{Proof of Lemma~\ref{lem:hide-commute}}
\label{subsec:hide-commute-direct-proof}

\begin{proof}[Proof of Lemma~\ref{lem:hide-commute}]
{Figure~\ref{fig:haar-lift-transport} summarizes the four
measure theoretic steps.  First, let
$P\sim\lambda$, with $\lambda$ as in Lemma~\ref{lem:hide-commute}, and let
$V$ be an independent Haar unitary.  The law of $PV$ is invariant under
independent left and right multiplication by elements of $\mathrm U(N)$:
right invariance follows from Haar invariance, while left multiplication by
$W$ may be written
$WPV=(WPW^*)(WV)$ and uses conjugation invariance of $P$ together with Haar
invariance of $WV$.  Thus the lift is a biunitarily invariant finite measure
on $\mathrm{GL}_N(\C)$.

Second, if the input matrix law is invariant under unitary congruence, the
right unitary in the lift has no effect on the induced positive congruence
operator.  Indeed, $PV$ acts by
$X\mapsto PVX(PV)^{\T}$, and the conditional law of
$VXV^{\T}$ is the law of $X$.  Third, the fixed spectrum orbital law of
$\exp(sQ_v)$ has the same double coset lift, so the two Markov operators are
represented by convolution of biunitarily invariant measures.

Fourth, these convolutions commute.  R\"osler and Voit define a Gelfand pair
by commutativity of the bounded biinvariant measure algebra and give the
involutive automorphism criterion
\cite[Definition~3.1 and Lemma~3.2]{RoeslerVoit2008}.  For
$G=\mathrm{GL}_N(\C)$ and $K=\mathrm U(N)$ take
$\theta(g)=(g^*)^{-1}$.  This is a continuous involutive automorphism.  If
$g=u d v^*$ is a singular value decomposition, then
$g^{-1}=(vu^*)\theta(g)(vu^*)$, so
$g^{-1}\in K\theta(g)K$.  The criterion therefore makes $(G,K)$ a Gelfand
pair, and the two lifted convolution measures commute.  Removing the
auxiliary right unitaries gives the claimed probability law commutation.
The same conjugation covariance shows that
$\mathcal T_\lambda\eta$ remains invariant under unitary congruence.

Separately, let $R_\lambda(X,E)$ be the congruence transition kernel obtained
by averaging the indicator of $PXP^{\T}\in E$ over $P\sim\lambda$.
For a finite signed measure $\sigma$, integration of
$R_\lambda(X,E)$ against the Jordan parts of $\sigma$ defines
$\mathcal T_\lambda\sigma$.  Signed integration is linear, so this construction
is linear and agrees with the probability law action.  For a bounded test
function write
$R_\lambda f(X):=\int f(Y)R_\lambda(X,dY)$.  Duality and the Markov bound
$\abs{R_\lambda f}\le1$ for $\abs{f}\le1$ prove
Eq.~\eqref{eq:hide-commute-contraction}.  Thus the extension is bounded.

Now put $F(s)=\mathcal K_s\eta$ as a curve in finite signed measures with the
variation norm.  A bounded linear map preserves $C^r$ curves and commutes
with each derivative through order $r$.  Applying this fact to
$\mathcal T_\lambda F$, then using Eq.~\eqref{eq:hide-commute} to identify it
with $s\mapsto\mathcal K_s\mathcal T_\lambda\eta$, proves
Eq.~\eqref{eq:hide-commute-derivative}; its norm inequality follows from
Eq.~\eqref{eq:hide-commute-contraction}.  The later hiding estimate uses only
the probability consequence
$d_{\rm TV}(\mathcal T_\lambda\alpha,
\mathcal T_\lambda\beta)\le d_{\rm TV}(\alpha,\beta)$, namely
Eq.~\eqref{eq:hide-radial-TV}.}
Thus the Haar lift proves both unitary congruence invariance and
Eq.~\eqref{eq:hide-commute}; the Markov kernel proves the bounded linear
extension and Eq.~\eqref{eq:hide-commute-contraction}; and applying that
extension to the variation differentiable curve proves both assertions in
Eq.~\eqref{eq:hide-commute-derivative}. These are all the conclusions of
Lemma~\ref{lem:hide-commute}, so its proof is complete.
\end{proof}

\subsection{Target convergence}

The {target convergence} input to the dense telescope follows
from the same strict size Haar block density
\cite{Jiang2009HaarBlocks}.  Apply Lemma~\ref{lem:hide-sparse} with
$(M,p,q)=(m,K,N)$, transpose the two rectangular blocks, and push both laws
forward by $Y\mapsto K^{-1/2}YY^{\mathsf T}$.  Total variation data processing
gives, whenever
$1\le N\le K$ and $K+N<m$,
\begin{equation}
 d_{\rm TV}(\mu_{m,K},\nu_K)
 \le \frac{(K+N)\sqrt{KN}}{m}.
 \label{eq:hide-target-rate}
\end{equation}
Given $m_0\in\mathbb N$ and
$\varepsilon>0$, choose $n$ so that
$m=m_0+n>K+N$ and $(K+N)\sqrt{KN}/m\le\varepsilon$.  This proves
Eq.~\eqref{eq:hide-target-convergence}.  {Hence the dense
telescope reaches the Gaussian transpose Gram target from the same strict
size density route.}
\endgroup
\begingroup\section{COE density and fourth order boundary calculus}
\label{app:coe-density}

This appendix identifies the COE trace moments with the Wishart model and
justifies the event derivatives used in Lemma~\ref{lem:hide-scores}.
Appendix~\ref{app:projective-scores} derives the score formulas and bounds,
using the detailed moment calculations and polynomial certificates in
Appendix~\ref{app:score-certificates}. These ingredients are combined in
Sec.~\ref{subsec:hide-scores-direct-proof} to prove Lemma~\ref{lem:hide-scores}.
Throughout the appendix,
$1\le N$ and $16N\le K$; in particular all square roots and inverse Wishart
moments below are in their stated positive parameter range.

Let $C_{N,K}$ be the upper $N$ by $N$ corner of a $K$ dimensional
COE matrix.  Friedman and Mello's density, encoded by external input A1, is
\begin{equation}
 f_{N,K}(C)=\mathcal Z_{N,K}^{-1}
 \det(I-C^*C)^{(K-2N-1)/2}
 \mathbf 1_{\{C^*C<I\}} .
 \label{eq:hide-coe-density}
\end{equation}
It is a density on the independent complex symmetric coordinates
\cite{FriedmanMello1985}; see also
Ref.~\cite[Theorem~2.1 and Eq.~(2.1)]{ShouMillerGalitski2025}.  The square
base law and its unscaled coordinate are
\begin{equation}
 \begin{gathered}
 A=\sqrt K\,C_{N,K},\qquad C=A/\sqrt K,\\
 \lambda_{N,K}:=\Law(A).
 \end{gathered}
 \label{eq:hide-coe-scaling}
\end{equation}

Put
\begin{equation}
 \begin{gathered}
 c=K-2N-1,\qquad Y=cZ,\\
 Z=C(I-C^*C)^{-1}C^*.
 \end{gathered}
 \label{eq:hide-YZ}
\end{equation}
On the open matrix ball $C^*C<I$, define
\begin{equation}
 \Omega=I+Z,\qquad
 \mathcal R=\sqrt c\,\Omega C,\qquad
 \mathcal R\mathcal R^*=Y\Omega .
 \label{eq:hide-omega-R}
\end{equation}
The last identity uses $c>0$, which follows from the standing dense
hypothesis.  This qualification is essential: without positivity of $c$ the
displayed square root identity is not valid.

For the radial transport, let $G_A\in\mathbb R^{(N+1)\times N}$ and
$G_B\in\mathbb R^{(K-N)\times N}$ be independent standard Gaussian matrices,
and set $A_0=G_A^{\mathsf T}G_A$ and $B_0=G_B^{\mathsf T}G_B$.  The A1 through A3
trace vector transport gives, for the four powers actually used in the
score proof,
\begin{equation}
 \begin{aligned}
 \bigl(\tr Z^j\bigr)_{j=1}^{4}
 &\ \law\ 
 \bigl(\tr(B_0^{-1}A_0)^j\bigr)_{j=1}^{4},\\
 A_0&\sim W_N(N+1,I),\\
 B_0&\sim W_N(K-N,I).
 \end{aligned}
 \label{eq:hide-beta-prime}
\end{equation}
This is the trace consequence invariant under permutations of the Takagi and
Weyl integration formula and the matrix beta to beta prime change of variables.  It does
not assert equality of canonically ordered eigenvalue vectors.  Muirhead's
matrix beta representation gives classical corroboration
\cite[Theorem~3.3.1, p.~110]{Muirhead1982}.  The exact first denominator
trace moment used later is
\begin{equation}
 \mathbb E\,\tr(cB_0^{-1})=N .
 \label{eq:hide-inverse-wishart-mean}
\end{equation}

We next record the boundary facts used by the fourth order differentiation
argument.
Write $\alpha=(K-2N-1)/2$. We use only the dense range
$K\ge16N$, where
\begin{equation}
 \alpha\ge13/2>4,\qquad \alpha-j>0\quad(0\le j\le4).
 \label{eq:hide-boundary-exponents}
\end{equation}
In particular, for every $0\le j\le4$,
\begin{equation}
 \int_0^1 u^{\alpha-j}\,\dd u
 =\frac1{\alpha-j+1}<\infty.
 \label{eq:hide-boundary-integrability}
\end{equation}
The argument below also controls simultaneous small matrix gaps.

For $v\in S^{2N-1}$ let $Q_v=vv^*-I/N$.  On the independent symmetric
coordinates, the centered congruence flow has complex Jacobian one:
\begin{equation}
 \det_{\mathbb C}\!\left[H\mapsto
 e^{tQ_v}He^{tQ_v^{\mathsf T}}\right]=1,\qquad t\in\mathbb R .
 \label{eq:hide-centered-jacobian}
\end{equation}
For a Borel set $E\subset\mathbb C^{N\times N}$ define
\begin{equation}
 O_E(t)=\int_{S^{2N-1}}
 \int \mathbf 1_E\!\left(e^{tQ_v}Ae^{tQ_v^{\mathsf T}}\right)
 \,\lambda_{N,K}(\dd A)\,\dd v .
 \label{eq:hide-centered-event-path}
\end{equation}

\begin{lemma}[Fourth order centered COE event regularity]
\label{lem:coe-boundary-regularity}
For every Borel $E\subset\mathbb C^{N\times N}$, the path $O_E$ in
Eq.~\eqref{eq:hide-centered-event-path} is $C^4$.
\end{lemma}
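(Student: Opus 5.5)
We write the orbital path as an integral of a smooth density and differentiate under the integral sign. Since the base law $\lambda_{N,K}$ has density $g(A)=K^{-N(N+1)/2}f_{N,K}(A/\sqrt K)$ on the independent symmetric coordinates, and the congruence $H\mapsto e^{tQ_v}He^{tQ_v^{\T}}$ has complex Jacobian one by Eq.~\eqref{eq:hide-centered-jacobian}, the change of variables $B=e^{tQ_v}Ae^{tQ_v^{\T}}$ gives
\[
 O_E(t)=\int_{S^{2N-1}}\int \1_E(B)\,g\bigl(e^{-tQ_v}Be^{-tQ_v^{\T}}\bigr)\,\dd B\,\dd v .
\]
All $t$-dependence now sits in the density, so $O_E$ is $C^4$ once we know that $t\mapsto g_{t,v}:=g(e^{-tQ_v}\,\cdot\,e^{-tQ_v^{\T}})$ is a $C^4$ curve in $L^1$, locally uniformly in $v$. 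The indicator is bounded by one, and the $v$-integral is over a compact set with continuous integrands. This is also the variation-norm $C^4$ regularity needed for Lemma~\ref{lem:hide-commute}.

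Inside the open ball, $h_t(B)=\det(I-C_t^*C_t)^{\alpha}$ with $C_t=e^{-tQ_v}Be^{-tQ_v^{\T}}/\sqrt K$ is real-analytic in $t$. Each $t$-derivative of order $j\le4$ is a finite sum of terms of the form $\det(I-C_t^*C_t)^{\alpha}$ times a polynomial in $\partial_t^i(C_t^*C_t)$ and in $(I-C_t^*C_t)^{-1}$, the latter appearing at most $j$ times. Since $\|(I-C^*C)^{-1}\|\le 1/\det(I-C^*C)$ up to a factor $\le N$ (all eigenvalues lie in $(0,1]$), each such term is bounded by $P(\|B\|)\det(I-C_t^*C_t)^{\alpha-j}$ for a polynomial $P$. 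The exponent $\alpha-j$ is positive by Eq.~\eqref{eq:hide-boundary-exponents}, so these derivatives extend continuously by zero across the boundary $\{C_t^*C_t=I\}$. Hence the zero extension of $h_t$ is $C^4$ in $t$ pointwise, including at boundary points, where all derivatives through order four vanish.

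The main obstacle is domination. We need a single integrable bound, uniform for $|t-t_0|\le\epsilon$ and $v\in S^{2N-1}$, for the fourth derivative. The support moves with $t$, but for $|t-t_0|$ small it stays inside a fixed bounded set, a slightly enlarged ball. There $P(\|B\|)$ is bounded and $\det(\cdot)^{\alpha-4}\le1$, so a constant multiple of the indicator of the enlarged ball dominates. That indicator is integrable, and in fact no boundary blow-up occurs since the exponents stay positive. The finite integrals in Eq.~\eqref{eq:hide-boundary-integrability} would be needed only if one wished to allow $\alpha-j>-1$; here simultaneous small eigenvalue gaps are harmless because each gap enters with nonnegative power through the determinant.

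To finish, we apply dominated convergence to the difference quotients four times. Given continuity and boundedness, the mean value theorem lets us pass each derivative inside the integral, yielding continuous $O_E^{(j)}(t)$ for $j\le4$. The same estimates show that $t\mapsto g_{t,v}$ is $C^4$ in $L^1$, uniformly in $v$.
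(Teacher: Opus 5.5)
Your proposal is correct and follows essentially the paper's argument: the Jacobian-one change of variables puts all time dependence in the density, Jacobi's formula with the positive exponents $\alpha-j$ makes every derivative through order four vanish at the support boundary, and a constant times the indicator of a fixed bounded set gives a majorant, uniform in $v$ and locally uniform in $t$, for dominated convergence. The only variation is that you differentiate along each flow line, so the zero extension needs only a one-variable fundamental-theorem-of-calculus argument on the open $t$-intervals where $\|C_t\|_{\rm op}<1$, whereas the paper first proves that the zero-extended density is $C^4$ in the matrix coordinates and then applies the chain rule; also note that the support boundary is $\{\|C_t\|_{\rm op}=1\}$ rather than $\{C_t^*C_t=I\}$, which is harmless because your bound is stated in terms of $\det(I-C_t^*C_t)$.
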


\begin{proof}
Use the independent symmetric coordinates and first remove the fixed
scale $\sqrt K$. Put $F(C)=I-CC^*$ and extend
$f(C)=\det(F(C))^\alpha$ by zero outside $F(C)>0$.
On the support closure, $0\le F(C)\le I$ and $\|C\|_{\rm op}\le1$.
Let $g$ be the smallest eigenvalue of $F(C)$ at an interior point.
Jacobi's formula and the product rule show that every coordinate derivative
of $f$ of order $j\le4$ is a finite sum of terms bounded by
\[
 C_{N,K,j}\det(F(C))^\alpha g^{-j}
 \le C_{N,K,j}g^{\alpha-j}.
\]
Here derivatives of $F$ are bounded on this compact set, derivatives of
order above two vanish, and each inverse factor contributes at most $g^{-1}$.
The determinant inequality uses all eigenvalues being at most one; it
remains valid when several gaps vanish. Since $\alpha>4$, these derivative
bounds tend to zero at every boundary point through order four. Extending
each derivative by zero gives the derivatives of the extension: along a
coordinate line this follows successively from the fundamental theorem of
calculus on its open support intervals and continuity at their endpoints.
Thus the extended density is $C^4$ with compact support.

For $|t|\le T$, the matrices $e^{-tQ_v}$ and their first four time
derivatives are bounded uniformly over the compact unit sphere. The chain
rule consequently bounds every time derivative through order four of
$f(e^{-tQ_v}Ce^{-tQ_v^{\T}})$ by a constant times the indicator of one
compact set containing all transported supports for $|t|\le T$.
This is an integrable majorant in the flat symmetric coordinates,
independent of $t$ and $v$.
\begingroup
To specify the derivative densities, let $\dd A$ denote Lebesgue measure
on the independent complex symmetric coordinates, and restore the scale
and normalization by setting
\[
 p_{N,K}(A)
 :=K^{-N(N+1)/2}\mathcal Z_{N,K}^{-1}f(A/\sqrt K).
\]
Thus $p_{N,K}$ is the zero-extended density of $\lambda_{N,K}$.
For each fixed $v\in S^{2N-1}$, define
\[
 \begin{aligned}
 f_{t,v}(A)&:=p_{N,K}\!\left(e^{-tQ_v}Ae^{-tQ_v^{\T}}\right),\\
 J_{j,v}(t)(A)&:=\partial_t^j f_{t,v}(A),\qquad 0\le j\le4.
 \end{aligned}
\]
Here $j$ denotes derivative order and $t$ denotes time.  By
Eq.~\eqref{eq:hide-centered-jacobian}, the centered flow preserves $\dd A$,
so $f_{t,v}$ is the density of the transported law for this fixed direction.
Regard each $J_{j,v}(t)$ as an element of $L^1(\dd A)$.
The chain rule makes these representatives jointly continuous in $(t,v,A)$.
The common integrable majorant above is preserved by the fixed scaling
and normalization. Dominated convergence therefore proves continuity and
differentiation through order four in this $L^1$ space. The same majorant
justifies Fubini and differentiation of the sphere averages
$\overline J_j(t):=\int_{S^{2N-1}}J_{j,v}(t)\,\dd v$, where $\dd v$
is normalized surface measure.
In particular, the fixed-direction and averaged identities in $L^1(\dd A)$ are
\begin{equation}
 \begin{aligned}
 \left.\frac{\dd}{\dd t}J_{3,v}(t)\right|_{t=0}
   &=J_{4,v}(0),\qquad v\in S^{2N-1},\\
 \left.\frac{\dd}{\dd t}\overline J_3(t)\right|_{t=0}
   &=\overline J_4(0),
 \end{aligned}
 \label{eq:hide-fourth-secant}
\end{equation}
\endgroup
The centered congruence flow has the group identity
\begin{equation}
 e^{(t_0+h)Q_v}A e^{(t_0+h)Q_v^{\mathsf T}}
 =e^{hQ_v}\bigl(e^{t_0Q_v}A e^{t_0Q_v^{\mathsf T}}\bigr)
  e^{hQ_v^{\mathsf T}} .
 \label{eq:hide-centered-flow-group}
\end{equation}
It therefore transports the same local $C^4$ argument from zero to every
base time $t_0$, which proves the stated global regularity in $t$.
Restriction to a measurable event is a continuous linear functional on
$L^1$, and spherical averaging is handled by the same uniform envelope and
Fubini.  Hence the event path is $C^4$.  This argument neither differentiates
a moving boundary hypersurface nor assumes
that the density extended by zero belongs to an ambient $W^{4,1}$ space, and it
does not select singular vectors at multiple singular values.
\end{proof}

For the central path $S_E$ in Eq.~\eqref{eq:hide-event-paths}, the same
extension by zero and dominated convergence argument through order two
applies to $A\mapsto e^{tI}Ae^{tI}$.  Retaining its coordinate Jacobian,
which is constant in the matrix coordinate at each time, proves
$S_E\in C^2$ globally and supplies the regularity asserted in
Lemma~\ref{lem:hide-scores}.

\begingroup
The same construction proves the derivative interchange actually used in
Appendix~\ref{app:projective-scores}.  If $O_{v,E}(t)$ denotes the inner event
probability for fixed $v$ in Eq.~\eqref{eq:hide-centered-event-path}, then for
$0\le j\le4$,
\begin{equation}
 \begin{aligned}
 O_E^{(j)}(t)&=\int_{S^{2N-1}}O_{v,E}^{(j)}(t)\,\dd v,\\
 O_{v,E}^{(j)}(t)&=\int\mathbf1_E(A)J_{j,v}(t)(A)\,\dd A.
 \end{aligned}
 \label{eq:hide-centered-derivative-interchange}
\end{equation}
All hypotheses, including the Borel measurability of $E$, occur explicitly
in the preceding construction.
\endgroup
\endgroup
\begingroup\section{Projective scores and finite constants}
\label{app:projective-scores}

This appendix records the finite identities and estimates used by the dense
one step argument.  Classical inverse Wishart and complex projective formulas are
used only through their fixed degree specializations
\cite{CollinsSniady2006,Matsumoto2012,AidaStroock1994}; no unquantified
operator moment family is needed.
The detailed moment substitutions and exact arithmetic for the constants are given in Appendix~\ref{app:score-certificates}.

\subsection{The exact inverse Wishart recurrence}

Put $D=cB_0^{-1}$ and let
$x=\tr D$, $y=\tr D^2$, $z=\tr D^3$, and $r=\tr D^4$.
For $N\ge1$ and $K\ge16N$, with $n=N$, the concrete inverse Wishart law
satisfies the closed recurrence
\begin{equation}
\begin{aligned}
 c\E x^2&=cn^2+2\E y,\\
 (c-1)\E y&=cn+\E x^2,\\
 c\E x^3&=cn\E x^2+4\E(xy),\\
 c\E(xy)&=cn\E y+4\E z,\\
 (c-2)\E z&=c\E y+2\E(xy),\\
 c\E x^4&=cn\E x^3+6\E(x^2y),\\
 c\E(x^2y)&=cn\E(xy)+2\E y^2+4\E(xz),\\
 (c-1)\E y^2&=c\E(xy)+\E(x^2y)+4\E r,\\
 c\E(xz)&=cn\E z+6\E r,\\
 (c-3)\E r&=c\E z+2\E(xz)+\E y^2.
\end{aligned}
\label{eq:hide-inverse-wishart-recurrence-system}
\end{equation}
Solving these ten coupled equations, followed by the Gaussian Wick calculation
for the numerator and clearing denominators yields the three product law
inequalities in Eqs.~\eqref{eq:hide-fourth-ell-one-fourth-moment} through
\eqref{eq:hide-fourth-ell-one-three-moment} below.

One elementary differential identity used in that calculation is the
following.  If $B_0=R^{\T}R$ with $R$ of full column rank, then
\begin{equation}
 \norm{\nabla_R\tr(cB_0^{-1})}_{F}^{\,2}
 =4c^2\tr(B_0^{-3}).
\label{eq:hide-inverse-gradient}
\end{equation}

For the independent uniform complex direction $v$, write
$P_v=vv^*$ and $Q_v=P_v-I/N$.  The exact centered projective contraction is
\begin{equation}
 \int \tr(Q_vX)\tr(Q_vY)\,\dd v
 =\frac{\tr(XY)-\tr X\,\tr Y/N}{N(N+1)}.
\label{eq:hide-projective-contraction}
\end{equation}
This is the only second projective moment needed for the quadratic score;
the coordinate tensor identity through degree four supplies
the remaining fixed degree contractions.

\subsection{Fourth score}

\begingroup
Let $p_{N,K}$ be the density of $\mu_{K,K}$ on the independent complex
symmetric coordinates, including its support indicator. For $A$ in its
open support, define the transported density, local likelihood ratio,
and logarithmic scores by
\[
 \begin{gathered}
 p_{s,Q_v}(A):=
 p_{N,K}\!\left(e^{-sQ_v}Ae^{-sQ_v^{\mathsf T}}\right),\qquad
 \Lambda_{s,Q_v}(A):=\frac{p_{s,Q_v}(A)}{p_{N,K}(A)},\\
 \ell_j(Q_v;A):=
 \left.\partial_s^j\log\Lambda_{s,Q_v}(A)\right|_{s=0},
 \qquad 1\le j\le4.
 \end{gathered}
\]
The centered congruence has Jacobian one by
Eq.~\eqref{eq:hide-centered-jacobian}, so $p_{s,Q_v}$ is its transported
density. The logarithmic derivatives are taken near $s=0$, where this
likelihood is positive, and $\Lambda_{0,Q_v}(A)=1$. Throughout the
estimates below, $\ell_j$ denotes these scores at the displayed $(Q_v;A)$.

For arbitrary real arguments $x_1,\ldots,x_4$, the fourth Bell polynomial
is \cite{Bell1934}
\begin{equation}
 \mathsf B_4(x_1,x_2,x_3,x_4)
 :=x_1^4+6x_1^2x_2+3x_2^2+4x_1x_3+x_4.
\label{eq:hide-fourth-bell-polynomial}
\end{equation}
Substituting the logarithmic scores defined above gives the fourth
likelihood derivative at zero, as recorded in
Eq.~\eqref{eq:hide-fourth-density-score-definition}.
\endgroup
The sign of the fourth logarithmic score is not an assumption.  Here is the
deterministic bridge that proves it.  For a supported COE corner $C$ and a
Hermitian direction $Q$, put
$M(C)=\left(\begin{smallmatrix}I&C\\ C^*&I\end{smallmatrix}\right)$,
$B(C)=M(-C)$, $J=\operatorname{diag}(I,-I)$,
$E_Q=\operatorname{diag}(Q,\overline Q)$, and
$S=M(C)^{-1}B(C)$.  For Hermitian $Q$, one has
$\overline Q=Q^{\mathsf T}$.  Direct block multiplication gives
\begin{equation}
 \begin{gathered}
 C_t=e^{-tQ}Ce^{-tQ^{\mathsf T}},\qquad
 D_t=\operatorname{diag}(e^{tQ},e^{t\overline Q}),\\
 D_tM(C_t)D_t
 =\begin{pmatrix}e^{2tQ}&C\\ C^*&e^{2t\overline Q}\end{pmatrix}.
 \end{gathered}
 \label{eq:hide-centered-block-factorization}
\end{equation}
Thus the lower block in $E_Q$ is $\overline Q=Q^{\mathsf T}$, rather than
$Q$; it is forced by the ordinary transpose in the congruence flow.
Moreover,
\begin{equation}
 \begin{aligned}
 M(C)B(C)&=B(C)M(C)\\
 &=\operatorname{diag}(I-CC^*,I-C^*C)>0.
 \end{aligned}
 \label{eq:hide-coe-block-product}
\end{equation}
The Schur complement criterion makes $M(C)$ and $B(C)$ Hermitian positive
definite on the open support, and the displayed commutation lets them be
simultaneously diagonalized.  Hence $S=M(C)^{-1}B(C)$ is Hermitian positive
definite.  On the open support,
$JSJ=S^{-1}$, $E_QJ=JE_Q$, and $M(C)^{-1}=(I+S)/2$.
Set $H(t):=D_tM(C_t)D_t=M(C)-I+e^{2tE_Q}$,
$W:=H(0)^{-1}=M(C)^{-1}$, and $E:=E_Q$.
On the open support, $H(t)$ is positive definite near zero and
$H^{(k)}(0)=2^kE^k$ for $1\le k\le4$, so Jacobi's formula followed by
cyclicity of trace and $W=(I+S)/2$ gives
\begin{equation}
 \begin{gathered}
  \begin{aligned}
  \mathcal J_4(W,E)
  &:=16WE^4-64WEWE^3-48WE^2WE^2\\
  &\hspace{2em}+192WEWEWE^2-96(WE)^4,
  \end{aligned}\\[2pt]
  \left.\frac{\mathrm d^4}{\mathrm dt^4}\log\det H(t)\right|_{t=0}
  =\operatorname{Re}\tr\mathcal J_4(W,E),\\[2pt]
  \begin{aligned}
  \tr\mathcal J_4\!\left(\frac{I+S}{2},E_Q\right)
  &=-2\tr(E_Q^4)+8\tr(E_Q^3SE_QS)\\
  &\hspace{2em}-6\tr((E_QS)^4).
  \end{aligned}
 \end{gathered}
\label{eq:hide-fourth-jacobi-reduction}
\end{equation}
For completeness, the key inequality has the following {finite-dimensional}
sum of squares proof.  Let $R=S^{1/2}$ and $X=RE_QR$.  Functional calculus
applied to $JSJ=S^{-1}$ gives $RJR=J$.  Put $Y=JXJ$.  Since $J^2=I$,
cyclicity of trace gives the exact identity
\begin{equation}
 \begin{aligned}
 &\tr((XJ)^4)+3\tr(X^4)-4\tr(X^3JXJ)\\
 &\quad=\frac12\tr((X-JXJ)^4)
   +\tr\!\left((X^2-(JXJ)^2)^2\right).
 \end{aligned}
\label{eq:hide-fourth-involution-sos}
\end{equation}
The matrices $X-JXJ$ and $X^2-(JXJ)^2$ are Hermitian.  Hence the two real
traces on the right are nonnegative, and taking real parts proves
\begin{equation}
 \begin{aligned}
 4\operatorname{Re}\tr(X^3JXJ)
 &\le \operatorname{Re}\tr((XJ)^4)\\
 &\quad+3\operatorname{Re}\tr(X^4).
 \end{aligned}
\label{eq:hide-fourth-involution-young}
\end{equation}
Finally, cyclicity together with $RJR=J$ and $E_QJ=JE_Q$ identifies the
three traces after $X=RE_QR$ and gives
\begin{equation}
 \begin{aligned}
 4\operatorname{Re}\tr(E_Q^3R^2E_QR^2)
 &\le \operatorname{Re}\tr(E_Q^4)\\
 &\quad+3\operatorname{Re}\tr((E_QR^2)^4).
 \end{aligned}
\label{eq:hide-fourth-cayley-bridge}
\end{equation}
Since $R^2=S$, substitution into
Eq.~\eqref{eq:hide-fourth-jacobi-reduction} yields
\begin{equation}
 \operatorname{Re}\tr\mathcal J_4(M(C)^{-1},E_Q)\le0.
\label{eq:hide-fourth-jacobi-sign}
\end{equation}
This holds for every Hermitian $Q$, hence in particular for $Q_v$.  In the
centered flow, the fourth determinant log derivative $J_4(A,v)$ therefore satisfies
\begin{equation}
 J_4(A,v)\le0.
\label{eq:hide-fourth-logdet-jet-sign}
\end{equation}
The exact score identity is
$\ell_4(Q_v;A)=(c/2)J_4(A,v)$; since $c/2\ge0$ when
$N\ge1$ and $2N+8\le K$, this proves $\ell_4(Q_v;A)\le0$ on the support.
Thus the sign follows from the preceding identity.  It is also unnecessary
to estimate its negative part: the fourth order estimate uses this sign,
the Bell polynomial sum of squares inequality below, and the exact $L^1$
bound.  The positive part of the remaining terms obeys
\begin{equation}
 \left(\ell_1^4+6\ell_1^2\ell_2+3\ell_2^2
 +4\ell_1\ell_3\right)_+
 \le \frac{19}{3}\ell_1^4+\frac{75}{16}\ell_2^2
 +4\abs{\ell_1\ell_3}.
\label{eq:hide-fourth-sos}
\end{equation}
The product law moment calculations evaluate the three primitive quantities
separately.  With $A\sim\mu_{K,K}$ and
$v\sim\operatorname{Unif}(S^{2N-1})$ independent, and with
$\ell_j=\ell_j(Q_v;A)$, the resulting bounds are
\begin{align}
 \E_{A,v}\abs{\ell_1}^{4}
 &\le 3151N^2,
 \label{eq:hide-fourth-ell-one-fourth-moment}\\
 \E_{A,v}\abs{\ell_2}^{2}
 &\le 9990N^2,
 \label{eq:hide-fourth-ell-two-square-moment}\\
 \E_{A,v}\abs{\ell_1\ell_3}
 &\le \frac{6773469}{320}N^2.
 \label{eq:hide-fourth-ell-one-three-moment}
\end{align}
\subsection{Primitive projective moment calculations}
\label{subsec:primitive-score-derivations}

The following calculations justify the three primitive bounds above.
With $W=M(C)^{-1}$ and $E=\operatorname{diag}(Q_v,\overline{Q_v})$
as in Eq.~\eqref{eq:hide-centered-block-factorization}, the first three
logarithmic scores are the explicit trace polynomials
\begin{align}
 \ell_1&=c\operatorname{Re}\tr(WE),\notag\\
 \ell_2&=2c\operatorname{Re}\tr(WE^2-WEWE),\notag\\
 \ell_3&=4c\operatorname{Re}\tr(WE^3-3WEWE^2+2(WE)^3).
 \label{eq:hide-first-three-log-scores}
\end{align}
They follow by differentiating $\log\det H(t)$ three times; the linear
Jacobian term vanishes since $\tr Q_v=0$. In particular
$\ell_1=2\tr(Q_vY)$.

For $r\le4$ the complete projective tensor identity is
\begin{equation}
 \E_v\prod_{j=1}^r v_{i_j}\overline{v_{k_j}}
 =\frac{\sum_{\pi\in S_r}\prod_{j=1}^r\delta_{i_j,k_{\pi(j)}}}
        {N(N+1)\cdots(N+r-1)}.
 \label{eq:hide-projective-tensor-four}
\end{equation}
It is the rank-one specialization of unitary integration
\cite[Section~2]{CollinsSniady2006}. Expanding matrix entries gives, for
any matrices $A_j$, a sum over permutations whose cycles contribute
traces of the corresponding ordered products. Thus all contractions used
here are finite sums with at most $4!=24$ terms.

\paragraph{First logarithmic score.}
Put $Y_0=Y-(T_1/N)I$. The only surviving fourth-order cycle types for
the traceless matrix $Y_0$ are $(2,2)$ and $(4)$, with multiplicities
three and six. Therefore
\begin{equation}
 \E_v\ell_1^4
 =\frac{16\{3(\tr Y_0^2)^2+6\tr Y_0^4\}}
 {N(N+1)(N+2)(N+3)}.
 \label{eq:hide-first-score-fourth-exact}
\end{equation}
Since $Y$ is positive semidefinite,
$0\le\tr Y_0^2=T_2-T_1^2/N\le(1-1/N)T_2$ and
$\tr Y_0^4\le(\tr Y_0^2)^2$. Hence
$\E_v\ell_1^4\le144(N-1)^2T_2^2/N^6$.
The first certificate in Table~\ref{tab:score-polynomial-certificates}
now gives $3151N^2$. Keeping the factor $(N-1)^2/N^2$ is necessary
for this constant.

\paragraph{Second logarithmic score.}
Set $W_0=I+Y/c$, $R=\sqrt c\,(I-CC^*)^{-1}C$, and $a=1/N$.
Then $R=R^{\T}$ and $RR^*=Y+Y^2/c$. Writing
$b=v^*Yv$, $e=v^*Y^2v$, $D_R=RR^*$,
$d=v^*D_Rv$, $u=|v^*R\overline v|^2$, the two centered sandwiches are
\begin{align}
 S_v&=\tr(Q_vW_0Q_vY)\notag\\
    &=(1-2a)b+a^2T_1
      +c^{-1}(b^2-2ae+a^2T_2),\notag\\
 C_v&=\tr(Q_vR Q_v^{\T}R^*)
       =u-2ad+a^2\tr D_R,\notag\\
 \ell_2&=-4(S_v+C_v).
 \label{eq:hide-second-score-sandwich-expansion}
\end{align}
For $N\ge2$, both parts of $S_v$ are nonnegative and their squares
are bounded by $2b^2+2a^4T_1^2$ and
$2b^4+2a^4T_2^2$, respectively. The projective identities give
$\E b^2=(T_1^2+T_2)/(N(N+1))$ and
$\E b^4\le24T_1^4/N^4$. With the four radial coordinates
\[
 (U_1,U_2,U_3,U_4)=
 \left(\frac{T_1^2}{N^2},\frac{T_2}{N^2},
       \frac{T_1^4}{N^4c^2},\frac{T_2^2}{N^2c^2}\right),
\]
we obtain $\E S_v^2\le8U_1+4U_2+96U_3+4U_4$.
For the other sandwich, put $r_j=\tr D_R^j$. Symmetry of $R$ and
Eq.~\eqref{eq:hide-projective-tensor-four} give
\[
 \begin{gathered}
 \E u^2=\frac{8r_1^2+16r_2}{N(N+1)(N+2)(N+3)},\\
 \E d^2=\frac{r_1^2+r_2}{N(N+1)}.
 \end{gathered}
\]
The three-term square inequality consequently bounds
$\E C_v^2\le(39r_1^2+60r_2)/N^4$.
Using $r_1^2\le2T_1^2+2T_2^2/c^2$ and
$r_2\le2T_2+2T_2^2/c^2$, followed by $N^2\ge4$, gives
$\E C_v^2\le40U_1+36U_2+76U_4$.
Finally $16(S_v+C_v)^2\le32(S_v^2+C_v^2)$ gives exactly
\begin{equation}
 \E_v\ell_2^2\le1536U_1+1280U_2+3072U_3+2560U_4.
 \label{eq:hide-second-score-fibre-envelope}
\end{equation}
Taking the outer expectation gives $E_{14}\le9990N^2$ in
Appendix~\ref{app:score-certificates}.

\paragraph{Absolute mixed first--third moment.}
Write $Z=Y/c$, $T=(I-CC^*)^{-1}C$, $Q=Q_v$,
$q=\tr(QZ)$, and $t=\tr Z$, $u=\tr Z^2$, $v_3=\tr Z^3$.
Block multiplication in Eq.~\eqref{eq:hide-first-three-log-scores}, using
$TT^*=Z+Z^2$, reduces the mixed product to
\begin{align}
 |\ell_1\ell_3|&\le16c^2|q(F_v+3G_v)|,\notag\\
 F_v&=\tr(Q(I+Z)QZQZ)\notag\\
 &\quad+\tr(Q(I+Z)Q(I+Z)QZ),\notag\\
 G_v&=\tr(QTQ^{\T}T^*Q(I+2Z)).
 \label{eq:hide-mixed-score-word-decomposition}
\end{align}
Here $q$ is real. For clarity, the finite H\"older calculation gives
\begin{align}
 N^4\E|qF_v|&\le P_N(t,u,v_3),\notag\\
 N^4\E|qG_v|&\le M_N(t,u,v_3),\notag\\
 P_N={}&23t^2(N+t)(N+2t)\notag\\*
 &+11t\{(N+2t)(t+u)+(N+3t+2u)t\notag\\*
 &\quad +(N+t)(t+2u)\}\notag\\*
 &+16t(t+3u+2v_3),\notag\\
 M_N={}&9Nt(N+2t)(t+u)\notag\\*
 &+6Nt(1+2t)(t+u)\notag\\*
 &+11t(t+u)(N+2t)+17t(t+3u+2v_3).
 \label{eq:hide-mixed-holder-polynomials}
\end{align}
Here are details of the contraction bounds. Expand every $Q=P-I/N$.
For a pure triple word, the three-, two-, and one-projector terms are
bounded after multiplication by $q$ with coefficients $23,11,16$;
the constant term is included in the last coefficient. To check these
ceilings, use H\"older with exponents $(4,4,4,4)$ for four factors,
$(2,4,4)$ for three, and $(2,2)$ for two, together with
\[
 \begin{gathered}
 \|q\|_2\le t/N,\qquad \|q\|_4^4\le9t^4/(N)_4,\\
 \|v^*Av\|_4^4\le24(\tr A)^4/(N)_4\quad(A\ge0).
 \end{gathered}
\]
Here $(N)_4=N(N+1)(N+2)(N+3)$. The relevant products of traces are,
in order, $t^2(N+t)(N+2t)$, the three products inside braces in
Eq.~\eqref{eq:hide-mixed-holder-polynomials}, and $t(t+3u+2v_3)$.
More explicitly, the leading coefficient is
$(9\cdot24^3)^{1/4}<23$, each two-projector coefficient is
$\sqrt{24}<11$, and the sum of the three one-projector terms and
the constant term is at most $3\sqrt2+1<16$. Since all matrices in
these pure words are polynomials in $Z$, their products are positive
semidefinite and the displayed trace bounds apply.

For the mixed word, write $p(A)=v^*Av$ and
$m(A,B)=\tr(PAP^{\T}B)$. Its eight terms are
\begin{align*}
 &p(C_0)m(T,T^*)\\
 &-N^{-1}\{m(C_0T,T^*)+m(T,T^*C_0)\}\\
 &-N^{-1}p(TT^*)p(C_0)\\
 &+N^{-2}\{p(C_0TT^*)+p(TT^*C_0)\}\\
 &+N^{-2}\tr(P^{\T}T^*C_0T)\\
 &-N^{-3}\tr(TT^*C_0),\qquad C_0=I+2Z.
\end{align*}
The bilinear fourth moment above and the same H\"older exponents bound
the leading term, the two bilinear cross terms, the product
$p(TT^*)p(C_0)$, and the remaining terms by the four terms of $M_N$,
respectively. In particular,
$\tr(TT^*)=t+u$, $\|C_0\|_{\rm op}\le1+2t$,
$\tr C_0=N+2t$, and $\tr(TT^*C_0)=t+3u+2v_3$.
For the leading term the coefficient is at most
$(9\cdot24^3)^{1/4}N^4/(N)_4\le9N$;
for the two bilinear cross terms it is at most $2\sqrt{24}\le6N$.
The product term has coefficient $\sqrt{24}<11$;
the remaining three single-projector terms and the constant term have
combined coefficient at most $3\sqrt2+1<17$. The leading bound for
$N=2$ uses $(N)_4=120$; for $N\ge3$ its constant is already below
$9N$. This explicitly accounts for absolute values;
an identity for unsigned trace moments alone would not suffice.

The elementary trace inequalities $t^2\le Nu$ and $v_3\le tu$ imply
\begin{equation}
 P_N+3M_N\le\frac{1038}{5}\{N^2(t^2+u)+t^4+N^2u^2\}.
 \label{eq:hide-mixed-scalar-envelope}
\end{equation}
An exact certificate is as follows. For $t>0$ put $q_0=Nu/t^2\ge1$,
$b=1038/5$, and
\begin{align*}
 A&=bN(N+q_0)-(50N^2+84N+67),\\
 B&=N^2(159+27q_0)+N(132+84q_0)\\
 &\quad+201q_0,\\
 D&=N\{b(1+q_0^2)-(46+90q_0)\}-266q_0.
\end{align*}
For $N=2+e$, $q_0=1+r$, the coefficients of $A$ and $D$ are positive,
and $25(4NAD-B^2)$ has the following coefficient matrix (row $e^i$,
column $r^j$):
{\scriptsize
\[
\setlength{\arraycolsep}{2pt}\begin{pmatrix}
663&53758374&93544575&34478208\\
46963768&153198872&184270368&51717312\\
60152620&130244862&125946522&25858656\\
25532288&43543120&35232096&4309776\\
3535292&4874052&3253551&0
\end{pmatrix}.
\]
}
Thus $NA-Bt+Dt^2\ge0$ by completing the square. Direct expansion
identifies $N$ times the difference in
Eq.~\eqref{eq:hide-mixed-scalar-envelope}, after replacing $v_3$ by
$tu$, with $t^2(NA-Bt+Dt^2)$. If $t=0$ both trace polynomials vanish.
Multiplying by $16c^2/N^4$ and substituting $Y=cZ$ now gives the
radial bound $(1038/(5\cdot512))E_{13}$.
Its exact value is at most
$(1038/(5\cdot512))52204N^2=(6773469/320)N^2$.
For $N=1$, $Q_v=0$ and all three primitive estimates are immediate.

Combining the three displayed bounds in Eq.~\eqref{eq:hide-fourth-sos} gives
the coefficient $C_{4,+}$.  Its exact value is
\begin{equation}
 C_{4,+}=\frac{36348677}{240}.
\label{eq:hide-fourth-positive-constant}
\end{equation}
Zero total mass of the fourth density derivative costs a factor two, and the
integer ceiling $C_4$ satisfies
\begin{equation}
 2C_{4,+}\le C_4.
\label{eq:hide-fourth-ceiling}
\end{equation}
The ceiling itself is
\begin{equation}
 C_4=302906.
\label{eq:hide-fourth-constant}
\end{equation}
\begingroup
Since $\Lambda_{0,Q_v}(A)=1$, the fourth density score is
\begin{equation}
 \begin{aligned}
 L_{4,Q_v}(A)
 &:=\left.\partial_s^4\Lambda_{s,Q_v}(A)\right|_{s=0}\\
 &=\mathsf B_4\!\left(\ell_1(Q_v;A),\ell_2(Q_v;A),\right.\\[-2pt]
 &\hspace{5.5em}\left.\ell_3(Q_v;A),\ell_4(Q_v;A)\right).
 \end{aligned}
 \label{eq:hide-fourth-density-score-definition}
\end{equation}
The density derivatives are extended by zero off the open support, as in
Appendix~\ref{app:coe-density}. Write $\sigma$ for the uniform probability
law on $S^{2N-1}$.
\endgroup
Consequently, for $N\ge1$ and $K\ge16N$, the fourth density score under the
product of the scaled COE corner law and the uniform complex sphere law
satisfies
\begin{equation}
 \norm{L_{4,Q_v}}_{L^1(\mu_{K,K}\otimes\sigma)}
 \le C_4N^2.
\label{eq:hide-fourth-density-L1}
\end{equation}

\subsection{Quadratic score}

Let $T_1=\tr Y$, $T_2=\tr Y^2$, and retain the COE exponent $c$ from
Appendix~\ref{app:coe-density}.  The exact averaged centered quadratic
density is
\begin{equation}
 \begin{aligned}
 \Psi_{N,K}
 =\frac{4}{N(N+1)}\biggl[
 &T_2-\frac{T_1^2}{N}
 -\frac{(N-1)(N+2)}{N}T_1\\
 &-\frac1c\left\{T_1^2+\frac{N-2}{N}T_2\right\}
 \biggr].
 \end{aligned}
\label{eq:hide-quadratic-density}
\end{equation}
Collecting this polynomial before applying a norm is essential: it preserves
the centering cancellation.  The exact low order trace recurrence gives the
coefficient independent of dimension
\begin{equation}
 C_{2,\mathrm{orb}}:=573.
\label{eq:hide-quadratic-constant}
\end{equation}
For $N\ge1$ and $K\ge16N$, the corresponding density estimate is
\begin{equation}
 \norm{\Psi_{N,K}}_{L^1(\mu_{K,K})}
 \le C_{2,\mathrm{orb}}.
\label{eq:hide-quadratic-L1}
\end{equation}

\subsection{Cubic score}

For $N\ge1$ and $K\ge16N$, let $\overline L_{3,N,K}$ be the projectively
averaged centered cubic density.  The exact trace collection isolates the
nonnegative raw third trace:
\begin{equation}
 \overline L_{3,N,K}(A)
 =\gamma_{N,K}\,\tr(Y(A)^3)+R_{N,K}(A).
\label{eq:hide-cubic-trace-decomposition}
\end{equation}
The explicit positive coefficient and all five remainder coefficients are given in Eqs.~\eqref{eq:hide-cubic-gamma-explicit}--\eqref{eq:hide-cubic-remainder-explicit}.  For $N\ge2$ and
$K\ge16N$, its exact $L^1$ estimate is
\begin{equation}
 \norm{R_{N,K}}_{L^1(\mu_{K,K})}
 \le R_3N.
\label{eq:hide-cubic-residual-L1}
\end{equation}
Writing
$a=1147/475$, $b=7439/793$, $d=533/108$, and $e=3/2$, the remainder
constant is defined by
\begin{equation}
 R_3:=36a^3+26a^2+52ab+24d+8e.
\label{eq:hide-cubic-residual-definition}
\end{equation}
Its exact rational value is
\begin{equation}
 R_3=\frac{115724477938672}{58837359375}.
\label{eq:hide-cubic-residual-value}
\end{equation}
The cubic constant valid in every dimension $C_3$ absorbs twice this
remainder:
\begin{equation}
 2R_3\le C_3.
\label{eq:hide-cubic-ceiling}
\end{equation}
Its evaluated value is
\begin{equation}
 C_3=3934.
\label{eq:hide-cubic-constant}
\end{equation}
For $N\ge1$ and $K\ge16N$, total mass normalization is exact:
\begin{equation}
 \int \overline L_{3,N,K}(A)\,\mu_{K,K}(\dd A)=0.
\label{eq:hide-cubic-zero-mass}
\end{equation}
Under the same hypotheses, combining positivity of the raw third trace with
the remainder estimate gives the density bound, including the
identically zero $N=1$ branch,
\begin{equation}
 \norm{\overline L_{3,N,K}}_{L^1(\mu_{K,K})}
 \le C_3N.
\label{eq:hide-cubic-L1}
\end{equation}

\subsection{Derivation of the low-order event constants}

Let $d=N(N+1)$ and $T_j=\tr(Y^j)$. The exact rational moments in
Appendix~\ref{app:score-certificates} give
\begin{equation}
 \begin{gathered}\E T_1=d,\\
 \operatorname{Var}(T_1)=
 \frac{2N(N+1)(c+N)(c+N+1)}{(c-2)(c+1)}
 \le\frac{60}{11}N^2.\end{gathered}
 \label{eq:hide-central-variance-explicit}
\end{equation}
For the scalar congruence $A\mapsto e^{2t}A$, direct differentiation
of its density, including the coordinate Jacobian, yields
\[
 \begin{gathered}L_1=2(T_1-d),\\
 L_2=4(T_1-d)^2-8(T_1+T_2/c).\end{gathered}
\]
The displayed rational functions imply
$8\E(T_1+T_2/c)=4\operatorname{Var}(T_1)$, also obtained from
$\int L_2\,\dd\mu=0$. Hence
\[
 \begin{gathered}\|L_1\|_1\le2\sqrt{60/11}\,N<5N,\\
 \|L_2\|_1\le\frac{480}{11}N^2<44N^2.\end{gathered}
\]
Scalar congruences form a group of measurable bijections. Transporting
their signed density derivatives preserves total variation, so the second
bound holds at every real time. Integrating the derivatives over a Borel
event proves the two central bounds used below.

For the averaged quadratic density, write $U=T_1-\E T_1$,
$V=T_1^2-\E T_1^2$, $W=T_2-\E T_2$.
Zero total derivative mass and Eq.~\eqref{eq:hide-quadratic-density} give
\begin{equation}
 \begin{aligned}
 \Psi_{N,K}=\frac4{N(N+1)}\biggl[
 &\left(1-\frac{N-2}{Nc}\right)W\\
 &-\left(\frac1N+\frac1c\right)V\\
 &-\frac{(N-1)(N+2)}N U\biggr].
 \end{aligned}
 \label{eq:hide-quadratic-centered-explicit}
\end{equation}
The variance certificates in Table~\ref{tab:score-polynomial-certificates}
give $\|U\|_2\le(29/12)N$, $\|V\|_2\le34N^3$,
and $\|W\|_2\le34N^2$. The three absolute coefficients in brackets
are at most $2$, $2/N$, and $3N$. Consequently
\[
 \begin{aligned}
 \|\Psi_{N,K}\|_1&\le\|\Psi_{N,K}\|_2\\
 &\le4\left(2\cdot34+2\cdot34+3\cdot\frac{29}{12}\right)\\
 &=573.
 \end{aligned}
\]
The centering is performed before the norm: bounding the original raw
trace terms separately would lose the dimension-independent estimate.

\subsection{Eventwise score bounds}

The one step argument works eventwise.  This avoids identifying an
abstract signed derivative measure when only its values on measurable events
are needed.  For $N\ge1$, $K\ge16N$, and a measurable event $E$, let $F_E(t)$
be its probability after the central congruence $C\mapsto e^{tI}Ce^{tI}$ under
$\mu_{K,K}$.  Then
\begin{equation}
 \abs{F_E'(0)}\le5N.
\label{eq:hide-central-first-event}
\end{equation}
At every real interpolation point $t$,
\begin{equation}
 \abs{F_E''(t)}\le44N^2.
\label{eq:hide-central-second-event}
\end{equation}

For the orbital path that remains valid under correlation, fix the same beta sample $q$ that is
used in the central coefficient and define $G_{E,q}(t)$ by averaging the
central congruence after the orbital congruence over $v$.  Its centered first derivative
vanishes:
\begin{equation}
 G_{E,q}'(0)=0.
\label{eq:hide-orbital-first-event}
\end{equation}
Its second derivative at the origin satisfies
\begin{equation}
 \abs{G_{E,q}''(0)}\le573.
\label{eq:hide-orbital-second-event}
\end{equation}
The fourth derivative bound is uniform over every real $t$:
\begin{equation}
 \abs{G_{E,q}^{(4)}(t)}\le302906N^2.
\label{eq:coe-fourth-event-bound}
\end{equation}

Let $\widehat b_m(q)$ equal $b_m(q)$ on
$\abs{b_m(q)}\le1/N$ and zero otherwise.  For every $t$ on the oriented
interval from $0$ to $\widehat b_m(q)$, the exact third derivative propagation
is
\begin{equation}
 \abs{G_{E,q}^{(3)}(t)}\le306840N.
\label{eq:hide-propagated-third-certificate}
\end{equation}
Consequently, the required eventwise constants are
$5$, $44$, $573$, and $306840$.  The last number is the sum of the
origin cubic coefficient $3934$ and the fourth derivative coefficient
$302906$.

\subsection{Direct proof of Lemma~\ref{lem:hide-scores}}
\label{subsec:hide-scores-direct-proof}

\begin{proof}[Proof of Lemma~\ref{lem:hide-scores}]
Fix $N\ge1$, $K\ge16N$, and a Borel set $E$.  Equations
\eqref{eq:hide-coe-scaling} and~\eqref{eq:hide-base} identify
$\lambda_{N,K}$ with $\mu_{K,K}$, so the central and orbital paths constructed
in Appendix~\ref{app:coe-density} are exactly the paths $S_E$ and $O_E$ in
Eq.~\eqref{eq:hide-event-paths}.  Lemma~\ref{lem:coe-boundary-regularity}
proves that $O_E$ is $C^4$.  The scalar argument following that lemma proves
that $S_E$ is $C^2$.  Equation~\eqref{eq:hide-centered-derivative-interchange}
justifies every differentiation below.

The path called $F_E$ in Eqs.~\eqref{eq:hide-central-first-event}
and~\eqref{eq:hide-central-second-event} is $S_E$ by definition.  Those two
equations therefore give
\begin{equation}
 \abs{S_E'(0)}\le5N,\qquad
 \abs{S_E''(t)}\le44N^2\quad(t\in\R).
 \label{eq:hide-lemma-three-scalar-closure}
\end{equation}

For the orbital path, projective centering gives
$\int Q_v\,\dd\sigma(v)=0$ and hence the first derivative vanishes.  The
averaged second density is $\Psi_{N,K}$ from
Eq.~\eqref{eq:hide-quadratic-density}.  Integrating that density over $E$ and
using Eq.~\eqref{eq:hide-quadratic-L1} gives
\begin{equation}
 O_E'(0)=0,\qquad
 \abs{O_E''(0)}\le
 \norm{\Psi_{N,K}}_{L^1(\mu_{K,K})}\le573.
 \label{eq:hide-lemma-three-orbital-low-closure}
\end{equation}

Likewise, the projectively averaged third density is
$\overline L_{3,N,K}$.  Equations~\eqref{eq:hide-cubic-trace-decomposition}
through~\eqref{eq:hide-cubic-L1} give the origin estimate
\begin{equation}
 \abs{O_E'''(0)}\le
 \norm{\overline L_{3,N,K}}_{L^1(\mu_{K,K})}
 \le C_3N=3934N.
 \label{eq:hide-lemma-three-cubic-origin}
\end{equation}
The group identity in Eq.~\eqref{eq:hide-centered-flow-group} permits the
same fourth order density calculation at every base time.  More explicitly,
for each $v$, Eqs.~\eqref{eq:hide-centered-flow-group} and
\eqref{eq:hide-centered-jacobian} identify the fourth density derivative at time
$u$ with the volume preserving pushforward of the corresponding derivative at zero.
Its $L^1$ norm is therefore unchanged.  Averaging over $v$ and applying
Eq.~\eqref{eq:hide-fourth-density-L1} yields
\begin{equation}
 \abs{O_E^{(4)}(u)}\le302906N^2\qquad(u\in\R).
 \label{eq:hide-lemma-three-fourth-uniform}
\end{equation}
Since $O_E$ is $C^4$, the fundamental theorem of calculus applied to
$O_E'''$ now gives, whenever $\abs{s}\le1/N$,
\begin{equation}
 \begin{aligned}
 \abs{O_E'''(s)}
 &\le3934N+302906N^2\abs{s}\\
 &\le(3934+302906)N=306840N.
 \end{aligned}
 \label{eq:hide-lemma-three-cubic-propagation}
\end{equation}
The regularity proved in Appendix~\ref{app:coe-density}, together with
Eq.~\eqref{eq:hide-lemma-three-scalar-closure}, gives the $C^2$ scalar path
and both bounds in Eq.~\eqref{eq:hide-scale-score}.
For the $C^4$ orbital path, Eq.~\eqref{eq:hide-lemma-three-orbital-low-closure}
gives the cancellation and second derivative bound, while
Eq.~\eqref{eq:hide-lemma-three-cubic-propagation} gives the uniform third
derivative bound on $|s|\le1/N$. These are exactly the assertions of
Eq.~\eqref{eq:hide-orbital-score}. Since $E$ was arbitrary, this completes
the proof of Lemma~\ref{lem:hide-scores} with all its stated quantifiers.
\end{proof}

\subsection{Proof of Proposition~\ref{prop:hide-one-step}}
\label{subsec:hide-one-step-direct-proof}

\begin{proof}[Proof of Proposition~\ref{prop:hide-one-step}]
Fix $m\ge\max\{K,24N^2\}$ and $K\ge16N$.  These hypotheses imply every
size condition used below.

For $1\le N\le m$, put $\alpha=m-N+1$ and let
$q_m\sim\operatorname{Beta}(\alpha,N)$.  Write
$b_m=\tfrac12\log q_m$ and
$c_m=\tfrac12\log(1+1/m)+b_m/N$.  For
\begin{math}H_r=\sum_{j=m-N+1}^{m}j^{-r}\end{math}, the exact mean is
\begin{equation}
 \E b_m=-\frac{H_1}{2}.
\label{eq:hide-beta-log-mean}
\end{equation}
The exact second moment is
\begin{equation}
 \E b_m^2=\frac{H_1^2+H_2}{4}.
\label{eq:hide-beta-log-second-moment}
\end{equation}
The exact third absolute moment is
\begin{equation}
 \E\abs{b_m}^3=\frac{H_1^3+3H_1H_2+2H_3}{8}.
\label{eq:hide-beta-log-third-moment}
\end{equation}
\begingroup
To derive these identities, recall $\alpha=m-N+1>0$. The beta integral
gives, for real $z> -\alpha$ and $t<\alpha$, respectively,
\[
 \begin{gathered}
 \E q_m^z=\prod_{j=m-N+1}^{m}\frac{j}{j+z},\\
 \log\E e^{t(-\log q_m)}
 =-\sum_{j=m-N+1}^{m}\log(1-t/j).
 \end{gathered}
\]
Both identities hold in a neighborhood of zero. Differentiating the second
at $t=0$ gives the first three cumulants of $-\log q_m$ as
$H_1,H_2,2H_3$; the moment--cumulant identities give the three displayed
formulas \cite[Sections~5.12 and 5.15]{DLMFBetaPolygamma}.
\endgroup
For $m\ge2N$, $H_r\le2^rN/m^r$. Moreover
\[
 \begin{gathered}
 0\le H_1/N-1/m\le2(N-1)/m^2,\\
 0\le1/m-\log(1+1/m)\le1/(2m^2).
 \end{gathered}
\]
Consequently $|\E c_m|\le N/m^2$ and
$\E c_m^2\le1/(Nm^2)+N^2/m^4\le5/(4m^2)$.
The preceding exact moments also give
$\E b_m^2\le2N^2/m^2$ and
$\E|b_m|^3\le6N^3/m^3$.
Using $(m+1)/m\le3/2$ and, for the third moment, $m\ge N^2$,
proves each of the following slightly relaxed common-denominator bounds.

For $1\le N$ and $2N\le m$, the centered scalar mean obeys
\begin{equation}
 \abs{\E c_m}\le\frac{4N}{m(m+1)}.
\label{eq:hide-c-mean}
\end{equation}
Under the same hypotheses, its second moment obeys
\begin{equation}
 \E c_m^2\le\frac{10}{m(m+1)}.
\label{eq:hide-c-moments}
\end{equation}
Under the same hypotheses, the rank one logarithm has the exact
second moment estimate
\begin{equation}
 \E b_m^2\le\frac{4N^2}{m(m+1)}.
\label{eq:hide-b-second-moment}
\end{equation}
If in addition $N^2\le m$, its third absolute moment obeys
\begin{equation}
 \E\abs{b_m}^3\le\frac{12N}{m(m+1)}.
\label{eq:hide-b-moment}
\end{equation}

Apply the exact split in Eq.~\eqref{eq:hide-split}
with one common beta sample $q_m$.  Write
$\widehat b_m:=b_m\mathbf1_{\{\abs{b_m}\le1/N\}}$, so that
$\widehat b_m$ equals $b_m$ on the good event and is zero on the exceptional
event.  Let $A\sim\mu_{K,K}$ be independent of $(q_m,v)$.  For the congruence
map $\rho_H(A)=HAH^{\mathsf T}$, let $\eta_m^{\rm sc}$ be the law of
$\rho_{e^{c_mI}}(A)$, let $\eta_m^{\rm good}$ be the law of
$\rho_{e^{\widehat b_mQ_v}}(\rho_{e^{c_mI}}(A))$, and let
$\eta_m^{\rm full}$ be the same law with $b_m$ in place of
$\widehat b_m$.  Thus $\eta_m^{\rm full}=\mathsf R_{m,N}\mu_{K,K}$ is the
complete square base update.

Equations~\eqref{eq:hide-radial-one-column-commute}
and~\eqref{eq:hide-radial-TV} apply the same radial Markov operator to the
two square base laws without increasing total variation.  Taking $r=m$ in
those equations and using the recursion in
Eq.~\eqref{eq:hide-recursion-measure} identifies the transported laws with
$\mu_{m+1,K}$ and $\mu_{m,K}$.  The total variation triangle inequality gives
the master decomposition
\begin{equation}
 \begin{aligned}
 &d_{\rm TV}(\mu_{m+1,K},\mu_{m,K})\\
 &\quad\le d_{\rm TV}(\mu_{K,K},\eta_m^{\rm full})\\
 &\quad\le d_{\rm TV}(\mu_{K,K},\eta_m^{\rm sc})\\
 &\qquad+d_{\rm TV}(\eta_m^{\rm sc},\eta_m^{\rm good})\\
 &\qquad+d_{\rm TV}(\eta_m^{\rm good},\eta_m^{\rm full}).
 \end{aligned}
\label{eq:hide-one-step-taylor-decomposition}
\end{equation}

For probability laws on this Borel space,
$d_{\rm TV}(\nu,\mu)=\sup_{E\ {\rm Borel}}\abs{\nu(E)-\mu(E)}$.
Thus Taylor expansion is applied eventwise before taking the supremum.
For a Borel $E$ and fixed $q_m$, set
$E_{q_m}:=\rho_{e^{c_mI}}^{-1}(E)$ and
$G_{E,q_m}(s):=O_{E_{q_m}}(s)$.  Define
\begin{equation}
 \begin{aligned}
 \mathsf T_{\rm sc}(E)
 &:=\abs{S_E'(0)}\abs{\E c_m}
   +\frac{\E c_m^2}{2}\sup_{t\in\R}\abs{S_E''(t)},\\
 \mathsf T_{\rm orb}(E)
 &:=\frac12\E\!\left[\widehat b_m^2
      \abs{G_{E,q_m}''(0)}\right]\\
 &\quad+\frac16\E\!\left[\abs{\widehat b_m}^3
      \sup_{\abs{s}\le1/N}\abs{G_{E,q_m}'''(s)}\right],\\
 &\abs{\eta_m^{\rm sc}(E)-\mu_{K,K}(E)}\\
 &\quad=\abs{\E\{S_E(c_m)-S_E(0)\}}
 \le\mathsf T_{\rm sc}(E),\\
 &\abs{\eta_m^{\rm good}(E)-\eta_m^{\rm sc}(E)}\\
 &\quad=\abs{\E\{G_{E,q_m}(\widehat b_m)-G_{E,q_m}(0)\}}\\
 &\quad\le\mathsf T_{\rm orb}(E).
 \end{aligned}
\label{eq:hide-one-step-eventwise-taylor}
\end{equation}
The first comparison expands $S_E(c_m)-S_E(0)$ at scalar time zero and
retains the signed mean $\E c_m$.  Conditionally on the same $q_m$, the
second expands $G_{E,q_m}(\widehat b_m)-G_{E,q_m}(0)$ at orbital time zero.

We first explain exactly where Lemma~\ref{lem:hide-scores} enters.  For the
scalar mixture, second order Taylor expansion of $S_E(c_m)$ at zero uses
$\abs{S_E'(0)}\le5N$ and the global bound
$\abs{S_E''(t)}\le44N^2$.  Taking expectation before bounding the linear term
preserves the cancellation in $\E c_m$.  Equations~\eqref{eq:hide-c-mean}
and~\eqref{eq:hide-c-moments} therefore give, uniformly over Borel $E$,
\begin{equation}
 \begin{aligned}
 d_{\rm TV}(\mu_{K,K},\eta^{\rm sc}_m)
 &\le5N\abs{\E c_m}\\
 &\quad+\frac{44N^2}{2}\E c_m^2\\
 &\le\{4(5)+5(44)\}\frac{N^2}{m(m+1)}\\
 &=240\frac{N^2}{m(m+1)}.
 \end{aligned}
 \label{eq:hide-proposition-four-scalar-use}
\end{equation}
This is the scalar contribution in
Eq.~\eqref{eq:hide-one-step-taylor-decomposition}.

For the orbital mixture, fix $q_m$ and a Borel set $E$.  The scalar congruence
is a measurable bijection, so the event $E_{q_m}$ defined above is Borel.
Because the scalar matrix commutes with $e^{sQ_v}$, Lemma~\ref{lem:hide-scores}
therefore gives
$G_{E,q_m}'(0)=0$, $\abs{G_{E,q_m}''(0)}\le573$, and
$\abs{G_{E,q_m}'''(s)}\le306840N$ on the whole interval from zero to
$\widehat b_m$.  Third order Taylor expansion, followed by
Eqs.~\eqref{eq:hide-b-second-moment} and~\eqref{eq:hide-b-moment}, gives
\begin{equation}
 \begin{aligned}
 d_{\rm TV}(\eta^{\rm sc}_m,\eta^{\rm good}_m)
 &\le\sup_{E\ {\rm Borel}}\mathsf T_{\rm orb}(E)\\
 &=\sup_{E\ {\rm Borel}}\left\{
      \frac12\E\!\left[\widehat b_m^2
      \abs{G_{E,q_m}''(0)}\right]\right.\\
 &\qquad\left.{}+\frac16\E\!\left[
      \abs{\widehat b_m}^3
      \sup_{\abs{s}\le1/N}\abs{G_{E,q_m}'''(s)}\right]
      \right\}\\
 &\le\frac{573}{2}\E b_m^2\\
 &\quad+\frac{306840N}{6}\E\abs{b_m}^3\\
 &\le\{2(573)+2(306840)\}\\
 &\quad\times\frac{N^2}{m(m+1)}\\
 &=614826\frac{N^2}{m(m+1)}.
 \end{aligned}
 \label{eq:hide-proposition-four-orbital-use}
\end{equation}
The first derivative cancellation is why no term proportional to
$\E\abs{b_m}$ appears.  The dependence of
$E_{q_m}$ on the same beta sample causes no loss because
Lemma~\ref{lem:hide-scores} is uniform over every Borel event.

The exceptional event is controlled from the exact beta moment.  For
$N\ge1$ and $N\le m$,
\begin{equation}
 \Pp\{\abs{b_m}>1/N\}
 \le\left(\frac{6N^2}{m}\right)^{N+2}.
\label{eq:hide-tail}
\end{equation}
Indeed, the beta Mellin transform identifies $-\log q_m$ with the
sum of independent exponentials with rates $m-N+1,\ldots,m$.
For $m\ge2N$, coupling each exponential to a rate-one exponential bounds
this sum by $G/(m-N+1)$, where $G\sim\GammaLaw(N,1)$. Markov's
inequality at the integer order $N+2$ gives
\[
 \begin{aligned}
 \Pp\{|b_m|>1/N\}
 &\le\left(\frac{N(2N+1)}{2(m-N+1)}\right)^{N+2}\\
 &\le(3N^2/m)^{N+2}.
 \end{aligned}
\]
We used $\E G^{N+2}=N(N+1)\cdots(2N+1)\le(2N+1)^{N+2}$.
For $N\le m<2N$ the right side of Eq.~\eqref{eq:hide-tail} exceeds
one, so that range is immediate. Finally, $(m+1)/m^{N+1}$ decreases
with $m$. At $m=24N^2$ this shows
\[
 \frac{m(m+1)}{N^2}(6N^2/m)^{N+2}
 \le\frac{24(24N^2+1)}{4^{N+2}}\le\frac{75}{2}<72.
\]
The last inequality uses $N^2\le4^N$, proved by induction from
$(N+1)^2\le4N^2$ for $N\ge1$.

Under the present hypothesis $24N^2\le m$, this gives
\begin{equation}
 \Pp\{\abs{b_m}>1/N\}
 \le72\frac{N^2}{m(m+1)}.
\label{eq:hide-tail-telescoping}
\end{equation}
On $\abs{b_m}>1/N$, couple the full and truncated updates by the same
matrix and the same samples.  They agree off that exceptional event, so
Eq.~\eqref{eq:hide-tail-telescoping} gives
$d_{\rm TV}(\eta^{\rm good}_m,\eta^{\rm full}_m)
\le72N^2/[m(m+1)]$.

The dense coefficient is $615138$. We retain the convenient common
constant $C_\ast=615172$, which also dominates the rectangular and trivial
branch constants $68$ and $24$. The extra slack is not required by the
exceptional-event estimate.
Inserting the scalar, orbital, and exceptional event estimates into
Eq.~\eqref{eq:hide-one-step-taylor-decomposition} now gives
\begin{equation}
 \begin{aligned}
 d_{\rm TV}(\mu_{m+1,K},\mu_{m,K})
 &\le\{240+614826+72\}\\
 &\quad\times\frac{N^2}{m(m+1)}\\
 &=615138\frac{N^2}{m(m+1)}\\
 &\le C_\ast\frac{N^2}{m(m+1)},
 \end{aligned}
 \label{eq:hide-proposition-four-closure}
\end{equation}
Specifically, Eqs.~\eqref{eq:hide-proposition-four-scalar-use},
\eqref{eq:hide-proposition-four-orbital-use}, and
\eqref{eq:hide-tail-telescoping} supply the three terms in
Eq.~\eqref{eq:hide-one-step-taylor-decomposition}. Radial total variation
contraction in Eq.~\eqref{eq:hide-radial-TV} then gives
Eq.~\eqref{eq:hide-proposition-four-closure}, which is
Eq.~\eqref{eq:hide-one-step} under the stated hypotheses. This completes
the proof of Proposition~\ref{prop:hide-one-step}.
\end{proof}
\endgroup
\begingroup\section{{Rectangular Haar entropy from the strict size density}}
\label{app:rectangular-kl}

{The proof begins from the strict size complex Haar block
density in Ref.~\cite[Proposition~2.1, Eq.~(2.4)]{Jiang2009HaarBlocks}.  This appendix gives the second
order entropy argument used by the uniform theorem.}

\begin{proof}[Proof of Lemma~\ref{lem:hide-sparse}]
Put $Z=\sqrt M U_{p,q}$, $W=Z^*Z$, and $s=p+q<M$.
Let $P_Z$ be the law of $Z$ and let $P_G$ be the standard complex Gaussian
law on $\C^{p\times q}$.  All expectations below are taken over the Haar
unitary, or equivalently over the induced law $P_Z$ when the integrand is a
function of $Z$.

\emph{The exact entropy identity.}
Define the normalizing constant and its logarithm by
\begin{equation}
 \begin{aligned}
 c_{M,p,q}&:=\prod_{j=1}^{q}\prod_{\ell=0}^{p-1}
       \left(1-\frac{j+\ell}{M}\right),\\
 L_0&:=\log c_{M,p,q}
     =\sum_{j=1}^{q}\sum_{\ell=0}^{p-1}
       \log\!\left(1-\frac{j+\ell}{M}\right).
 \end{aligned}
 \label{eq:rectangular-log-normalizer}
\end{equation}
Every factor is positive because $j+\ell\le s-1<M$.
Scaling the block density in
Ref.~\cite[Proposition~2.1, Eq.~(2.4)]{Jiang2009HaarBlocks} by
$z=\sqrt M\,u$ and dividing by the Gaussian density
$\pi^{-pq}e^{-\tr(z^*z)}$ gives
\begin{equation}
 \begin{aligned}
 \frac{\dd P_Z}{\dd P_G}(z)
 &=c_{M,p,q}\,e^{\tr(z^*z)}
   \det\!\left(I_q-\frac{z^*z}{M}\right)^{M-s}\\
 &\quad{}\times\1_{\{z^*z<MI_q\}}.
 \end{aligned}
 \label{eq:rectangular-rn-density}
\end{equation}
The source uses a nonstrict support inequality.  Its boundary is the zero
set of the nonzero polynomial $\det(I_q-z^*z/M)$, so it has Lebesgue,
Gaussian, and Haar block probability zero.  Thus
Eq.~\eqref{eq:rectangular-rn-density} holds almost everywhere with the
displayed strict support.

Taking the logarithm on that support yields the identity
\begin{equation}
 \log\frac{\dd P_Z}{\dd P_G}(Z)
 =L_0+\tr W+(M-s)\log\det(I_q-W/M),
 \qquad P_Z\text{-almost surely}.
 \label{eq:rectangular-log-likelihood}
\end{equation}
These terms are integrable: the support is bounded, $\tr W<Mq$, and the
Lebesgue density of $P_Z$ is a constant times
$t^{M-s}$, where $t=\det(I_q-W/M)\in(0,1]$.
The function $t^{M-s}|\log t|$ is bounded since $M-s>0$.
Consequently the definition of relative entropy gives the exact equality
\begin{equation}
 \begin{aligned}
 D_{\rm KL}(P_Z\Vert P_G)
 &=\E\!\left[\log\frac{\dd P_Z}{\dd P_G}(Z)\right]\\
 &=L_0+\E\tr W+(M-s)\E\log\det(I_q-W/M).
 \end{aligned}
 \label{eq:rectangular-kl-identity}
\end{equation}
We now bound the terms in this equality.

\emph{The normalizer and log determinant.}
The scalar inequality $\log(1-x)\le-x$ and the sum
$\sum_{j=1}^{q}\sum_{\ell=0}^{p-1}(j+\ell)=pqs/2$ imply
\begin{equation}
 L_0\le-\frac{pqs}{2M}.
 \label{eq:rectangular-normalizer-bound}
\end{equation}
On the support of $P_Z$, all eigenvalues of $W/M$ lie in $[0,1)$.
Applying $\log(1-x)\le-x-x^2/2$ to each eigenvalue gives
\begin{equation}
 \log\det(I_q-W/M)
 \le-\frac{\tr W}{M}-\frac{\tr(W^2)}{2M^2}.
 \label{eq:rectangular-logdet-bound}
\end{equation}
Substitution into Eq.~\eqref{eq:rectangular-kl-identity}, using $M-s>0$,
therefore gives
\begin{equation}
 D_{\rm KL}(P_Z\Vert P_G)
 \le L_0+\frac{s}{M}\E\tr W
       -\frac{M-s}{2M^2}\E\tr(W^2).
 \label{eq:rectangular-kl-moment-bound}
\end{equation}

\emph{The two Haar Gram moments.}
For $1\le a,b\le p$ and $1\le i,j\le q$, write
$U_{ai}$ for the entries of the Haar unitary and $\delta$ for the
Kronecker delta.  The second entry moment is $\E|U_{ai}|^2=1/M$.
The fourth entry moment from the unitary Weingarten formula
\cite[Sec.~2]{CollinsSniady2006} is
\begin{equation}
 \E\!\left[\overline{U_{ai}}U_{aj}
             \overline{U_{bj}}U_{bi}\right]
 =\frac{\delta_{ij}+\delta_{ab}}{M^2-1}
  -\frac{1+\delta_{ab}\delta_{ij}}{M(M^2-1)}.
 \label{eq:rectangular-haar-entry-moment}
\end{equation}
Since $W_{ij}=M\sum_{a=1}^{p}\overline{U_{ai}}U_{aj}$, summing these
entry identities gives
\begin{equation}
 \begin{aligned}
 \E\tr W
 &=M\sum_{a=1}^{p}\sum_{i=1}^{q}\E|U_{ai}|^2=pq,\\
 \E\tr(W^2)
 &=M^2\sum_{a,b=1}^{p}\sum_{i,j=1}^{q}
    \E\!\left[\overline{U_{ai}}U_{aj}
               \overline{U_{bj}}U_{bi}\right]\\
 &=\frac{M^2(p^2q+pq^2)-M(p^2q^2+pq)}{M^2-1}
  =\frac{pqM\{Ms-pq-1\}}{M^2-1}.
 \end{aligned}
 \label{eq:rectangular-haar-gram-moments}
\end{equation}
For the negative second-moment term in
Eq.~\eqref{eq:rectangular-kl-moment-bound}, we need a lower bound.
The exact expression in Eq.~\eqref{eq:rectangular-haar-gram-moments}
satisfies
\begin{equation}
 \frac{\E\tr(W^2)}{pq}
 -\left(s-\frac{s^2}{2M}\right)
 =\frac{M^2(p^2+q^2-2)+2Ms-s^2}{2M(M^2-1)}\ge0.
 \label{eq:rectangular-second-moment-lower}
\end{equation}
Here $p,q\ge1$ gives $p^2+q^2-2\ge0$, and $s<M$ gives
$2Ms-s^2>0$.

\emph{Cancellation and total variation.}
Inserting Eqs.~\eqref{eq:rectangular-normalizer-bound},
\eqref{eq:rectangular-haar-gram-moments}, and
\eqref{eq:rectangular-second-moment-lower} into
Eq.~\eqref{eq:rectangular-kl-moment-bound} now gives
\begin{equation}
 \begin{aligned}
 D_{\rm KL}(P_Z\Vert P_G)
 &\le-\frac{pqs}{2M}+\frac{pqs}{M}
       -\frac{M-s}{2M^2}\,pq\left(s-\frac{s^2}{2M}\right)\\
 &=\frac{pqs^2}{4M^2}\left(3-\frac{s}{M}\right)
 \le\frac{3pqs^2}{4M^2}.
 \end{aligned}
 \label{eq:rectangular-kl-cancellation}
\end{equation}
This proves Eq.~\eqref{eq:hide-sparse-KL}.  In the convention
$d_{\rm TV}=\sup_E|\mu(E)-\nu(E)|$, Pinsker's
inequality~\cite{Pinsker1964} gives Eq.~\eqref{eq:hide-finite-sparse}
in the form
\begin{equation}
 \begin{aligned}
 d_{\rm TV}(P_Z,P_G)
 &\le\sqrt{\frac12D_{\rm KL}(P_Z\Vert P_G)}\\
 &\le\sqrt{\frac38}\,\frac{s\sqrt{pq}}{M}
 \le\frac{s\sqrt{pq}}{M},
 \end{aligned}
 \label{eq:rectangular-pinsker}
\end{equation}
Finally, $U^{\T}$ is Haar, so transposition carries a wide $q$ by $p$
Haar block to a tall $p$ by $q$ block; it also preserves the standard
complex Gaussian law.  Apply the tall-block comparison and then data
processing under the measurable map $X\mapsto X^{\T}X$.  For the original
wide block $Y=X^{\T}$, the resulting matrix is $X^{\T}X=YY^{\T}$,
which proves the transpose Gram conclusion.
\end{proof}
\endgroup
\begingroup\section{Exact arithmetic for the score estimates}
\label{app:score-certificates}

This appendix supplies the arithmetic needed to turn the score formulas in
Appendix~\ref{app:projective-scores} into the explicit constants of
Lemma~\ref{lem:hide-scores}. We solve the inverse Wishart recurrences in
Eq.~\eqref{eq:hide-inverse-wishart-recurrence-system}, evaluate the Gaussian
numerator contractions, and verify the moment and variance bounds used in
Appendix~\ref{app:projective-scores}. We also compute the coefficients and
estimates for the cubic remainder in
Eq.~\eqref{eq:hide-cubic-trace-decomposition}.
Section~\ref{subsec:hide-scores-direct-proof} combines these computations
with the analytic justification in Appendix~\ref{app:coe-density} to obtain
the event score bounds.

Set $n=N$, $a=n+1$,
$c=K-2N-1$, and
\[
 \begin{gathered}
 D_2=(c-2)(c+1),\quad D_3=(c-4)(c+2),\\
 D_4=c(c-1)(c-6)(c+3).
 \end{gathered}
\]
The dense assumption gives $c\ge14n-1\ge13n$. Every displayed denominator
is consequently positive. The inverse Wishart moments through degree four
exist since $c>6$; the numerator Wishart matrix has $a=n+1$ real Gaussian
rows and is independent of $D=cB_0^{-1}$. These are the conventions of
Appendix~\ref{app:coe-density}.

\subsection{Solved denominator moments and Gaussian contractions}
\label{subsec:hide-exact-moment-computations}

\paragraph{Computation goal.}
We compute the second and fourth moments of $\tr Y$ and the first and
second moments of $\tr(Y^2)$, denoted by $M_{20},M_{40},M_{01},M_{02}$ in
Eq.~\eqref{eq:hide-wick-moment-functions}. These moments bound the scalar
variance in Eq.~\eqref{eq:hide-central-variance-explicit} and the centered
quadratic density in Eq.~\eqref{eq:hide-quadratic-centered-explicit}, and
give the primitive score estimates in
Eqs.~\eqref{eq:hide-fourth-ell-one-fourth-moment}--\eqref{eq:hide-fourth-ell-one-three-moment}.
The calculation has two steps: solve the
denominator recurrence~\eqref{eq:hide-inverse-wishart-recurrence-system},
then average the independent Gaussian numerator in the trace
representation~\eqref{eq:hide-beta-prime}.

Put $x=\tr D$, $y=\tr D^2$, $z=\tr D^3$, and $r=\tr D^4$.
Write $X_2=\E x^2$, $Y_1=\E y$, $X_3=\E x^3$, $Z_1=\E z$,
$X_Y=\E(xy)$, and use the analogous subscripts for degree four.
Elimination in Eq.~\eqref{eq:hide-inverse-wishart-recurrence-system}
gives the following explicit rational functions, in evaluation order:
\begin{align}
 X_2&=\frac{cn((c-1)n+2)}{D_2},\notag\\
 Y_1&=\frac{cn(c+n)}{D_2},\notag\\
 X_Y&=\frac{c((c-2)n+4)Y_1}{D_3},\notag\\
 Z_1&=\frac c4(X_Y-nY_1),\notag\\
 X_3&=nX_2+4X_Y/c,\notag\\
 R_1&=\frac{c^2}{D_4}\bigl\{(D_2+2(c-1)n)Z_1\notag\\
 &\hspace{6em}+(n+c)X_Y\bigr\},\notag\\
 X_Z&=nZ_1+6R_1/c,\notag\\
 Y_2&=(c-3)R_1-2X_Z-cZ_1,\notag\\
 X_{2Y}&=(c-1)Y_2-4R_1-cX_Y,\notag\\
 X_4&=nX_3+6X_{2Y}/c.
 \label{eq:hide-solved-moment-functions}
\end{align}
For example, substitution in the second and first recurrence solves
$(X_2,Y_1)$, the fourth and fifth solve $(X_Y,Z_1)$, and elimination
in the last four has determinant $D_4$. Substitution back into all ten
equations verifies the displayed solution and its uniqueness. The formulas
contain no unspecified moments on their right sides once evaluated in order.

\begin{samepage}
\allowdisplaybreaks[0]
Let $T_j=\tr(Y^j)$ and define
$M_{20}=\E T_1^2$, $M_{01}=\E T_2$,
$M_{40}=\E T_1^4$, $M_{02}=\E T_2^2$.
Conditional Wick contraction of the independent real numerator gives
\begin{align}
 M_{20}&=a^2X_2+2aY_1,\notag\\
 M_{01}&=a(a+1)Y_1+aX_2,\notag\\
 M_{40}&=a^4X_4+12a^3X_{2Y}+12a^2Y_2\notag\\
 &\quad+32a^2X_Z+48aR_1,\notag\\
 M_{02}&=a^2X_4+(2a^3+2a^2+8a)X_{2Y}\notag\\
 &\quad +(a^4+2a^3+5a^2+4a)Y_2
       +16a(a+1)X_Z\notag\\
 &\quad +(8a^3+20a^2+20a)R_1.
 \label{eq:hide-wick-moment-functions}
\end{align}
\end{samepage}
These coefficients can be recovered by expanding the traces in Gaussian
entries and pairing them: for $T_1^4$, the five denominator cycle types
$x^4,x^2y,y^2,xz,r$ have weights
$a^4,12a^3,12a^2,32a^2,48a$ respectively. Expanding the two length-two
traces gives the second row of weights in
Eq.~\eqref{eq:hide-wick-moment-functions}. Wick's rule
\cite{Isserlis1918} and the four-input spectral identification in
Appendix~\ref{app:external-inputs} therefore connect these rational
functions to the actual COE trace law, not merely to a formal recurrence.

\paragraph{Fully substituted moment formulas.}
For a direct evaluation without intermediate denominator moments, let
$D_*=D_2D_3(c-1)(c-6)(c+3)$. Then
\begin{align}
 M_{20}&=\frac{cn(n+1)}{D_2}
  \{cn^2+cn+2c-n^2+3n+2\},\notag\\
 M_{01}&=\frac{cn(n+1)}{D_2}
  \{2cn+2c+n^2+n+2\},\notag\\
 M_{40}&=\frac{c^3n(n+1)}{D_*}P_{40}(c,n),\notag\\
 M_{02}&=\frac{c^3n(n+1)}{D_*}P_{02}(c,n).
 \label{eq:hide-fully-substituted-moments}
\end{align}
Table~\ref{tab:fully-substituted-moments} gives every coefficient of
$P_{40}$ and $P_{02}$. Thus the variance and primitive-score certificates
can be checked directly as rational inequalities in $n,c$.

\begin{table*}[t]
\centering\small
\caption{Coefficient arrays for the fully substituted moments. In each
array the rows correspond to $c^4,c^3,c^2,c,1$, and the columns to
$n^6,n^5,n^4,n^3,n^2,n,1$.}
\label{tab:fully-substituted-moments}
\begin{tabular}{cl}
\toprule
Polynomial & Coefficient array\\\midrule
$P_{40}$ & 
\begin{tabular}{rrrrrrr}
1 & 3 & 15 & 25 & 56 & 44 & 48 \\
-7 & 3 & -33 & 137 & 112 & 508 & 240 \\
1 & -141 & -49 & -679 & 744 & 844 & 240 \\
35 & 177 & -779 & 355 & 48 & -556 & -240 \\
-6 & 126 & 1206 & 282 & -1344 & -1128 & -288 \\
\end{tabular} \\
\midrule
$P_{02}$ & 
\begin{tabular}{rrrrrrr}
0 & 0 & 4 & 12 & 48 & 80 & 48 \\
0 & 4 & -8 & 104 & 244 & 376 & 240 \\
1 & -17 & 91 & 37 & 316 & 292 & 240 \\
-5 & 57 & 45 & 139 & -592 & -364 & -240 \\
18 & 54 & 78 & -222 & -240 & -552 & -288 \\
\end{tabular} \\
\bottomrule
\end{tabular}
\end{table*}

\paragraph{How the moments are used.}
Equations~\eqref{eq:hide-solved-moment-functions} and
\eqref{eq:hide-wick-moment-functions}, or equivalently the fully substituted
form~\eqref{eq:hide-fully-substituted-moments} with
Table~\ref{tab:fully-substituted-moments}, evaluate all four required
moments. Together with $\E T_1=n(n+1)$ from
Eqs.~\eqref{eq:hide-beta-prime} and~\eqref{eq:hide-inverse-wishart-mean},
they give $\operatorname{Var}(T_1)=M_{20}-n^2(n+1)^2$,
$\operatorname{Var}(T_1^2)=M_{40}-M_{20}^2$, and
$\operatorname{Var}(T_2)=M_{02}-M_{01}^2$.
These are precisely the three variances used in
Eqs.~\eqref{eq:hide-central-variance-explicit} and
\eqref{eq:hide-quadratic-centered-explicit}.
The same moments evaluate the radial envelopes in
Eq.~\eqref{eq:hide-radial-envelopes-exact}; the raw moments $M_{40},M_{02},M_{01}$
also supply the cubic estimates in Eq.~\eqref{eq:hide-cubic-raw-moment-ledger}.
Section~\ref{subsec:hide-score-polynomial-certificates} next bounds these
explicit rational functions uniformly in the dense parameter range.

\subsection{Universal polynomial certificates}
\label{subsec:hide-score-polynomial-certificates}

\paragraph{Computation goal.}
We turn the exact moments in
Eqs.~\eqref{eq:hide-solved-moment-functions}--\eqref{eq:hide-fully-substituted-moments}
into the dimension bounds of Table~\ref{tab:score-polynomial-certificates}.
The required outputs are the three variances used by the scalar and
quadratic score estimates, the radial bounds used by the fourth score
calculation, and the raw moments used for the cubic remainder. For each
inequality, the task is to prove that its right side minus its left side is
nonnegative throughout the stated range, after clearing positive denominators.

Here is a finite algebraic certificate for every dimension bound used
below. For each row of Table~\ref{tab:score-polynomial-certificates},
subtract the stated left side from the right side, substitute the rational
functions in Eqs.~\eqref{eq:hide-solved-moment-functions} and
\eqref{eq:hide-wick-moment-functions}, and clear the positive denominator.
Set $n=n_0+e$, $c=13(n_0+e)+d$. The resulting numerator has only
nonnegative coefficients in $e,d$; its number of nonzero coefficients is
listed. This proves the inequalities for the entire domain, including its
boundary. Both exact numerator and denominator coefficient lists and the
short generating arithmetic are included with the source certificate.

\begin{table*}[t]
\centering\small
\caption{Exact coefficient-positivity certificates. Here
$V_{11}=M_{40}-M_{20}^2$ and $V_2=M_{02}-M_{01}^2$.
The expressions $E_{14}$ and $E_{13}$ are defined in
Eq.~\eqref{eq:hide-radial-envelopes-exact}.}
\label{tab:score-polynomial-certificates}
\begin{tabular}{lll}
\toprule
Inequality & $n_0$ & Nonzero numerator coefficients\\
\midrule
$144(n-1)^2M_{02}\le3151n^8$ & 2 & 100\\
$E_{14}\le9990n^2$ & 2 & 76\\
$E_{13}\le52204n^2$ & 2 & 76\\
$M_{20}-n^2(n+1)^2\le(60/11)n^2$ & 1 & 11\\
$V_{11}\le34^2n^6$ & 1 & 115\\
$V_2\le34^2n^4$ & 1 & 95\\
$M_{40}\le34n^8$ & 2 & 100\\
$M_{02}\le88n^6$ & 2 & 84\\
$M_{01}\le(533/108)n^3$ & 2 & 14\\
\bottomrule
\end{tabular}
\end{table*}

The two radial envelopes appearing in the table are
\begin{align}
 E_{14}&=1536\frac{M_{20}}{n^2}+1280\frac{M_{01}}{n^2}\notag\\
 &\quad+3072\frac{M_{40}}{n^4c^2}+2560\frac{M_{02}}{n^2c^2},\notag\\
 E_{13}&=8192\left(\frac{M_{20}+M_{01}}{n^2}
                 +\frac{M_{40}}{n^4c^2}+\frac{M_{02}}{n^2c^2}\right).
 \label{eq:hide-radial-envelopes-exact}
\end{align}
For reproducibility, the certificate uses exact integer and rational
operations throughout. Expanding a polynomial and checking all its
coefficients is a universal proof on a nonnegative quadrant; a grid of
numerical evaluations would not establish the same conclusion.

\paragraph{How the certificates are used.}
For the fourth score, the first row of
Table~\ref{tab:score-polynomial-certificates}, applied to the estimate
following Eq.~\eqref{eq:hide-first-score-fourth-exact}, gives
Eq.~\eqref{eq:hide-fourth-ell-one-fourth-moment}.
The $E_{14}$ row bounds the outer expectation of
Eq.~\eqref{eq:hide-second-score-fibre-envelope}, giving
Eq.~\eqref{eq:hide-fourth-ell-two-square-moment}.
For the mixed first--third moment,
Eqs.~\eqref{eq:hide-mixed-score-word-decomposition}--\eqref{eq:hide-mixed-scalar-envelope}
give the radial bound $(1038/(5\cdot512))E_{13}$; the $E_{13}$ row therefore
gives $(6773469/320)n^2$, as asserted in
Eq.~\eqref{eq:hide-fourth-ell-one-three-moment}.
These three bounds enter Eq.~\eqref{eq:hide-fourth-sos} and the evaluation
in Eqs.~\eqref{eq:hide-fourth-positive-constant}--\eqref{eq:hide-fourth-constant}
to yield the fourth density bound~\eqref{eq:hide-fourth-density-L1}.

For the lower order scores, the variance rows of the table, together with
the moment identities~\eqref{eq:hide-wick-moment-functions}, establish
Eq.~\eqref{eq:hide-central-variance-explicit} and bound the three centered
terms in Eq.~\eqref{eq:hide-quadratic-centered-explicit}.
The former yields the scalar constants $5$ and $44$ in
Eqs.~\eqref{eq:hide-central-first-event}--\eqref{eq:hide-central-second-event};
the latter yields the quadratic constant $573$ in
Eq.~\eqref{eq:hide-quadratic-L1}.
The last three rows, bounding $M_{40},M_{02},M_{01}$, provide
Eq.~\eqref{eq:hide-cubic-raw-moment-ledger} in
Sec.~\ref{subsec:hide-cubic-remainder-computation}.
Rows with $n_0=2$ are used only for $N\ge2$; when $N=1$, the centered
direction $Q_v$ vanishes and the corresponding orbital scores are zero.

\subsection{The cubic remainder coefficients}
\label{subsec:hide-cubic-remainder-computation}

\paragraph{Computation goal.}
We establish the decomposition~\eqref{eq:hide-cubic-trace-decomposition}
and the remainder estimate~\eqref{eq:hide-cubic-residual-L1} needed for the
cubic density bound~\eqref{eq:hide-cubic-L1}. Specifically, we compute the
positive coefficient of $T_3$ and the five coefficients of $R_{N,K}$,
then combine their powers of $n$ with the moment bounds from
Sec.~\ref{subsec:hide-score-polynomial-certificates} to obtain an $L^1$
bound proportional to $n$. This supplies the bound at the origin in
Eq.~\eqref{eq:hide-lemma-three-cubic-origin}; the fourth derivative estimate
propagates it to the interval required by Lemma~\ref{lem:hide-scores}.

Put $D=n^3(n+1)(n+2)c^2$. The coefficient isolated in
Eq.~\eqref{eq:hide-cubic-trace-decomposition} is
\begin{equation}
 \gamma_{N,K}=\frac{16\{n^2(c^2-3c+4)+12n(c-1)+16\}}{D}>0.
 \label{eq:hide-cubic-gamma-explicit}
\end{equation}
Write $R_{N,K}=D^{-1}(p_1T_1^3+p_2T_1^2+p_3T_1T_2+p_4T_2+p_5T_1)$,
where the complete coefficient list is
\begin{align}
 p_1={}&48nc+16n^2+32c^2,\notag\\
 p_2={}&-192nc+48nc^2+48n^2c+24n^2c^2\notag\\
 &+24n^3c-192c^2,\notag\\
 p_3={}&48nc-48nc^2-192n-48n^2c\notag\\
 &+48n^2-192c,\notag\\
 p_4={}&-288nc+192nc^2-48n^2c^2+24n^3c\notag\\
 &-24n^3c^2+384c,\notag\\
 p_5={}&-96nc^2-64n^2c^2+24n^3c^2\notag\\
 &+8n^4c^2+128c^2.
 \label{eq:hide-cubic-remainder-explicit}
\end{align}
These formulas follow by inserting the first three logarithmic scores in
$L_3=\ell_1^3+3\ell_1\ell_2+\ell_3$, contracting the projective
factors, and collecting $T_3,T_1^3,T_1^2,T_1T_2,T_2,T_1$.
They also directly verify the asserted decomposition by polynomial identity.
For $n\ge2$, $c\ge13n$, clearing $D$ and substituting $n=2+e$,
$c=13(2+e)+d$ proves both signs of each of
\begin{equation}
 \left(\frac{|p_j|}{D}\right)_{j=1}^5
 \le\left(\frac{36}{n^5},\frac{26}{n^3},\frac{52}{n^4},
                \frac{24}{n^2},\frac8n\right).
 \label{eq:hide-cubic-coefficient-ledger}
\end{equation}
All ten cleared polynomials have nonnegative coefficients.

Table~\ref{tab:score-polynomial-certificates} and the exact mean
$\E T_1=n(n+1)$ give, with $a_0=1147/475$, $b_0=7439/793$,
$d_0=533/108$, $e_0=3/2$,
\begin{align}
 \|T_1\|_4&\le a_0n^2,&\|T_2\|_2&\le b_0n^3,\notag\\
 \E T_2&\le d_0n^3,&\E T_1&\le e_0n^2.
 \label{eq:hide-cubic-raw-moment-ledger}
\end{align}
Here $a_0^4\ge34$ and $b_0^2\ge88$ are exact rational comparisons.
H\"older gives $\E T_1^3\le a_0^3n^6$,
$\E T_1^2\le a_0^2n^4$, and
$\E(T_1T_2)\le a_0b_0n^5$. Combining term by term with
Eq.~\eqref{eq:hide-cubic-coefficient-ledger} yields precisely
$\|R_{N,K}\|_1\le(36a_0^3+26a_0^2+52a_0b_0+24d_0+8e_0)n$.

\paragraph{Conclusion and use in Appendix D.}
Equations~\eqref{eq:hide-cubic-gamma-explicit} and
\eqref{eq:hide-cubic-remainder-explicit} give the full
decomposition~\eqref{eq:hide-cubic-trace-decomposition}.
The bound just obtained from
Eqs.~\eqref{eq:hide-cubic-coefficient-ledger}--\eqref{eq:hide-cubic-raw-moment-ledger}
is Eq.~\eqref{eq:hide-cubic-residual-L1}, with exactly the constant $R_3$
defined and evaluated in
Eqs.~\eqref{eq:hide-cubic-residual-definition}--\eqref{eq:hide-cubic-residual-value}.
Positivity in Eq.~\eqref{eq:hide-cubic-gamma-explicit} and zero total mass in
Eq.~\eqref{eq:hide-cubic-zero-mass} imply
$\|\gamma_{N,K}T_3+R_{N,K}\|_1\le2\|R_{N,K}\|_1$.
The ceiling in
Eqs.~\eqref{eq:hide-cubic-ceiling}--\eqref{eq:hide-cubic-constant}
therefore gives Eq.~\eqref{eq:hide-cubic-L1}, including $N=1$ because
$Q_v=0$ in that case.

Integrating this density over a Borel event gives
Eq.~\eqref{eq:hide-lemma-three-cubic-origin}.
Together with the uniform fourth derivative bound in
Eq.~\eqref{eq:hide-lemma-three-fourth-uniform}, it yields
Eq.~\eqref{eq:hide-lemma-three-cubic-propagation}, namely the constant
$306840N$ on $|s|\le1/N$ in Lemma~\ref{lem:hide-scores}.
That bound and the quadratic estimate
\eqref{eq:hide-lemma-three-orbital-low-closure} are the two score inputs to
the orbital Taylor estimate~\eqref{eq:hide-proposition-four-orbital-use}
in the proof of Proposition~\ref{prop:hide-one-step}.
\endgroup
\begingroup\section{Measurable observables and preselected patterns}
\label{sec:applications}

Write $\mu_{M,N,K}$ and $\nu_{N,K}$ for the two normalized matrix laws in
Eq.~\eqref{eq:uniform-product-hiding}.  The applications in this section use
only the conclusion $d_{\rm TV}(\mu_{M,N,K},\nu_{N,K})\le\delta_{M,N}$.

\subsection{Measurable observables and events}

\begin{corollary}[Deterministic data processing]
\label{cor:observable-hiding}
For every measurable map $f:\C^{N\times N}\to\mathsf Y$,
\begin{equation}
 d_{\rm TV}(f_\#\mu_{M,N,K},f_\#\nu_{N,K})
 \le\delta_{M,N}.
 \label{eq:observable-hiding}
\end{equation}
Consequently, for every measurable event $E\subseteq\mathsf Y$,
\begin{equation}
 \Pp_{\mu}\{f\in E\}\le \Pp_{\nu}\{f\in E\}+\delta_{M,N},
 \label{eq:event-transfer}
\end{equation}
and the same inequality holds with $\mu$ and $\nu$ exchanged.
\end{corollary}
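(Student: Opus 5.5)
The plan is a direct data-processing argument from Theorem~\ref{thm:uniform-product-hiding}, which gives $d_{\rm TV}(\mu_{M,N,K},\nu_{N,K})\le\delta_{M,N}$ for the two normalized laws. The whole corollary follows from the definition $d_{\rm TV}(\mu,\nu)=\sup_E|\mu(E)-\nu(E)|$, taken over measurable sets, together with the preimage property of pushforwards. No further random-matrix input is needed.

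\emph{Step one: the pushforward bound.} Fix a measurable $f:\C^{N\times N}\to\mathsf Y$ and a measurable $E\subseteq\mathsf Y$. Since $f$ is measurable, $f^{-1}(E)$ is a Borel subset of $\C^{N\times N}$. By the definition of pushforward, $(f_\#\mu)(E)=\mu(f^{-1}(E))$, and likewise for $\nu$. Hence
\[
 |(f_\#\mu_{M,N,K})(E)-(f_\#\nu_{N,K})(E)|
 =|\mu_{M,N,K}(f^{-1}(E))-\nu_{N,K}(f^{-1}(E))|
 \le d_{\rm TV}(\mu_{M,N,K},\nu_{N,K}).
\]
Taking the supremum over $E$ and applying Theorem~\ref{thm:uniform-product-hiding} gives Eq.~\eqref{eq:observable-hiding}. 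The supremum definition of $d_{\rm TV}$ makes sense on any measurable space $\mathsf Y$, so no topological or standard Borel assumption on $\mathsf Y$ is needed.

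\emph{Step two: the event inequalities.} Eq.~\eqref{eq:event-transfer} is the same displayed estimate with the absolute value removed. Write $\Pp_\mu\{f\in E\}=(f_\#\mu)(E)$. The display then gives $(f_\#\mu)(E)\le(f_\#\nu)(E)+\delta_{M,N}$, and also the same inequality with $\mu$ and $\nu$ exchanged.

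There is no real obstacle. The only point needing care is the convention: the bound uses the $\sup_E$ form of total variation, not the variation norm $\norm{\cdot}_{\rm var}=2\,d_{\rm TV}$, so no factor of two appears. It also uses that $f$ is measurable, so that preimages of events are Borel sets on which $\mu$ and $\nu$ are defined.
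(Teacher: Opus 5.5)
Your proposal is correct and matches the paper's proof: pull each event back through $f$ (measurable, so $f^{-1}(E)$ is Borel), apply the eventwise supremum definition of total variation to the bound $d_{\rm TV}(\mu_{M,N,K},\nu_{N,K})\le\delta_{M,N}$ from Theorem~\ref{thm:uniform-product-hiding}, and read off both the pushforward bound and the one-sided event inequalities. Your remark that the $\sup_E$ convention (not the variation norm) is the one in force, so no factor of two appears, is also consistent with the paper's conventions.
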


\begin{proof}
The inverse image $f^{-1}(E)$ is measurable, so the eventwise definition of
total variation gives both statements.
\end{proof}

Randomized postprocessing obeys the same contraction.  If $Q(A,\cdot)$ is a
Markov kernel representing noise, coarse graining, or a randomized classical
routine, then
\begin{equation}
 d_{\rm TV}(\mu_{M,N,K}Q,\nu_{N,K}Q)\le\delta_{M,N}.
 \label{eq:randomized-postprocessing}
\end{equation}
This is Markov-kernel contraction \cite[Chapter~1]{Kallenberg2021}; it follows by integrating each output event against the bounded function
$A\mapsto Q(A,E)$.

\subsection{Hafnian observables and physical probabilities without anticoncentration}

For even $N=2n$, choose
$f(A)=\haf(A)$.  The hafnian is a polynomial function of the matrix entries,
hence is continuous and measurable, and
Corollary~\ref{cor:observable-hiding} gives
\begin{equation}
 \begin{aligned}
 d_{\rm TV}\!\Bigl(&
 \Law\!\left[\haf\!\left(\tfrac{M}{\sqrt K}U_{N,K}U_{N,K}^{\T}\right)\right],\\[-2pt]
 &\Law\!\left[\haf\!\left(\tfrac1{\sqrt K}G_{N,K}G_{N,K}^{\T}\right)\right]
 \Bigr)\le\delta_{M,2n}.
 \end{aligned}
 \label{eq:hafnian-amplitude-hiding}
\end{equation}
The same holds for $|\haf(A)|^2$, {threshold tests insensitive
to phase}, or any {everywhere defined measurable normalization,
for example $A\mapsto |\haf(A)|^2/(1+\|A\|_{\mathrm F}^{2n})$}.
 {For a fixed collision free output set $S$ with $|S|=N=2n$,
write
\[
 A_S^U:=\frac{M}{\sqrt K}
 U_{S,[K]}U_{S,[K]}^{\T}.
\]
Homogeneity of the hafnian and the physical formula in {Eq.~\eqref{eq:gbs-probability}} give
\[
 \mathbb{P}_U(\mathbf n)
 =\frac{\tanh^{2n}\xi}{\cosh^K\xi}
   \left(\frac{K}{M^2}\right)^n
   |\haf(A_S^U)|^2 .
\]
For the fixed label $S$, this is a circuit output probability conditional on
$U$; when $U$ is Haar distributed it is also a random observable of the
interferometer ensemble.}
Equation~\eqref{eq:hafnian-amplitude-hiding}
transfers a law; it does not assert that the Gaussian hafnian is unlikely to
be small.  That logically separate input is supplied by the companion
anticoncentration Article \cite{ZhaoAnticoncentration2026}.

\begingroup

\subsection{Several preselected output patterns}

Write $[L]:=\{1,\ldots,L\}$.  Let $q\ge1$.  For each $j\in[q]$, fix an ordered row
embedding
$\iota_j:[N]\hookrightarrow[L]$ and put $S_j:=\operatorname{im}\iota_j$.
Thus $S_j$ is a {fixed collision free output pattern}, while
$\iota_j$ records the ordering used to identify its principal block with
$\C^{N\times N}$.  Any fixed physical family in $[M]$ is first restricted to
the union of its rows, deterministically relabeled inside some $[L]$, and
equipped with such embeddings; Haar and Gaussian row invariance make this an
equality in law.  Write
\[
 \begin{aligned}
  \delta_{M,L}&:=\min\!\left\{1,C_\ast\frac{L^2}{M}\right\}.
 \end{aligned}
\]
{All probability laws in this subsection are over the common
random matrix $U$ or $G$; the embeddings $\iota_1,\ldots,\iota_q$ are fixed
query labels, not outputs sampled from the GBS distribution.}
Using the upper $L$ by $K$ block of one common Haar unitary $U$ and one common
$L$ by $K$ Gaussian matrix $G$, define
\[
 \begin{aligned}
  A_j^U&:=\frac{M}{\sqrt K}U_{\iota_j,[K]}U_{\iota_j,[K]}^{\T},\\
  A_j^G&:=\frac{1}{\sqrt K}G_{\iota_j,[K]}G_{\iota_j,[K]}^{\T},
 \end{aligned}
\]
where, for example,
$U_{\iota_j,[K]}=(U_{\iota_j(a),b})_{a\in[N],\,b\in[K]}$.
Apply Theorem~\ref{thm:uniform-product-hiding} once to the $L\times K$ union block,
and then take the tuple of principal submatrices.  Whenever $1\le L\le M$ and $1\le K\le M$,
\begin{equation}
 d_{\rm TV}\!\left(
 \Law(A_1^U,\ldots,A_q^U),
 \Law(A_1^G,\ldots,A_q^G)
 \right)
 \le \delta_{M,L}.
 \label{eq:joint-pattern-hiding}
\end{equation}
The Gaussian tuple in Eq.~\eqref{eq:joint-pattern-hiding} is coupled through
the common rows indexed by $[L]$; overlapping components are not asserted to be
independent.

The bound is uniform over all fixed families satisfying the displayed
constraints.  Consequently it also survives averaging over a random family
chosen independently of the matrix source ($U$ or $G$), including uniformly
chosen collision free row sets conditioned to be pairwise disjoint.  It does
not apply after selecting patterns from the $U$ {dependent} GBS output
distribution.  In quantum terms,
$|S_j|=N$ is controlled by restricting or postselecting to the collision free
$N$ {photon} sector; the physical experiment need not have deterministic total
photon number before that conditioning.

\subsection{Order statistics and {finite family} exceedance counts}

Let $h:\C^{N\times N}\to\R$ be measurable, set
$X_j^U=h(A_j^U)$, and for $A^G\sim\nu_{N,K}$ define the
{one pattern} Gaussian distribution function and
{upper tail} probability by
\[
 F_h(t):=\Pp\{h(A^G)\le t\},\qquad \pi_h(t):=1-F_h(t).
\]

\begin{proposition}[Order and {threshold count} transfer]
\label{prop:score-panel-transfer}
Suppose that the ranges of $\iota_1,\ldots,\iota_q$ are pairwise disjoint
and $1\le L\le M$ and $1\le K\le M$.  Then, for every $t\in\R$,
\begin{equation}
 \left|
 \Pp\!\left\{\max_{1\le j\le q}X_j^U\le t\right\}
 -F_h(t)^q
 \right|
 \le \delta_{M,L}.
 \label{eq:max-score-transfer}
\end{equation}
Moreover, if
$C_t^U:=\sum_{j=1}^q\1\{X_j^U>t\}$, then
\begin{equation}
 d_{\rm TV}\!\left(
 \Law(C_t^U),\mathrm{Binomial}(q,\pi_h(t))
 \right)
 \le \delta_{M,L}.
 \label{eq:heavy-count-transfer}
\end{equation}
The same error controls every measurable order statistic of the score vector.
\end{proposition}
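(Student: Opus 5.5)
The plan is to reduce both assertions to the joint hiding bound in Eq.~\eqref{eq:joint-pattern-hiding} and then identify the Gaussian side exactly.

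Fix $t\in\R$ and define two measurable maps on the tuple space $(\C^{N\times N})^q$: $\Phi_t(A_1,\ldots,A_q)=\1\{\max_j h(A_j)\le t\}$ and $\Psi_t(A_1,\ldots,A_q)=\sum_{j=1}^q\1\{h(A_j)>t\}$. Both are compositions of the measurable $h$ with coordinate projections, finite maxima, indicators and sums, so they are measurable. On the Haar side they evaluate to $\1\{\max_jX_j^U\le t\}$ and $C_t^U$. Applying Eq.~\eqref{eq:joint-pattern-hiding} together with the data processing argument of Corollary~\ref{cor:observable-hiding} (now for tuple-valued laws) shows the following. First, $|\Pp\{\Phi_t(A^U)=1\}-\Pp\{\Phi_t(A^G)=1\}|\le\delta_{M,L}$. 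Second, $d_{\rm TV}(\Law\Psi_t(A^U),\Law\Psi_t(A^G))\le\delta_{M,L}$. The same reasoning applies to any measurable function of the score vector $(h(A_j))_j$, which covers the final sentence about order statistics. It only remains to evaluate the Gaussian laws.

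The key step is to show that, under pairwise disjoint ranges, the Gaussian tuple $(A_1^G,\ldots,A_q^G)$ consists of independent copies of $\nu_{N,K}$. By definition $A_j^G=K^{-1/2}G_{\iota_j,[K]}G_{\iota_j,[K]}^{\T}$ depends only on the rows $G_{\iota_j(a),\cdot}$, $a\in[N]$, of the common $L\times K$ Gaussian matrix. Because the row sets $S_j=\operatorname{im}\iota_j$ are disjoint and $G$ has i.i.d.\ entries, the blocks $G_{\iota_j,[K]}$ are independent. Each block is an $N\times K$ standard complex Gaussian matrix, and the ordering by $\iota_j$ is only a row permutation, which preserves the i.i.d.\ law. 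Hence $A_j^G\sim\nu_{N,K}$, and the $A_j^G$ are independent.

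Consequently the scores $h(A_j^G)$ are i.i.d.\ with distribution function $F_h$. This gives $\Pp\{\max_j h(A_j^G)\le t\}=F_h(t)^q$, and $\Psi_t(A^G)$ is a sum of $q$ i.i.d.\ Bernoulli$(\pi_h(t))$ variables, hence Binomial$(q,\pi_h(t))$. Substituting these identifications into the two transferred inequalities yields Eqs.~\eqref{eq:max-score-transfer} and \eqref{eq:heavy-count-transfer}.

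The only delicate point is the independence step. Disjointness is essential there: overlapping patterns share rows, so the Gaussian blocks would be dependent, and the product and binomial forms would fail. Everything else is data processing under measurable maps.
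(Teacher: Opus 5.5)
Your proposal is correct and follows the paper's own proof. Both apply data processing to the joint hiding bound in Eq.~\eqref{eq:joint-pattern-hiding}, then note that disjoint row blocks of the common Gaussian matrix give independent copies of $\nu_{N,K}$, which yields the product and binomial laws; you also rightly use disjointness only in that Gaussian identification step, which is slightly cleaner than the paper's phrasing.
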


\begin{proof}
For disjoint ranges, the maximum, threshold count, and each order statistic
are measurable maps of $(A_1,\ldots,A_q)$.  Apply deterministic data
processing directly to Eq.~\eqref{eq:joint-pattern-hiding}.  Under the Gaussian
reference the row blocks are disjoint, so the Gaussian scores are independent
and identically distributed.  This gives the product distribution $F_h(t)^q$
and the stated binomial law.
\end{proof}

\endgroup
\endgroup
\bibliographystyle{quantum}
\begingroup
\small
\bibliography{references}
\endgroup
\end{document}